\newif\iffull
\newif\ifnotes
\newif\ifanonymous
\newif\ifcameraready

\fulltrue
\notesfalse

\ifcameraready
  \PassOptionsToClass{nostandalone}{llncsstandalone}
\else
  \PassOptionsToClass{adjustmargins}{llncsstandalone}
\fi
\documentclass[a4paper,orivec,oribibl]{llncsstandalone}

\usepackage[save]{silence}
\usepackage[T1]{fontenc}
\usepackage{lmodern}
\usepackage[english]{babel}
\usepackage{csquotes}
\usepackage[babel,final]{microtype}
\usepackage[dvipsnames,svgnames,x11names]{xcolor}
\usepackage{amsmath,amssymb,mathtools}
\usepackage{booktabs}
\usepackage{graphicx}
\usepackage{enumitem}
\usepackage{xspace}
\usepackage[backend=biber,style=alphabetic,maxnames=3,minnames=3,maxbibnames=20,minbibnames=20,date=year]{biblatex}
\PassOptionsToPackage{pdfusetitle}{hyperref}
\usepackage{hyperref}
\hypersetup{hypertexnames=false,colorlinks=true,allcolors=blue,bookmarksopen=true}
\usepackage{bookmark}
\usepackage[capitalize,nameinlink]{cleveref}
\crefname{equation}{}{}
\Crefname{equation}{Equation}{Equations}
\crefname{definition}{Definition}{Definitions}
\Crefname{definition}{Definition}{Definitions}
\crefname{proposition}{Proposition}{Propositions}
\Crefname{proposition}{Proposition}{Propositions}
\crefname{lemma}{Lemma}{Lemmas}
\Crefname{lemma}{Lemma}{Lemmas}
\crefname{theorem}{Theorem}{Theorems}
\Crefname{theorem}{Theorem}{Theorems}
\crefname{corollary}{Corollary}{Corollaries}
\Crefname{corollary}{Corollary}{Corollaries}

\newcommand{\CKKS}{\ensuremath{\mathsf{CKKS}}\xspace}
\newcommand{\Rot}{\mathsf{Rot}}
\newcommand{\rev}{\operatorname{rev}}
\newcommand{\Z}{\mathbb{Z}}
\newcommand{\nuTwo}{\nu_2}
\newcommand{\defeq}{\coloneqq}

\title{\texorpdfstring{%
Rotation-Optimal Noncommutative Prefix Scans\\
in Bit-Reversed Homomorphic Layouts%
}{Rotation-Optimal Noncommutative Prefix Scans in Bit-Reversed Homomorphic Layouts}}

\author{\texorpdfstring{%
Anis Bkakria\inst{1} \and
Madicke-Diadji MBODJ\inst{1,2} \and
Mawloud Omar\inst{2} \and
Reda Yaich\inst{1}%
}{Anis Bkakria, Madicke-Diadji MBODJ, Mawloud Omar, and Reda Yaich}}
\institute{IRT SystemX, France \and UBS, France}

\begin{document}
\maketitle

\begin{abstract}
Packed homomorphic encryption evaluates slotwise operations in parallel, but nonlocal communication is realized by cyclic rotations whose cost depends on the physical slot layout. We study ordered prefix computation on $n=2^m$ elements of an associative, possibly noncommutative monoid stored in bit-reversed order. A direct transported-predecessor scan uses $m\cdot(m+1)/2$ rotations because one logical shift decomposes into several cyclic displacement classes. We introduce a replicated-aggregate invariant in which every slot of an aligned logical block stores the same complete block aggregate. Semantic replication makes the copies interchangeable: at each level, one global cyclic rotation supplies every slot with a valid aggregate of its sibling block, even though it need not reach the exact logical partner. The resulting inclusive or exclusive scan uses $m$ rotations, monoid depth $m$, two live state vectors, and at most $2\cdot m-1$ packed monoid compositions.

In a model where all non-routing operations are slotwise and every cyclic rotation invocation is counted, both bounds are exact: $D^\star(m)=R^\star(m)=m$. Equality is rigid---the $m$ rotation offsets contain exactly one representative of every $2$-adic valuation $0,\ldots,m-1$. With at most $K$ directly keyed offsets, we prove a product lower bound on online rotation calls and an exact frontier $K\cdot(2^{m/K}-1)$ whenever $K$ divides $m$. We instantiate the exclusive scan for radix carry and borrow in bit-reversed \CKKS slots, avoiding both layout restoration and a final logical-predecessor shift. In our Lattigo implementation at $m=7$, the replicated scan reduces the direct bit-reversed baseline from $28$ to $7$ rotations, lowers evaluation-key storage by $70.0\%$, lowers peak heap usage by $63.9\%$, and improves isolated scan latency by $19.9\%$. In a depth-$5$ downstream pipeline, retaining six additional modulus levels avoids one bootstrap and yields a mean paired speedup of $4.31\times$ with a $95\%$ confidence interval of $[3.69,4.92]$.

\end{abstract}

\keywords{homomorphic encryption \and parallel prefix \and cyclic rotations \and bit reversal \and CKKS \and carry propagation}

\section{Introduction}\label{sec:introduction}

\paragraph{Packed computation and layout-sensitive communication.}
Modern lattice-based homomorphic-encryption schemes expose a SIMD abstraction: a ciphertext encrypts a vector of slots, and additions and multiplications act slotwise on the entire vector. In \CKKS, this abstraction supports approximate arithmetic over packed complex values and has become a standard foundation for encrypted numerical computation~\cite{CheonKKS17}. Slotwise parallelism, however, does not make communication free. Moving encrypted values between slots requires Galois automorphisms, commonly exposed as cyclic rotations followed by key switching. The number of logical dependencies in an algorithm and the number of encrypted rotations needed to realize them are therefore different resources.

The distinction becomes particularly sharp when data is kept in a public structured layout. Fast transforms, packing conversions, and transposed representations may leave logical indices in bit-reversed or related orders. A classical prefix network is normally analyzed by its gate count, depth, and fanout~\cite{LadnerFischer80,Harris03}. Those measures do not determine the encrypted routing cost: one rotation realizes the same cyclic displacement at all slots, so many logical edges can share a rotation, while a single logical predecessor relation may split into several displacement classes after conjugation by the layout. The relevant question is consequently not only how many prefix gates are needed, but how many \emph{global cyclic translations} are needed when the layout must be preserved.

\paragraph{Motivating application: exact carry in radix \CKKS.}
Approximate arithmetic does not directly provide a canonical representation of large integers. Radix-based approaches store an integer as encrypted digits and periodically propagate carry or borrow to return the digit vector to a unique range. Recent work of Cha, Park, and Lee gives a logarithmic-round exact-carry mechanism for radix \CKKS and uses it to support large-integer comparison and modular arithmetic~\cite{ChaPL26}. Carry propagation is an ordered prefix computation: every digit defines a transition on the incoming carry, and the carry entering position $i$ is obtained by composing the transitions of all less significant digits.

When the digits are already stored in bit-reversed order, two natural realizations incur additional routing. One may restore natural order, run a standard scan, and possibly restore the original layout. Alternatively, one may transport the usual recursive-doubling predecessor scan through bit reversal. For $n=2^m$ digits, the latter has $m$ composition levels, but its logical shift by $2^t$ decomposes into $m-t$ cyclic displacement classes. Its rotation count is therefore
\begin{equation}
  R_{\mathrm{direct}}(m)
  = \sum_{t=0}^{m-1}(m-t)
  = \frac{m\cdot(m+1)}{2}.
  \label{eq:direct-triangular}
\end{equation}
This count is exact for that transported-predecessor topology. It is not, as we show, an inherent cost of prefix computation in a bit-reversed layout.

\paragraph{The key idea: route semantic copies, not exact partners.}
The obstacle in the direct scan is overly specific routing. At a level that combines two adjacent logical blocks, each destination requests the value held by one precise logical predecessor. Under bit reversal, these requests occupy several cyclic diagonals. Our construction changes the invariant rather than searching for a better sequence of predecessor shifts.

For every aligned logical block, we maintain its complete monoid aggregate \emph{replicated at every slot belonging to the block}. We also maintain, in a second state vector, the ordered prefix local to that block. At level $d$, the two children of every length-$2^{d+1}$ parent occupy the two physical cosets separated by
\begin{equation}
  s_d = 2^{m-d-1}.
  \label{eq:stage-rotation}
\end{equation}
A rotation by $s_d$ does not necessarily map a slot to its exact XOR partner. It does something sufficient and cheaper: it maps the slot to \emph{some} position in the sibling child. Because the child aggregate is replicated, every such position contains the same semantic value. One rotation consequently exchanges valid sibling aggregates for all blocks and both child orientations at once. Public masks select the operand order, which preserves correctness for noncommutative monoids.

This semantic-copy observation reduces the routing cost from the triangular count in \cref{eq:direct-triangular} to one rotation per level. The price is explicit: the construction keeps two live monoid states and performs one aggregate update and one prefix update at each nonfinal level. The result is thus a multi-resource trade-off rather than a claim that rotations are the only meaningful cost.

\paragraph{Main theorem.}
Let $n=2^m$, and place logical input $x_i$ at physical slot $\rev_m(i)$. We consider acyclic, data-independent packed circuits in which masks, copying, and all other local operations are slotwise, while each invocation of a cyclic rotation is counted. Helper ciphertexts and recomputation are unrestricted, but an uncounted dense linear transform, arbitrary slot permutation, or bootstrapping transform is not available as hidden routing.

In this model, the replicated scan computes every inclusive or exclusive ordered prefix using
\begin{equation}
  D=m,
  \qquad
  R=m,
  \qquad
  T\leq 2\cdot m-1,
  \label{eq:main-cost}
\end{equation}
where $D$ is binary monoid depth, $R$ is the number of cyclic-rotation invocations, and $T$ is the number of packed monoid compositions. Both $D$ and $R$ are globally optimal:
\begin{equation}
  \boxed{D^\star(m)=R^\star(m)=m.}
  \label{eq:main-optimum}
\end{equation}
The depth lower bound follows from binary fan-in. For rotations, fix one output depending on all $2^m$ inputs. Along a dependency path, every one of the $R$ rotation calls is either traversed or not, so the source displacement is a subset sum of the invoked offsets. At most $2^R$ displacements can reach the output; hence $2^R\geq 2^m$ and $R\geq m$. This argument permits arbitrary helper width, branching, copying, masks, local nonlinear operations, and recomputation.

\paragraph{Structure at equality.}
Rotation optimality is rigid. Suppose exactly $m$ rotations with offsets $\delta_0,\ldots,\delta_{m-1}\in\Z_{2^m}$ suffice for one output to depend on every slot. The $2^m$ subset sums of these offsets must then be distinct. A roots-of-unity argument applied to
\begin{equation}
  \prod_{j=0}^{m-1}(1+X^{\delta_j})
  = 1+X+\cdots+X^{2^m-1}
  \pmod{X^{2^m}-1}
\end{equation}
shows that, after reordering,
\begin{equation}
  \nuTwo(\delta_j)=j
  \qquad\text{for }0\leq j<m.
  \label{eq:rigidity-intro}
\end{equation}
Thus every optimal offset family forms a complete $2$-adic basis: one offset appears at each valuation scale. This characterizes the rotation profile of every equality case, but it does not claim that the complete circuit topology is unique.

\paragraph{Rotation keys versus online calls.}
A rotation invocation also requires evaluation-key material for its offset, unless the rotation is synthesized from other directly supported offsets. This creates a time--memory trade-off studied in generic key-management work~\cite{LeeLKNo23}. Our result gives a program-specific lower bound. If at most $K$ directly keyed offsets are invoked $c_1,\ldots,c_K$ times, then one dependency path can realize at most
\begin{equation}
  \prod_{j=1}^{K}(c_j+1)
\end{equation}
coefficient vectors. Reaching all $2^m$ source displacements therefore requires
\begin{equation}
  \prod_{j=1}^{K}(c_j+1)\geq 2^m.
  \label{eq:key-product-intro}
\end{equation}
When $K$ divides $m$, this lower bound is attained by grouping consecutive $2$-adic scales and directly keying the smallest offset of each group. The exact frontier is
\begin{equation}
  R^\star(m,K)=K\cdot\bigl(2^{m/K}-1\bigr)
  \qquad(K\mid m).
  \label{eq:key-frontier-intro}
\end{equation}
For general $K$, we give an exact integer counting lower bound and a balanced dyadic upper bound; they coincide for all divisible cases and for additional nondivisible parameter pairs. We do not claim a closed formula when the two bounds differ.

\paragraph{Exclusive carry and borrow without a final shift.}
For radix carry, an inclusive prefix at digit $i$ naturally returns the outgoing carry $c_{i+1}$. The incoming carry $c_i$ would then have to be transported from logical position $i-1$, which is itself layout sensitive. We instead compute exclusive prefixes. Let $S_i$ be the local carry-transition state and let $E_i$ be the composition of $S_0,\ldots,S_{i-1}$. Then
\begin{equation}
  c_i = F_{E_i}(0),
  \qquad
  c_{i+1}=F_{S_i}(c_i),
  \qquad
  d_i=z_i+c_i-B\cdot c_{i+1}.
  \label{eq:exclusive-carry-intro}
\end{equation}
The exclusive scan outputs the prefix transition $E_i$; the carry-in, carry-out, and corrected digit are then evaluated slotwise. No final logical-predecessor shift or other cross-slot communication is required. The same exclusive-prefix mechanism gives borrow propagation and thereby supports comparison and conditional subtraction. With the fully occupied digit-major interleaving
\begin{equation}
  \Phi(r,i)=r+g\cdot\rev_m(i),
\end{equation}
all $g$ packed integers execute level $d$ with one physical rotation by $g\cdot 2^{m-d-1}$, so batching does not multiply the number of scan rotations.

\paragraph{Contributions.}
The paper makes five contributions.
\begin{enumerate}[leftmargin=*,label=\textbf{\arabic*.},itemsep=0.35em,topsep=0.35em]
  \item \textbf{A packed-rotation model and exact lower bounds.}
  We formalize ordered, possibly noncommutative prefix computation with counted cyclic rotations and prove the global bounds $D,R\geq m$.

  \item \textbf{A rotation-optimal replicated scan.}
  We give an inclusive and exclusive bit-reversed scan attaining $D=R=m$ with two live states and at most $2\cdot m-1$ packed monoid compositions. The central sibling-copy lemma explains why one directed cyclic rotation realizes both sides of a hypercube exchange.

  \item \textbf{$2$-adic rigidity and a rotation-key frontier.}
  We characterize the valuation profile of every $m$-rotation equality case, prove the direct-key product lower bound in \cref{eq:key-product-intro}, and obtain the exact frontier in \cref{eq:key-frontier-intro} when $K\mid m$.

  \item \textbf{Layout-native radix carry and borrow.}
  We instantiate the exclusive scan with ordered carry transitions, avoid both natural-order restoration and the final predecessor shift, and prove the corresponding segmented digit-major packing rule.

  \item \textbf{Implementation and end-to-end evaluation.}
  At $m=7$, the replicated implementation reduces direct bit-reversed routing from $28$ to $7$ rotations, lowers evaluation-key storage by $70.0\%$, lowers peak heap usage by $63.9\%$, and improves isolated scan latency by $19.9\%$. In a depth-$5$ downstream pipeline, preserving six additional modulus levels avoids one bootstrap and yields a mean paired speedup of $4.31\times$ with a $95\%$ confidence interval of $[3.69,4.92]$.
\end{enumerate}

\paragraph{Comparison with the closest lines of work.}
Classical prefix networks establish logarithmic-depth constructions and size--depth trade-offs for associative operators~\cite{LadnerFischer80,Harris03}. Our scan uses the same broad parallel-prefix principle, but studies a different cost model: a charged operation is a global cyclic translation shared across slots, not an individual network edge. The new ingredient is the semantic-replication invariant that makes an inexact physical partner sufficient under bit reversal, together with a matching lower bound in the packed-rotation model.

Cha, Park, and Lee already establish that radix-\CKKS carry can be evaluated in logarithmic rounds and provide an optimized symbolic carry algebra~\cite{ChaPL26}. We do not claim the first logarithmic carry circuit. Our contribution is orthogonal: we preserve a bit-reversed layout, attain the exact minimum number of scan rotations in that layout, characterize all equality profiles, and formulate carry exclusively so that no final logical shift is needed. Generic rotation-key work seeks to reduce transmitted or stored evaluation keys across broad workloads~\cite{LeeLKNo23}; our key frontier instead lower-bounds the online calls required by this specific dependency pattern.

\paragraph{Scope and limitations.}
The exact rotation theorem applies to one complete cyclic domain of $2^m$ active positions and to fully occupied equal-length interleavings for which each scan rotation preserves the object residue. Contiguous object-major segments and partially filled ciphertexts may require padding or additional masked routing. The lower bound counts online rotation invocations; it does not equate them with wall-clock latency, distinct Galois keys, hoisted key-switch decompositions, or total memory traffic. It also excludes uncounted dense transforms and treats the prefix/correction layer separately from the upstream encrypted digit-state classifier. The implementation therefore evaluates the layout-preserving scan and correction layer once encrypted carry-state symbols are available.

\paragraph{Organization.}
The remainder of the paper develops the packed-rotation model, proves the depth and rotation lower bounds and equality rigidity, presents and verifies the replicated exclusive scan, derives the rotation-key frontier, and then specializes the construction to radix-\CKKS carry, borrow, and segmented packing. The implementation and evaluation section then accounts for ciphertext products, levels, key material, precision, memory, and end-to-end performance against the direct and layout-restoring baselines, before the paper closes with related work and conclusions.

\section{Technical Overview}\label{sec:overview}

This section gives the construction and proof strategy before introducing the
full circuit model.  The central point is that a packed rotation need not
reproduce every logical edge of a classical prefix network.  It is sufficient
to route a semantically interchangeable copy of the information required at
each destination.

\subsection{Why transporting the predecessor scan is expensive}
\label{sec:overview-direct}

Consider eight logical inputs.  Three-bit reversal places them in physical
slot order
\begin{equation}
  0,4,2,6,1,5,3,7.
  \label{eq:overview-bitrev-eight}
\end{equation}
A natural recursive-doubling scan combines logical predecessors at distances
$1$, $2$, and $4$.  After conjugation by bit reversal, however, one logical
distance need not correspond to one cyclic displacement.  Under the rotation
convention used in the formal model, the three rounds of the eight-slot
instance require displacement sets
\begin{equation}
  \Delta_0=\{3,4,6\},
  \qquad
  \Delta_1=\{2,7\},
  \qquad
  \Delta_2=\{1\}
  \pmod 8.
  \label{eq:overview-direct-displacements}
\end{equation}
Masks remove wraparound terms and select the destinations belonging to each
class.  The direct topology therefore costs $3+2+1=6$ rotations.  For
$n=2^m$, the same borrow-chain structure produces $m-t$ displacement classes
in round $t$ and the triangular count in
\cref{eq:direct-triangular}.

This cost does not follow from the prefix function itself.  It follows from a
stronger routing requirement imposed by the direct topology: every
destination asks for the value at one exact logical predecessor.  The new
scan replaces that requirement with a block invariant under which many
physical copies are interchangeable.

\subsection{Replicate complete block aggregates}
\label{sec:overview-replication}

Let $\diamond$ denote ordered concatenation: $A\diamond X$ aggregates a lower
logical interval represented by $A$ followed by its adjacent upper interval
represented by $X$.  Only associativity is assumed; in particular,
$\diamond$ need not be commutative.

At level $d$, the scan maintains two packed states.
\begin{itemize}[leftmargin=*,itemsep=0.2em,topsep=0.25em]
  \item $A^{(d)}$ stores the complete aggregate of the aligned logical
  block of length $2^d$ containing the current position.  The same aggregate
  is replicated at every slot belonging to that block.
  \item $E^{(d)}$ stores the exclusive ordered prefix from the beginning of
  that block to, but not including, the current logical position.
\end{itemize}
Initially, $A^{(0)}$ contains the input states and $E^{(0)}$ contains the
monoid identity.  At level $d$, rotate only the replicated aggregate state:
\begin{equation}
  B^{(d)}=\Rot_{s_d}\bigl(A^{(d)}\bigr),
  \qquad
  s_d=2^{m-d-1}.
  \label{eq:overview-rotation}
\end{equation}
For a position in the lower child, $B^{(d)}$ supplies the upper-child
aggregate; for a position in the upper child, it supplies the lower-child
aggregate.  Public masks select the two cases.  Informally, the updates are
\begin{align}
  \text{lower child:}\quad
  &A^{(d+1)}=A^{(d)}\diamond B^{(d)},
  &E^{(d+1)}=E^{(d)},
  \label{eq:overview-lower-update}\\
  \text{upper child:}\quad
  &A^{(d+1)}=B^{(d)}\diamond A^{(d)},
  &E^{(d+1)}=B^{(d)}\diamond E^{(d)}.
  \label{eq:overview-upper-update}
\end{align}
The operand order always follows the logical order ``lower interval, then
upper interval.''  The same update therefore works for arbitrary
noncommutative monoids.

\begin{figure}[t]
  \centering
  \setlength{\tabcolsep}{3.3pt}
  \renewcommand{\arraystretch}{1.2}
  \small
  \begin{tabular}{@{}lcccccccc@{}}
    \toprule
    physical slot
      & $0$ & $1$ & $2$ & $3$ & $4$ & $5$ & $6$ & $7$ \\
    logical index
      & $0$ & $4$ & $2$ & $6$ & $1$ & $5$ & $3$ & $7$ \\
    \midrule
    $A^{(0)}$
      & $[0]$ & $[4]$ & $[2]$ & $[6]$ & $[1]$ & $[5]$ & $[3]$ & $[7]$ \\
    $A^{(1)}$ after $\Rot_{4}$
      & $[0{:}1]$ & $[4{:}5]$ & $[2{:}3]$ & $[6{:}7]$
      & $[0{:}1]$ & $[4{:}5]$ & $[2{:}3]$ & $[6{:}7]$ \\
    $A^{(2)}$ after $\Rot_{2}$
      & $[0{:}3]$ & $[4{:}7]$ & $[0{:}3]$ & $[4{:}7]$
      & $[0{:}3]$ & $[4{:}7]$ & $[0{:}3]$ & $[4{:}7]$ \\
    $A^{(3)}$ after $\Rot_{1}$
      & $[0{:}7]$ & $[0{:}7]$ & $[0{:}7]$ & $[0{:}7]$
      & $[0{:}7]$ & $[0{:}7]$ & $[0{:}7]$ & $[0{:}7]$ \\
    \bottomrule
  \end{tabular}
  \caption{Replication of complete aggregates in the eight-slot example.
  The notation $[a{:}b]$ denotes the ordered aggregate of logical inputs
  $a,\ldots,b$.  Each level uses one rotation.  The exclusive-prefix state,
  omitted from the table, is updated in parallel only for positions in the
  upper child.}
  \label{fig:eight-slot-replication}
\end{figure}

\Cref{fig:eight-slot-replication} shows the aggregate invariant for the
physical order in \cref{eq:overview-bitrev-eight}.  The rotations are
$4,2,1$.  After the first level, each logical pair aggregate has two physical
copies; after the second, each four-element aggregate has four copies; after
the third, the complete aggregate fills the domain.  In parallel, the
exclusive state accumulates the complete lower child exactly when the
current position belongs to the upper child.  After the final level,
$E_i^{(m)}$ is the ordered aggregate of all inputs preceding logical position
$i$.

The aggregate update is unnecessary after the final level.  The scan thus
uses one aggregate composition and one exclusive-prefix composition at each
of the first $m-1$ levels, followed by one final prefix composition:
\begin{equation}
  T\leq 2\cdot(m-1)+1=2\cdot m-1.
  \label{eq:overview-monoid-work}
\end{equation}
This is a count of packed monoid evaluations, not of ciphertext
multiplications.  A concrete carry encoding may require several encrypted
products inside one monoid evaluation.

\subsection{Why one rotation supplies both siblings}
\label{sec:overview-sibling}

The geometric reason for \cref{eq:overview-rotation} is especially simple
under bit reversal.  Write a logical index at level $d$ as
\begin{equation}
  i=u\cdot 2^{d+1}+b\cdot 2^d+r,
  \qquad b\in\{0,1\},\quad 0\leq r<2^d,
  \label{eq:overview-index-decomposition}
\end{equation}
and set $s_d=2^{m-d-1}$.  For a fixed parent block, its two children occupy
physical cosets of the form
\begin{align}
  \mathcal C_0
    &=\{c+2\cdot s_d\cdot k:0\leq k<2^d\},
  \label{eq:overview-coset-zero}\\
  \mathcal C_1
    &=\{c+s_d+2\cdot s_d\cdot k:0\leq k<2^d\}.
  \label{eq:overview-coset-one}
\end{align}
Translation by $s_d$ swaps these cosets modulo $2^m$.  It need not map a
position to the physical image of its exact XOR partner: wraparound may
select a different position in the sibling coset.  Exact partners are
unnecessary because $A^{(d)}$ is constant over each child.  Any source in the
sibling coset carries the same complete aggregate.

This is the decisive distinction between the two scans.  The direct topology
routes syntactically designated predecessors.  The replicated topology routes
semantic block values.  Public replication turns one directed cyclic
translation into the effect of a bidirectional hypercube exchange.

\subsection{Optimality and the structure of equality}
\label{sec:overview-optimality}

The lower-bound intuition is independent of the construction.  Fix an output
slot and suppose the circuit invokes rotations with offsets
$\delta_1,\ldots,\delta_R$.  A dependency path either traverses or avoids each
invocation, so its source displacement is a subset sum
\begin{equation}
  \sum_{j\in J}\delta_j\pmod{2^m}
  \qquad\text{for some }J\subseteq\{1,\ldots,R\}.
  \label{eq:overview-subset-sum}
\end{equation}
At most $2^R$ source displacements can reach the fixed output.  The complete
prefix depends on all $2^m$ inputs, which forces $R\geq m$.  Binary monoid
fan-in independently gives depth $D\geq m$.  The replicated scan meets both
bounds simultaneously.

Equality leaves no slack.  When $R=m$, every subset in
\cref{eq:overview-subset-sum} must produce a different residue modulo $2^m$.
The offsets consequently form a complete $2$-adic basis: after reordering,
there is exactly one offset of each valuation
\begin{equation}
  0,1,\ldots,m-1.
  \label{eq:overview-valuations}
\end{equation}
The formal proof evaluates the associated group-ring factorization at
$2$-power roots of unity.  This characterizes the rotation profile of every
optimal circuit, but not its entire topology; distinct circuits may realize
the same valuation basis.

\subsection{Trading direct keys for online rotations}
\label{sec:overview-key-frontier}

The $m$ optimal calls may use $m$ directly supported offsets.  Reducing this
key set forces some scale rotations to be synthesized by repeated calls to a
smaller set.  If key $g_j$ is invoked $c_j$ times, a dependency path may use
it between zero and $c_j$ times.  The resulting coefficient box contains at
most
\begin{equation}
  \prod_{j=1}^{K}(c_j+1)
  \label{eq:overview-key-box}
\end{equation}
possible displacements.  Coverage of all $2^m$ inputs therefore requires the
product in \cref{eq:overview-key-box} to be at least $2^m$.

When $K$ divides $m$, the lower bound is attained by dividing the valuation
range into $K$ equal consecutive blocks and directly keying the first scale
of each block.  For example, the exact calls for $m=8$ are shown in
\cref{tab:overview-key-example}.
\begin{table}[t]
  \centering
  \caption{Exact online-call frontier for $m=8$ and direct-key budgets
  dividing $m$.  Fewer directly supported rotations reduce key material but
  require sequential synthesis of the omitted scales.}
  \label{tab:overview-key-example}
  \setlength{\tabcolsep}{10pt}
  \renewcommand{\arraystretch}{1.25}
  \begin{tabular}{@{}lcccc@{}}
    \toprule
    direct keys $K$ & $8$ & $4$ & $2$ & $1$ \\
    \midrule
    minimum calls $R^\star(8,K)$ & $8$ & $12$ & $30$ & $255$ \\
    \bottomrule
  \end{tabular}
\end{table}
For nondivisible $K$, the paper gives a sharp coefficient-counting lower
bound and a balanced dyadic construction.  We state exactness only when the
bounds coincide; the remaining gaps are not needed for the scan or carry
results.

\subsection{Exclusive prefixes for carry and borrow}
\label{sec:overview-carry}

A radix digit induces a transition $F_{S_i}:\{0,1\}\rightarrow\{0,1\}$ on
its incoming carry.  Ordered composition is noncommutative because the
lower-digit transition must act before the upper-digit transition.  The
exclusive scan returns the transition
\begin{equation}
  E_i=S_0\diamond\cdots\diamond S_{i-1}.
  \label{eq:overview-exclusive-state}
\end{equation}
The carry entering digit $i$ is $F_{E_i}(0)$.  Applying the local transition
$S_i$ then gives the outgoing carry, so both values required for digit
correction are available without communication.  An inclusive scan would
give the outgoing carry directly but would leave the incoming carry at the
preceding logical digit, reintroducing a layout-sensitive shift.

The scan theorem concerns the prefix-and-correction layer.  A concrete
\CKKS realization must still choose a carry-state encoding and account for
ciphertext products, conjugations, rescaling, key-switch error, and level
alignment.  The practical comparison is therefore not ``$m$ rotations versus
$m\cdot(m+1)/2$ rotations'' in isolation.  It is the exchange
\begin{equation}
  \text{fewer rotations}
  \quad\longleftrightarrow\quad
  \text{two live states and additional monoid work}.
  \label{eq:overview-resource-exchange}
\end{equation}
The evaluation will determine where this exchange improves latency, memory,
and precision in a concrete backend.

\section{Model and Preliminaries}
\label{sec:model}

This section fixes the ordered-prefix problem, the packed circuit model, and
our layout conventions.  The abstraction deliberately separates three
resources that are often conflated in homomorphic implementations: binary
monoid depth, packed monoid evaluations, and cyclic-rotation invocations.
Concrete ciphertext multiplications, rescaling, and key switching refine
these abstract costs in the later \CKKS realization.

\subsection{Ordered monoid prefixes}
\label{sec:model-monoid}

Let $\mathcal M=(M,\diamond,e)$ be a monoid.  We assume associativity but not
commutativity.  The notation $A\diamond X$ always means that the logical
interval represented by $A$ precedes the adjacent interval represented by
$X$.  This operand convention remains fixed even when a concrete state
representation writes function composition in the opposite syntactic order.

For an input sequence $x_0,\ldots,x_{n-1}\in M$, the inclusive and exclusive
ordered prefixes are
\begin{align}
  \Pi_i
    &\defeq x_0\diamond x_1\diamond\cdots\diamond x_i,
  \label{eq:model-inclusive-prefix}\\
  \Pi_i^{\circ}
    &\defeq
      \begin{cases}
        e, & i=0,\\
        x_0\diamond x_1\diamond\cdots\diamond x_{i-1}, & i>0.
      \end{cases}
  \label{eq:model-exclusive-prefix}
\end{align}
Parenthesization is immaterial by associativity, whereas the left-to-right
order is part of the required output.  Unless stated otherwise, the active
length is
\begin{equation}
  n=2^m
  \qquad\text{for an integer }m\geq 2.
  \label{eq:model-active-length}
\end{equation}
For $0\leq d\leq m$ and $0\leq u<2^{m-d}$, define the aligned logical block
\begin{equation}
  \mathcal B_{u,d}
  \defeq
  \{u\cdot 2^d,\ldots,(u+1)\cdot 2^d-1\}.
  \label{eq:model-aligned-block}
\end{equation}
We write $[a{:}b]$ for the ordered aggregate
$x_a\diamond\cdots\diamond x_b$ and take $[a{:}a-1]=e$.  This interval
notation is semantic: a circuit value may be stored in any physical slot and
may have several identical copies.

\subsection{Packed cyclic-rotation circuits}
\label{sec:model-circuits}

Physical slots are indexed by the cyclic group $\Z_n$.  A packed monoid state
is a vector $V\in M^n$.  The circuit is an acyclic, data-independent directed
graph built from the following operations.

\begin{definition}[Packed prefix circuit]
\label{def:packed-prefix-circuit}
A packed prefix circuit may use:
\begin{enumerate}[leftmargin=*,label=(\roman*),itemsep=0.2em,topsep=0.25em]
  \item input vectors and public constant vectors, including the all-identity
  vector;
  \item free fanout, copying, and public slotwise selection between already
  available states;
  \item a packed monoid gate
  \begin{equation}
    \operatorname{Comp}(U,V)[p]
      \defeq U[p]\diamond V[p]
      \qquad\text{for every }p\in\Z_n;
    \label{eq:model-packed-composition}
  \end{equation}
  \item a cyclic rotation by offset $\delta\in\Z_n$,
  \begin{equation}
    \rho_{\delta}(V)[p]
      \defeq V[p-\delta\bmod n].
    \label{eq:model-rotation-convention}
  \end{equation}
\end{enumerate}
The output must contain the requested logical prefixes in the prescribed
physical layout.
\end{definition}

Public selection captures the effect of plaintext masks without assigning a
monoid meaning to multiplication by zero.  It does not combine two
independent monoid words.  The only gate that merges monoid-bearing values is
the binary gate in \cref{eq:model-packed-composition}.  Representation-level
arithmetic used to implement one such gate is accounted for separately.

We use the following cost measures.
\begin{itemize}[leftmargin=*,itemsep=0.25em,topsep=0.25em]
  \item The \emph{monoid depth} $D$ is the maximum number of packed monoid
  gates on a directed input--output path.
  \item The \emph{packed monoid work} $T$ is the total number of invocations
  of $\operatorname{Comp}$.
  \item The \emph{rotation count} $R$ is the total number of rotation-gate
  invocations.  Rotating one state once costs one call, independently of the
  number of slots whose values are subsequently selected.  A rotated result
  may be fanned out and reused without another rotation call.
  \item The \emph{direct-key count} $K$ is the number of distinct nonzero
  offsets for which the evaluator holds a directly usable rotation key.  If
  the directly keyed offsets are $g_1,\ldots,g_K$ and offset $g_j$ is invoked
  $c_j$ times, then
  \begin{equation}
    R=\sum_{j=1}^{K}c_j.
    \label{eq:model-rotation-count-by-key}
  \end{equation}
  An effective rotation synthesized by several directly keyed calls is
  charged once for every call in that sequence.
\end{itemize}

This model counts online cyclic translations, not wall-clock latency.  It
does not identify distinct rotations that share a hoisted decomposition, and
it does not charge the internal ciphertext products of a monoid gate.  It
also excludes hidden data movement: a dense linear transform, arbitrary slot
permutation, coefficient-to-slot transform, or bootstrapping transform may
be used by a larger application, but it is not a free operation inside the
prefix circuit.  Any automorphism that changes slot positions must be
represented by counted routing calls.

\subsection{Bit reversal and aligned physical cosets}
\label{sec:model-layout}

Bit reversal does not keep an aligned logical block contiguous in physical
memory.  Instead, it turns the block into a regular strided set.  The reason
is easiest to see at the bit level.  Inside a logical block of length $2^d$,
the lower $d$ bits of the logical index vary, whereas the upper $m-d$ bits
are fixed.  Bit reversal exchanges these roles: the varying lower logical
bits become the upper $d$ physical bits, and the fixed upper logical bits
become the lower $m-d$ physical bits.  Consequently, all physical slots of
the block have the same residue modulo $2^{m-d}$ and differ only by multiples
of $2^{m-d}$.  This regularity is the geometric property used by the scan.

To make the observation precise, write the $m$-bit expansion of
$i\in\{0,\ldots,2^m-1\}$ as
\begin{equation}
  i=\sum_{j=0}^{m-1} i_j\cdot 2^j,
  \qquad i_j\in\{0,1\}.
  \label{eq:model-binary-expansion}
\end{equation}
The $m$-bit reversal permutation is
\begin{equation}
  \rev_m(i)
  \defeq
  \sum_{j=0}^{m-1} i_j\cdot 2^{m-1-j},
  \label{eq:model-bit-reversal}
\end{equation}
and logical position $i$ is stored at physical slot
\begin{equation}
  L(i)=\rev_m(i).
  \label{eq:model-layout-map}
\end{equation}

For each $d$, define
\begin{equation}
  H_d
  \defeq
  \{2^{m-d}\cdot k:0\leq k<2^d\}
  \subseteq \Z_{2^m}.
  \label{eq:model-hd}
\end{equation}
The set $H_d$ is the subgroup of physical addresses whose lower $m-d$ bits
are zero.  A coset $c+H_d$ therefore fixes those lower bits to the residue
$c$ and lets the upper $d$ bits vary.  The preceding bit-level observation
now becomes the exact identity
\begin{equation}
  L(\mathcal B_{u,d})
  =\rev_{m-d}(u)+H_d.
  \label{eq:model-block-coset}
\end{equation}
Indeed, if $i=u\cdot 2^d+r$ with $0\leq r<2^d$, then
\begin{equation}
  \rev_m(u\cdot 2^d+r)
  =\rev_{m-d}(u)+2^{m-d}\cdot\rev_d(r).
  \label{eq:model-block-reversal-identity}
\end{equation}
Thus an aligned logical block is not physically contiguous, but all of its
slots lie in one explicitly known coset.

The same bit view explains the sibling displacement.  Consider a parent
block of length $2^{d+1}$.  Its two children differ only in logical bit $d$.
After reversal, this bit becomes physical bit $m-d-1$, so changing the child
orientation changes the physical address by
\begin{equation}
  s_d=2^{m-d-1}.
  \label{eq:model-sibling-offset}
\end{equation}
The lower $d$ logical bits still vary inside each child and become steps of
size $2^{m-d}=2\cdot s_d$.  Hence, for a parent-dependent residue $c$, the
two child supports have the form
\begin{align}
  \mathcal C_0
    &=\{c+2\cdot s_d\cdot k:0\leq k<2^d\},
  \label{eq:model-child-coset-zero}\\
  \mathcal C_1
    &=\{c+s_d+2\cdot s_d\cdot k:0\leq k<2^d\}.
  \label{eq:model-child-coset-one}
\end{align}
Under the convention in \cref{eq:model-rotation-convention}, subtracting
$s_d$ from an address in either child yields an address in the other child,
modulo $2^m$.  Therefore $\rho_{s_d}$ exchanges the two physical supports.
It may permute positions within the sibling support rather than select the
exact XOR partner.  This distinction is harmless once the complete child
aggregate is replicated across that support: every source copy represents
the same monoid element.

\paragraph{Eight-slot example.}
For $m=3$, the physical order is $0,4,2,6,1,5,3,7$.  At the first level,
logical pairs occupy supports such as $\{0,4\}$ and are exchanged with offset
$4$.  At the next level, the two pair supports inside logical block
$[0{:}3]$ are $\{0,4\}$ and $\{2,6\}$, separated by offset $2$.  At the last
level, logical halves $[0{:}3]$ and $[4{:}7]$ occupy the even and odd physical
slots, separated by offset $1$.  The scan rotations are therefore
$4,2,1$: each level changes exactly the physical bit that records the current
logical child orientation.

For public masks, define the level-$d$ orientation bit by
\begin{equation}
  \mu_d[L(i)]\defeq i_d.
  \label{eq:model-orientation-mask}
\end{equation}
Thus $\mu_d=0$ selects the logically lower child and $\mu_d=1$ selects the
logically upper child, even though their slots are interleaved in physical
order.

\paragraph{Fully occupied digit-major interleaving.}
The same geometry executes $g$ independent scans in one cyclic domain.  Let
$0\leq r<g$ identify an object and let $0\leq i<n$ identify a digit.  The
layout
\begin{equation}
  \Phi(r,i)
  \defeq
  r+g\cdot\rev_m(i)
  \pmod{g\cdot n}
  \label{eq:model-segmented-layout}
\end{equation}
fills all $g\cdot n$ slots.  A level-$d$ sibling exchange uses offset
\begin{equation}
  g\cdot s_d=g\cdot 2^{m-d-1}.
  \label{eq:model-segmented-offset}
\end{equation}
This offset preserves the residue modulo $g$, so every object executes the
same scan level independently.  The statement requires a complete cyclic
domain, or equivalent identity padding.  It does not automatically cover
partially filled or object-major contiguous layouts, where wraparound can
cross object boundaries.

\subsection{A representation boundary for coordinatewise products}
\label{sec:model-representation}

The prefix theorem concerns public layouts, not arbitrary encrypted linear
representations.  The following elementary proposition records why
permutations and public coordinate scalings form the natural boundary for a
generic coordinatewise product.

Let $\mathbb F$ be a field and define
\begin{equation}
  \operatorname{cmul}(\boldsymbol{x},\boldsymbol{y})_j
  \defeq x_j\cdot y_j.
  \label{eq:model-coordinate-product}
\end{equation}

\begin{proposition}[Monomial representation boundary]
\label{prop:monomial-boundary}
Let $E\in\operatorname{GL}_n(\mathbb F)$.  Suppose there exists an invertible
diagonal matrix $\Lambda$ such that, for all
$\boldsymbol{x},\boldsymbol{y}\in\mathbb F^n$,
\begin{equation}
  E\cdot\operatorname{cmul}(\boldsymbol{x},\boldsymbol{y})
  =\Lambda\cdot
   \operatorname{cmul}(E\cdot\boldsymbol{x},E\cdot\boldsymbol{y}).
  \label{eq:model-native-product}
\end{equation}
Then $E$ is monomial: $E=D\cdot P$ for an invertible diagonal matrix $D$ and
a permutation matrix $P$.  Conversely, every such $E$ satisfies
\cref{eq:model-native-product} with $\Lambda=D^{-1}$.
\end{proposition}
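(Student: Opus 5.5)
The plan is to test the identity in \cref{eq:model-native-product} on standard basis vectors and read off the support structure of the columns of $E$. Write $\boldsymbol{e}_1,\ldots,\boldsymbol{e}_n$ for the standard basis and $E_{\cdot k}=E\boldsymbol{e}_k$ for the $k$-th column. Since $\operatorname{cmul}$ is bilinear, it is determined by its values $\operatorname{cmul}(\boldsymbol{e}_k,\boldsymbol{e}_\ell)=\delta_{k\ell}\boldsymbol{e}_k$. So the hypothesis is equivalent to the family of identities
\begin{equation*}
  \delta_{k\ell}\,E_{\cdot k}
  =\Lambda\cdot\operatorname{cmul}(E_{\cdot k},E_{\cdot \ell})
  \qquad\text{for all }k,\ell .
\end{equation*}

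\textbf{Step 1 (disjoint supports).} Take $k\neq\ell$. The left side vanishes, and $\Lambda$ is invertible, so $\operatorname{cmul}(E_{\cdot k},E_{\cdot \ell})=0$. Hence the supports of distinct columns are pairwise disjoint.

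\textbf{Step 2 (each column has exactly one nonzero entry).} $E$ is invertible, so every column is nonzero and every row is nonzero. The $n$ columns have nonempty, pairwise disjoint supports inside $\{1,\ldots,n\}$. By counting, each support is a single index $\pi(k)$, and $\pi$ is a bijection. Therefore $E_{\cdot k}=d_{\pi(k)}\boldsymbol{e}_{\pi(k)}$ with $d_{\pi(k)}\neq 0$. This is exactly $E=D\cdot P$ with $D=\operatorname{diag}(d_j)$ and $P$ the permutation matrix of $\pi$. The diagonal case $k=\ell$ is not needed for this direction; it only forces $\Lambda_{\pi(k)}=d_{\pi(k)}^{-1}$, which agrees with the converse.

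\textbf{Step 3 (converse).} Let $E=DP$. Permutation matrices commute with $\operatorname{cmul}$, since $\operatorname{cmul}(P\boldsymbol{x},P\boldsymbol{y})=P\operatorname{cmul}(\boldsymbol{x},\boldsymbol{y})$. Diagonal matrices satisfy $\operatorname{cmul}(D\boldsymbol{u},D\boldsymbol{v})=D^2\operatorname{cmul}(\boldsymbol{u},\boldsymbol{v})$. Combining these,
\begin{equation*}
  D^{-1}\operatorname{cmul}(E\boldsymbol{x},E\boldsymbol{y})=D^{-1}D^2P\operatorname{cmul}(\boldsymbol{x},\boldsymbol{y})=E\operatorname{cmul}(\boldsymbol{x},\boldsymbol{y}),
\end{equation*}
so the identity holds with $\Lambda=D^{-1}$.

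The only real obstacle is the counting in Step 2. It uses invertibility to guarantee that the columns are nonzero; the row condition then follows automatically. There is no hidden dependence on the characteristic of $\mathbb F$, because the argument only uses that a product of field elements vanishes only when a factor does.
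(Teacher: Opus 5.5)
Your proof is correct and is essentially the paper's argument. Evaluating at $(\boldsymbol{e}_k,\boldsymbol{e}_\ell)$ is the same as the paper's comparison of the coefficient of $x_k y_\ell$, and your disjoint-column-supports condition $E_{jk}E_{j\ell}=0$ for $k\neq\ell$ is exactly the paper's ``at most one nonzero entry per row.'' From there, both arguments use the same invertibility count and the same converse computation with $\Lambda=D^{-1}$.
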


\begin{proof}
Fix a row $a=(a_0,\ldots,a_{n-1})$ of $E$, and let $\lambda\neq 0$ be the
corresponding diagonal entry of $\Lambda$.  Comparing the coefficient of
$x_j\cdot y_k$ for $j\neq k$ in \cref{eq:model-native-product} gives
$\lambda\cdot a_j\cdot a_k=0$.  Hence every row of $E$ contains at most one
nonzero entry.  Invertibility forces exactly one nonzero entry in every row
and every column, so $E=D\cdot P$.  For the converse,
\begin{align*}
  D^{-1}\cdot
  \operatorname{cmul}(D\cdot P\cdot\boldsymbol{x},
                      D\cdot P\cdot\boldsymbol{y})
  &=D\cdot P\cdot
    \operatorname{cmul}(\boldsymbol{x},\boldsymbol{y})\\
  &=E\cdot\operatorname{cmul}(\boldsymbol{x},\boldsymbol{y}),
\end{align*}
which proves the claim.
\end{proof}

The proposition does not rule out useful mixing transforms.  It says that an
arbitrary invertible mixing transform cannot preserve a generic
coordinatewise product through only coordinatewise arithmetic and public
scale correction.  This paper therefore analyzes layout permutations
explicitly and treats additional linear transforms as separately counted
operations.

\subsection{Carry transitions as an ordered monoid}
\label{sec:model-carry}

We recall the transition semantics used by radix carry constructions~\cite{ChaPL26}.  Fix a
radix $B\geq 2$.  Consider a provisional digit
\begin{equation}
  z_i\in\{0,\ldots,2\cdot B-2\}
  \label{eq:model-provisional-digit-range}
\end{equation}
and an incoming carry $c\in\{0,1\}$.  The digit induces the transition
\begin{equation}
  F_{z_i}(c)
  \defeq
  \left\lfloor\frac{z_i+c}{B}\right\rfloor.
  \label{eq:model-carry-transition}
\end{equation}
It is one of the three monotone Boolean maps
\begin{align}
  \mathsf K(c)&=0
    &&\text{if }0\leq z_i\leq B-2,\label{eq:model-kill}\\
  \mathsf P(c)&=c
    &&\text{if }z_i=B-1,\label{eq:model-propagate}\\
  \mathsf G(c)&=1
    &&\text{if }B\leq z_i\leq 2\cdot B-2.\label{eq:model-generate}
\end{align}
For a lower interval state $A$ followed by an adjacent upper interval state
$X$, ordered concatenation represents function composition in the order
\begin{equation}
  F_{A\diamond X}=F_X\circ F_A.
  \label{eq:model-transition-order}
\end{equation}
Associativity of $\diamond$ follows immediately from associativity of
function composition.  The identity is the propagate map
$e(c)=c$.

For algebraic reasoning, a transition may be represented by a homogeneous
affine state
\begin{equation}
  S=(p,\gamma,h),
  \qquad
  F_S(c)=\frac{\gamma+p\cdot c}{h}.
  \label{eq:model-projective-state}
\end{equation}
The three Boolean transitions have representatives
\begin{equation}
  \mathsf K=(0,0,1),
  \qquad
  \mathsf P=(1,0,1),
  \qquad
  \mathsf G=(0,1,1).
  \label{eq:model-projective-kpg}
\end{equation}
If $A=(p_A,\gamma_A,h_A)$ is lower and
$X=(p_X,\gamma_X,h_X)$ is upper, then
\begin{equation}
  A\diamond X
  =\bigl(
      p_X\cdot p_A,
      \gamma_X\cdot h_A+p_X\cdot\gamma_A,
      h_X\cdot h_A
    \bigr).
  \label{eq:model-projective-composition}
\end{equation}
Equation~\eqref{eq:model-projective-composition} is associative because it
represents \cref{eq:model-transition-order}.  Multiplying all three
coordinates by a common nonzero scalar does not change the represented
transition.

Let $S_i$ be the state of digit $i$, and let the exclusive state be
\begin{equation}
  E_i
  \defeq
  \begin{cases}
    e, & i=0,\\
    S_0\diamond\cdots\diamond S_{i-1}, & i>0.
  \end{cases}
  \label{eq:model-exclusive-carry-state}
\end{equation}
Then the incoming carry, outgoing carry, and corrected digit are evaluated
slotwise as
\begin{equation}
  c_i=F_{E_i}(0),
  \qquad
  c_{i+1}=F_{S_i}(c_i),
  \qquad
  d_i=z_i+c_i-B\cdot c_{i+1}.
  \label{eq:model-carry-correction}
\end{equation}
The scan is responsible for producing $E_i$; the three evaluations in
\cref{eq:model-carry-correction} require no further cross-slot routing.  A
borrow transition is obtained analogously by mapping an incoming borrow to
the outgoing borrow.  The same ordered-prefix circuit therefore applies to
borrow propagation, comparison through the final borrow, and conditional
subtraction.  The later radix-\CKKS instantiation distinguishes this abstract
transition monoid from the ciphertext operations used by a concrete encoding.

\section{Lower Bounds and Rigidity}
\label{sec:lower-bounds}

This section establishes the optimality statements used throughout the paper.
The argument has three layers.  First, binary monoid gates limit how rapidly
one value can acquire new dependencies.  Second, after expanding a packed
circuit into slot-labelled nodes, every input--output dependency path acquires
a physical displacement equal to a subset sum of the invoked rotation
offsets.  Third, equality in this counting argument forces those offsets to
form a complete $2$-adic basis.  We then return to the direct
transported-predecessor topology and determine its larger triangular cost.

The global results use only the circuit model of
\cref{def:packed-prefix-circuit}.  They permit arbitrary helper vectors,
fanout, public masks, branching, recomputation, and reuse of rotated states.
They count online cyclic-rotation invocations, not wall-clock time or the
number of distinct rotation keys.

\subsection{Dependency growth and monoid depth}
\label{sec:lower-depth}

For a circuit node and a fixed physical slot, its \emph{dependency set} is the
set of input monoid elements whose values can affect that slot.  A public
constant has the empty dependency set.  Copying and rotation preserve its
cardinality, while public slotwise selection chooses one already available
value at the considered slot.  The only operation that merges two
monoid-bearing dependency sets is the binary packed monoid gate.

\begin{lemma}[Binary dependency growth]
\label{lem:binary-dependency-growth}
A value at monoid depth at most $d$ depends on at most $2^d$ input monoid
elements.
\end{lemma}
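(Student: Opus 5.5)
The plan is a structural induction over the acyclic circuit of \cref{def:packed-prefix-circuit}, in topological order, on the slot-labelled values. For a node $v$ and physical slot $p$, write $\mathrm{dep}(v,p)$ for the dependency set of the value at $(v,p)$ and $D(v,p)$ for the maximum number of packed monoid gates on any directed input--output path ending at $(v,p)$. The claim to maintain is
\begin{equation*}
  |\mathrm{dep}(v,p)|\leq 2^{D(v,p)},
\end{equation*}
which gives the lemma at once, because a value at monoid depth at most $d$ has $2^{D(v,p)}\leq 2^d$.

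The base cases are immediate. An input vector at slot $p$ depends on exactly one input element, and it has depth $0$, so the bound $2^0=1$ holds. A public constant, including the all-identity vector, has empty dependency set and depth $0$.

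For the inductive step I split by the gate producing $v$.
\begin{itemize}
  \item \textbf{Fanout or copying.} The value at slot $p$ equals the parent value at $p$, so both its dependency set and its depth are unchanged.
  \item \textbf{Rotation $\rho_\delta$.} By \cref{eq:model-rotation-convention}, the value at slot $p$ is the parent value at slot $p-\delta$. Its dependency set and depth are therefore those of $(u,p-\delta)$, and the bound transfers.
  \item \textbf{Public slotwise selection.} The selection is public and data-independent, so at slot $p$ it fixes one already available input node $w$. The result is the value of $(w,p)$, with the same dependency set and a depth no larger than the maximum over the arguments. The bound is inherited.
  \item \textbf{Packed monoid gate $\operatorname{Comp}(U,V)$.} By \cref{eq:model-packed-composition}, the value at $p$ is $U[p]\diamond V[p]$. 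Its dependency set is contained in the union $\mathrm{dep}(U,p)\cup\mathrm{dep}(V,p)$, and its depth is $1+\max\{D(U,p),D(V,p)\}$. Writing $d'$ for that maximum, the union has size at most $2^{d'}+2^{d'}=2^{d'+1}$, as required.
\end{itemize}

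The only delicate point is making ``dependency set'' precise enough for the union bound to be justified. I would define $\mathrm{dep}$ syntactically, as the set of input indices $(\text{input vector},\text{slot})$ reachable backward through the slot-expanded graph. Under this definition the union inclusion is an equality by construction, and semantic dependence is contained in it. Under the syntactic definition, public selection contributes only the chosen branch at each slot, because the mask is a fixed public constant rather than a data-dependent merge. This is what prevents selection from acting as a hidden second merge and is the step that needs a careful sentence. Everything else is routine bookkeeping.
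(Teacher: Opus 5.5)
Your proposal is correct and follows essentially the paper's proof: a topological induction in which copying, rotation, and public selection pass a single dependency set through unchanged, while $\operatorname{Comp}$ gives the union bound $2^{d_0}+2^{d_1}\leq 2^{1+\max\{d_0,d_1\}}$. Your slot-labelled syntactic definition of dependency sets, together with the remark that a public mask contributes only the chosen branch at each slot, makes explicit what the paper states in one sentence.
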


\begin{proof}
We prove the claim by induction over the circuit in topological order.  At
monoid depth zero, a slot contains either one input value or a public
constant, and therefore depends on at most one input, which equals $2^0$.
Copying, rotation, and public selection do not merge two dependency sets, so
they preserve the claimed bound.

Consider now a packed monoid gate with input depths $d_0$ and $d_1$.  At a
fixed slot, let the two input dependency sets be $S_0$ and $S_1$.  By the
induction hypothesis,
\[
  |S_0|\leq 2^{d_0}
  \qquad\text{and}\qquad
  |S_1|\leq 2^{d_1}.
\]
The output dependency set is contained in $S_0\cup S_1$, so
\begin{align*}
  |S_0\cup S_1|
  &\leq |S_0|+|S_1| \\
  &\leq 2^{d_0}+2^{d_1} \\
  &\leq 2\cdot 2^{\max\{d_0,d_1\}}
   =2^{1+\max\{d_0,d_1\}}.
\end{align*}
The output depth is $1+\max\{d_0,d_1\}$, which proves the induction step.
\end{proof}

\begin{theorem}[Depth lower bound]
\label{thm:depth-lower-bound}
Let $n=2^m$ with $m\geq 2$.  Every packed circuit computing all inclusive or
all exclusive ordered prefixes has monoid depth
\[
  D\geq m.
\]
The same bound holds for any circuit producing one value that depends on all
$n$ inputs.
\end{theorem}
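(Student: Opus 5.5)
The plan is to reduce the theorem to \cref{lem:binary-dependency-growth} by exhibiting a single output value whose dependency set has size exactly $n=2^m$. For the second sentence of the statement this is immediate. If a circuit of monoid depth $D$ produces a value that depends on all $n$ inputs, then the lemma bounds that dependency set by $2^D$. Hence $2^D\geq 2^m$, and since depth is an integer, $D\geq m$.

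For the first sentence I identify the required output. For inclusive prefixes, take the last output $\Pi_{n-1}=x_0\diamond\cdots\diamond x_{n-1}$, stored at physical slot $L(n-1)$. For exclusive prefixes, the last output $\Pi^{\circ}_{n-1}$ depends on only $n-1$ inputs. The lemma then gives $2^D\geq n-1=2^m-1$, and because $m\geq 2$ this still forces $D\geq m$: indeed $2^{m-1}<2^m-1$ whenever $m\geq 2$. So the exclusive case needs no separate trick, only the observation that $2^{m-1}<2^m-1$ for $m\geq 2$.

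The main point needing care is justifying that the computed slot \emph{genuinely} depends on each of the relevant inputs, as opposed to merely being allowed to. The output must equal $\Pi_{n-1}$ (respectively $\Pi^{\circ}_{n-1}$) for \emph{every} monoid and every input assignment. So it suffices to exhibit a monoid in which each $x_j$ affects the product. I would use the free monoid on an alphabet, or $(\Z,+,0)$ as a simple commutative instance, since the circuit model is uniform in $\mathcal M$. In that monoid, changing $x_j$ alone changes the value of the full product. Hence, by the definition of dependency set, $x_j$ belongs to the dependency set of that slot.

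The lemma's induction treats dependency sets as upper bounds on influence: copying, rotation and selection cannot introduce new inputs, and a monoid gate yields only the union of its operands' sets. Therefore any input that actually influences the slot lies in a set of size at most $2^D$, which closes the argument.
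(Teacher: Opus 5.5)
Your proposal is correct and follows the paper's proof essentially verbatim. You apply the binary dependency-growth lemma to the output at logical position $n-1$: this gives $2^D\geq 2^m$ in the inclusive case, and it rules out $D\leq m-1$ via $2^{m-1}<2^m-1$ (valid for $m\geq 2$) in the exclusive case.

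Your extra witness-monoid step, using the free monoid to show that the last prefix genuinely depends on each relevant input, is a point the paper simply asserts. It is sound provided the circuit is monoid-uniform (only identity constants) or the fixed monoid is nondegenerate, since for the trivial monoid a depth-$0$ circuit suffices.
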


\begin{proof}
For an inclusive scan, the output at logical position $n-1$ is
$x_0\diamond\cdots\diamond x_{n-1}$ and therefore depends on all
$n=2^m$ inputs.  By \cref{lem:binary-dependency-growth}, a depth-$D$ output
can depend on at most $2^D$ inputs.  Hence $2^D\geq 2^m$, which implies
$D\geq m$.

For an exclusive scan, the output at logical position $n-1$ is
$x_0\diamond\cdots\diamond x_{n-2}$ and depends on $2^m-1$ inputs.  If
$D\leq m-1$, then \cref{lem:binary-dependency-growth} would give at most
$2^{m-1}$ dependencies.  Since $m\geq 2$ implies
$2^{m-1}<2^m-1$, this is impossible.  Thus $D\geq m$ also for the exclusive
scan.  The final statement follows from the inclusive argument whenever the
specified output depends on all $n$ inputs.
\end{proof}

\subsection{A global rotation lower bound}
\label{sec:lower-rotation}

The next argument tracks physical displacement rather than monoid depth.  Let
the circuit contain $R$ rotation gates, listed in a topological order, with
offsets
\[
  \delta_1,\ldots,\delta_R\in\Z_n.
\]
Repeated offsets are listed repeatedly because the cost measure counts gate
invocations.

\begin{lemma}[Subset-displacement bound]
\label{lem:subset-displacement}
Fix an output slot $p\in\Z_n$.  If the input at slot $q\in\Z_n$ can affect
that output, then
\begin{equation}
  p-q
  \equiv
  \sum_{j\in J}\delta_j
  \pmod n
  \label{eq:lower-path-subset-sum}
\end{equation}
for some subset $J\subseteq\{1,\ldots,R\}$.  Consequently, at most $2^R$
distinct input slots can affect one fixed output slot.
\end{lemma}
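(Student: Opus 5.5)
I will prove the lemma by unrolling the packed circuit into a slot-labelled dependency graph and tracking physical displacement along paths. For every gate $v$ of the circuit and every slot $p\in\Z_n$ there is a node $(v,p)$. I will place a directed edge $(u,q)\to(v,p)$ whenever the value $v[p]$ can read $u[q]$ directly, so the edges are exactly the following.

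\begin{enumerate}[leftmargin=*,label=(\roman*),itemsep=0.2em,topsep=0.25em]
  \item Copying and fanout give an edge from $(u,p)$ to $(v,p)$.
  \item Public slotwise selection gives an edge from $(u,p)$ to $(v,p)$ for each selected operand $u$.
  \item A packed monoid gate $\operatorname{Comp}(U,V)$ gives edges from $(U,p)$ and from $(V,p)$ to the output node at slot $p$.
  \item A rotation gate $\rho_{\delta_j}$ gives, by \cref{eq:model-rotation-convention}, the single edge $(u,p-\delta_j)\to(v,p)$.
\end{enumerate}

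The definition of "can affect" I will use is the existence of a directed path from the input node $(x,q)$ to the output node $(\mathrm{out},p)$. This is the natural formalization, since all gates act slotwise except rotations. Every value-level influence is therefore witnessed by such a path, by induction in topological order.

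The core step assigns a displacement to each edge. Edges of types (i)--(iii) preserve the slot index and contribute $0$. An edge arising from the $j$-th rotation gate contributes $\delta_j$. Along any directed path, the total displacement $p-q$ is therefore the sum of the offsets of the rotation gates it traverses. Because the circuit is acyclic, a directed path visits each gate at most once. Hence each rotation gate $j$ is traversed at most once, and the set $J$ of traversed rotation indices is a genuine subset of $\{1,\ldots,R\}$, which gives \cref{eq:lower-path-subset-sum}. Repeated offsets cause no trouble, because $J$ indexes invocations, not offset values.

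For the counting consequence, the map $q\mapsto p-q$ is injective on $\Z_n$. Every affecting $q$ therefore lies in the set $\{p-\sum_{j\in J}\delta_j : J\subseteq\{1,\ldots,R\}\}$, and this set has at most $2^R$ elements. This step is routine.

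The main point requiring care is the claim that a path uses each rotation gate at most once. This is where acyclicity of the original circuit is used. The slot-labelled expansion is layered by the gate's topological index, and a path strictly increases that index, so it cannot revisit the same gate $\rho_{\delta_j}$ even at a different slot. I will state this explicitly, because without it the displacement would lie in the larger set of multisums $\sum_j c_j\delta_j$ rather than subset sums.

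I will also remark that helper vectors, branching and recomputation only add nodes and edges of the same types. The argument is therefore unchanged by them, as the surrounding text claims.
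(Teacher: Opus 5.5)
Your proposal is correct and follows essentially the same argument as the paper. Both proofs expand the circuit into slot-labelled nodes, observe that only rotation edges shift the slot label, use acyclicity to ensure each rotation invocation is crossed at most once (so the displacement is a genuine subset sum), and then count the $2^R$ subsets.
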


\begin{proof}
Expand every packed circuit node into $n$ slot-labelled nodes.  A local gate
has edges only between nodes carrying the same slot label.  Under the
rotation convention of \cref{eq:model-rotation-convention}, the slot-$r$
input of a gate $\rho_{\delta_j}$ is connected to its slot-$(r+\delta_j)$
output.  Thus crossing that gate increases the slot label by $\delta_j$
modulo $n$.

If input slot $q$ affects output slot $p$, the expanded dependency graph
contains a directed path from $q$ to $p$.  Let $J$ be the set of rotation-gate
invocations crossed by this path.  The circuit is acyclic, so a path cannot
cross the same gate invocation twice.  All local edges preserve the slot
label, while each crossed rotation gate contributes its offset.  Telescoping
the successive slot labels along the path therefore gives
\[
  p\equiv q+\sum_{j\in J}\delta_j\pmod n,
\]
which is \cref{eq:lower-path-subset-sum}.

There are $2^R$ subsets of the $R$ gate invocations.  Different subsets may
produce the same residue, but no additional displacement can arise.
Therefore at most $2^R$ input-slot displacements, and hence at most $2^R$
input slots, can reach a fixed output slot.
\end{proof}

\begin{theorem}[Global rotation lower bound]
\label{thm:rotation-lower-bound}
Let $n=2^m$ with $m\geq 2$.  Every packed circuit computing all inclusive or
all exclusive ordered prefixes requires
\[
  R\geq m
\]
cyclic-rotation invocations.  The same bound holds whenever one output depends
on all $n$ inputs.
\end{theorem}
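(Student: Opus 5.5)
The argument follows the same pattern as \cref{thm:depth-lower-bound}, with \cref{lem:subset-displacement} playing the role that \cref{lem:binary-dependency-growth} played there. The plan is to exhibit one output slot that genuinely depends on many inputs, and then bound the number of reachable inputs by $2^R$.

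\textbf{Inclusive case.} Take the output at logical position $n-1$, stored at physical slot $p=L(n-1)=\rev_m(n-1)$. It carries $x_0\diamond\cdots\diamond x_{n-1}$. Because the circuit must compute this prefix for every input assignment in every monoid, each input $x_i$ must be able to affect slot $p$. Formally, take a free monoid: if some $x_i$ cannot reach $p$ in the expanded dependency graph, then changing $x_i$ leaves the value at $p$ unchanged, while the word $x_0\cdots x_{n-1}$ does change. The input $x_i$ sits at physical slot $L(i)$, and $L$ is a bijection. Hence all $n=2^m$ physical input slots lie in the dependency set of $p$. \Cref{lem:subset-displacement} gives $2^m\leq 2^R$, so $R\geq m$.

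\textbf{Exclusive case.} Take the output at logical position $n-1$, which carries $x_0\diamond\cdots\diamond x_{n-2}$. It depends on $2^m-1$ distinct input slots. The lemma gives $2^R\geq 2^m-1$. If $R\leq m-1$, then $2^R\leq 2^{m-1}<2^m-1$ because $m\geq 2$, a contradiction. Hence $R\geq m$.

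\textbf{Final clause.} The statement about any single output depending on all $n$ inputs is exactly the inclusive argument applied to that output.

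\textbf{Main obstacle.} The only delicate step is justifying that semantic dependence forces a path in the expanded graph. The circuit might in principle use helper vectors or public constants that coincidentally cancel a dependency. This is handled by the free-monoid argument above. Alternatively, one can argue directly that a slot value is a function only of the inputs reachable from it, because local gates act slotwise and rotations are the only label-changing edges. A requested output that omits a required $x_i$ therefore cannot be correct on all inputs.
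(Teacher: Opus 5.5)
Your proof is correct and follows the paper's argument exactly: you take the output at logical position $n-1$, bound the number of input slots that can reach it by $2^R$ via the subset-displacement lemma, and in the exclusive case use $2^{m-1}<2^m-1$ for $m\geq 2$. Your justification that semantic dependence forces a path in the expanded graph is a detail the paper leaves implicit. It also works in any fixed monoid with at least two elements: setting all other inputs to $e$ makes the output equal to $x_i$. So you do not need to quantify over all monoids.
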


\begin{proof}
For an inclusive scan, the output at logical position $n-1$ depends on all
$2^m$ input slots.  By \cref{lem:subset-displacement}, one fixed output can
have at most $2^R$ input-slot dependencies, and hence
$2^R\geq 2^m$.  Taking binary logarithms gives $R\geq m$.

For an exclusive scan, the output at logical position $n-1$ depends on
$2^m-1$ distinct input slots.  If $R\leq m-1$, then
\[
  2^R\leq 2^{m-1}<2^m-1,
\]
contradicting \cref{lem:subset-displacement}.  Therefore $R\geq m$.  If a
specified output depends on all $n$ inputs, the first counting argument
applies directly.
\end{proof}

The proof does not assume an interval normal form, fixed predecessor shifts,
one live state, or any particular placement of intermediate aggregates.  It
also remains valid when one rotated state is reused by several later gates.
It would cease to apply only if additional uncounted operations were allowed
to move dependencies between slots.

\subsection{Equality rigidity and complete two-adic bases}
\label{sec:lower-rigidity}

The counting argument leaves no slack when an output depending on all slots
is obtained with exactly $m$ rotations.  For a nonzero residue
$\delta\in\Z_{2^m}$, let $\nuTwo(\delta)$ denote the $2$-adic valuation of its
least nonnegative integer representative.

\begin{lemma}[Complete two-adic basis]
\label{lem:complete-two-adic-basis}
Let $\delta_0,\ldots,\delta_{m-1}\in\Z_{2^m}$.  The subset-sum map
\begin{equation}
  \varphi:\{0,1\}^m\longrightarrow\Z_{2^m},
  \qquad
  \varphi(b_0,\ldots,b_{m-1})
  =\sum_{j=0}^{m-1}b_j\cdot\delta_j
  \label{eq:lower-subset-sum-map}
\end{equation}
is bijective if and only if, after reordering the offsets,
\begin{equation}
  \nuTwo(\delta_j)=j
  \qquad\text{for every }0\leq j<m.
  \label{eq:lower-complete-valuations}
\end{equation}
\end{lemma}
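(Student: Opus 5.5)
We prove the two directions separately. The "if" direction is elementary. The "only if" direction uses the group-ring factorization mentioned in the overview.

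\emph{Sufficiency.} Suppose $\nuTwo(\delta_j)=j$ and write $\delta_j=2^j\cdot u_j$ with $u_j$ odd. Since domain and codomain both have $2^m$ elements, it suffices to show that $\varphi$ is injective. Suppose $\varphi(b)=\varphi(b')$ with $b\neq b'$, and let $k$ be the least index where they differ. Then
\[
  \sum_{j\geq k}(b_j-b'_j)\cdot\delta_j\equiv 0\pmod{2^m}.
\]
Every term with $j>k$ is divisible by $2^{k+1}$. The term with $j=k$ equals $\pm 2^k\cdot u_k$, which has valuation exactly $k<m$. Hence the sum has valuation exactly $k$ and cannot vanish modulo $2^m$, a contradiction.

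\emph{Necessity.} Assume $\varphi$ is bijective. In the group ring $\Z[X]/(X^{2^m}-1)$ this says exactly that
\[
  \prod_{j=0}^{m-1}\bigl(1+X^{\delta_j}\bigr)=\sum_{k=0}^{2^m-1}X^k.
\]
We evaluate both sides at a primitive $2^{t}$-th root of unity $\zeta$, for $1\leq t\leq m$. Since $\zeta\neq 1$ and $\zeta^{2^m}=1$, the right side vanishes. So some factor satisfies $1+\zeta^{\delta_j}=0$, that is, $\zeta^{\delta_j}=-1$.

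This happens exactly when $\delta_j\equiv 2^{t-1}\pmod{2^t}$ up to an odd multiple, i.e.\ when $\nuTwo(\delta_j)=t-1$. The case $\delta_j=0$ is excluded, since then $\zeta^{\delta_j}=1$.

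Thus for every $t\in\{1,\ldots,m\}$ at least one offset has valuation $t-1$. These are $m$ distinct valuations $0,\ldots,m-1$ distributed over only $m$ offsets, so each offset realizes exactly one of them. Reordering gives \cref{eq:lower-complete-valuations}.

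The main care point is the necessity step: we must justify that the identity holds in the group ring, and that evaluation at $\zeta$ is a well-defined ring homomorphism on the quotient. It is well defined because $\zeta^{2^m}=1$. The identity holds because bijectivity means every residue is hit exactly once as a subset sum, with multiplicity one.

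We should also note that $\nuTwo$ is computed on representatives in $\{1,\ldots,2^m-1\}$, so valuations are at most $m-1$. This makes the equivalence ``$\zeta^{\delta}=-1$ iff $\nuTwo(\delta)=t-1$'' well defined on residues modulo $2^m$. The pigeonhole conclusion is then immediate.
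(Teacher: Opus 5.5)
Your proposal is correct and follows essentially the same route as the paper: for necessity, the group-ring identity in $\Z[X]/(X^{2^m}-1)$ is evaluated at primitive $2$-power roots of unity and combined with pigeonhole; for sufficiency, a least-differing-index valuation argument gives injectivity and hence bijectivity. The differences are cosmetic, such as indexing the roots by $2^t$ with $1\leq t\leq m$ instead of $2^{k+1}$ with $0\leq k<m$.
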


\begin{proof}
Choose the representatives of the $\delta_j$ in
$\{0,\ldots,2^m-1\}$.

\emph{Necessity.}
Assume that $\varphi$ is bijective.  Expanding the product
$\prod_j(1+X^{\delta_j})$ produces one monomial for each subset of the
offsets.  In the quotient by $X^{2^m}-1$, exponents are reduced modulo
$2^m$.  Bijectivity says that every residue occurs as a subset sum exactly
once, and therefore
\begin{equation}
  \prod_{j=0}^{m-1}\bigl(1+X^{\delta_j}\bigr)
  =1+X+\cdots+X^{2^m-1}
  \quad\text{in }\Z[X]/(X^{2^m}-1).
  \label{eq:lower-group-ring-factorization}
\end{equation}

Fix $k\in\{0,\ldots,m-1\}$ and let $\zeta$ be a primitive
$2^{k+1}$-st root of unity.  Since $2^{k+1}$ divides $2^m$, one has
$\zeta^{2^m}=1$; moreover $\zeta\neq 1$.  Evaluation at $\zeta$ is
well-defined on the quotient ring, and the geometric series on the
right-hand side satisfies
\[
  1+\zeta+\cdots+\zeta^{2^m-1}
  =\frac{\zeta^{2^m}-1}{\zeta-1}
  =0.
\]
Hence the product on the left vanishes in $\mathbb{C}$, so at least one
factor vanishes.  For some $j$,
\[
  1+\zeta^{\delta_j}=0,
  \qquad\text{that is,}\qquad
  \zeta^{\delta_j}=-1.
\]
In the cyclic group generated by $\zeta$, the unique element of order two is
$\zeta^{2^k}=-1$.  Consequently
\[
  \delta_j\equiv 2^k\pmod{2^{k+1}},
\]
which is equivalent to $\nuTwo(\delta_j)=k$.  Thus every valuation
$k\in\{0,\ldots,m-1\}$ occurs among the $m$ offsets.  There are exactly $m$
offsets, so each valuation occurs exactly once.  In particular, no offset is
zero.

\emph{Sufficiency.}
Conversely, reorder the offsets and choose integer representatives
\[
  \delta_j=2^j\cdot u_j,
  \qquad u_j\text{ odd}.
\]
Suppose two binary vectors have the same image under $\varphi$.  Subtracting
the corresponding subset sums yields
\begin{equation}
  \sum_{j=0}^{m-1}\varepsilon_j\cdot\delta_j
  \equiv 0\pmod{2^m},
  \qquad
  \varepsilon_j\in\{-1,0,1\}.
  \label{eq:lower-signed-subset-relation}
\end{equation}
If the vectors differ, let $t$ be the least index for which
$\varepsilon_t\neq 0$.  Every term in
\cref{eq:lower-signed-subset-relation} is divisible by $2^t$, so division by
$2^t$ gives a congruence modulo $2^{m-t}$.  Reducing that congruence modulo
$2$ leaves
\[
  \varepsilon_t\cdot u_t\equiv 0\pmod 2,
\]
because every term with index greater than $t$ contains an additional factor
of two.  But $\varepsilon_t\in\{-1,1\}$ and $u_t$ is odd, so the left-hand
side is odd, a contradiction.  The two binary vectors are therefore equal.
Thus $\varphi$ is injective.  Its domain and codomain both contain $2^m$
elements, so it is bijective.
\end{proof}

\begin{corollary}[Rigidity of rotation-optimal equality]
\label{cor:rotation-equality-rigidity}
Suppose a packed circuit uses exactly $R=m$ rotation invocations and produces
one output depending on all $2^m$ input slots.  If the invoked offsets are
$\delta_0,\ldots,\delta_{m-1}$, then, after reordering,
\[
  \nuTwo(\delta_j)=j
  \qquad\text{for }0\leq j<m.
\]
In particular, every offset is nonzero, the offsets are pairwise distinct,
and their subset sums cover every residue exactly once.
\end{corollary}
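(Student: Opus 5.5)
The plan is to reduce the corollary to \cref{lem:complete-two-adic-basis} by showing that the subset-sum map of the $m$ invoked offsets is surjective, hence bijective. Fix the output slot $p$ whose value depends on all $2^m$ input slots, and list the $R=m$ rotation invocations as $\delta_0,\ldots,\delta_{m-1}$, with repetitions kept because the count is over invocations. For each input slot $q\in\Z_{2^m}$, \cref{lem:subset-displacement} supplies a subset $J_q\subseteq\{0,\ldots,m-1\}$ such that
\[
  p-q\equiv\sum_{j\in J_q}\delta_j\pmod{2^m}.
\]
As $q$ ranges over all $2^m$ slots, the differences $p-q$ range over all of $\Z_{2^m}$. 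Hence the map $\varphi$ of \cref{eq:lower-subset-sum-map}, identifying $J$ with its indicator vector in $\{0,1\}^m$, is surjective onto $\Z_{2^m}$.

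Domain and codomain both have exactly $2^m$ elements, so surjectivity upgrades to bijectivity. This is also the second half of the final clause: every residue is covered exactly once. Applying the necessity direction of \cref{lem:complete-two-adic-basis} then gives a reordering with $\nuTwo(\delta_j)=j$ for $0\leq j<m$.

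The remaining claims follow directly. A zero offset has no finite $2$-adic valuation in $\{0,\ldots,m-1\}$, so under the convention of \cref{lem:complete-two-adic-basis} every offset is nonzero. One could also argue by injectivity: if some $\delta_j$ were zero, the subsets $\{j\}$ and $\emptyset$ would have equal sums. Two offsets with distinct valuations are distinct residues, so the offsets are pairwise distinct; injectivity gives the same conclusion, since equal $\delta_j=\delta_k$ with $j\ne k$ would make $\{j\}$ and $\{k\}$ collide.

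The only delicate point is bookkeeping. Repeated invocations must be treated as separate coordinates of $\varphi$, so that \cref{lem:subset-displacement} (which quantifies over invocations) and \cref{lem:complete-two-adic-basis} (which quantifies over an $m$-tuple) refer to the same list. Once that identification is made, the argument is a pigeonhole step followed by a citation.
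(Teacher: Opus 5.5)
Your proposal is correct and follows the paper's proof essentially step for step: surjectivity of the subset-sum map from \cref{lem:subset-displacement}, bijectivity by the cardinality count $2^m=2^m$, and then the necessity direction of \cref{lem:complete-two-adic-basis}. Your injectivity arguments for nonzero and pairwise distinct offsets are a harmless alternative to the paper's reading of these facts off the distinct valuations.
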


\begin{proof}
Fix the physical output slot $p$.  As the source slot $q$ ranges over
$\Z_{2^m}$, the displacement $p-q$ ranges over every residue of
$\Z_{2^m}$ exactly once.  By hypothesis, every source affects the output, so
\cref{lem:subset-displacement} associates each of these residues with a
subset of the $m$ invoked offsets.  Therefore the subset-sum map in
\cref{eq:lower-subset-sum-map} is surjective.  Its domain and codomain both
have cardinality $2^m$, so it is bijective.  The valuation statement follows
from \cref{lem:complete-two-adic-basis}.  Distinct valuations imply that the
offsets are nonzero and pairwise distinct, while bijectivity gives the final
subset-sum claim.
\end{proof}

\Cref{cor:rotation-equality-rigidity} characterizes the offset profile, not
the complete topology of an optimal circuit.  The canonical powers of two
are only one representative: every family
$\delta_j=2^j\cdot u_j$ with odd $u_j$ has the same subset-sum property.
Whether such a family is embedded into a correct ordered-prefix topology is a
separate constructive question.

\subsection{Exact cost of transported predecessors}
\label{sec:lower-direct}

The global lower bound is linear, but the natural recursive-doubling scan is
more expensive when each logical predecessor is transported directly in the
bit-reversed layout.  At stage $t\in\{0,\ldots,m-1\}$, that topology asks
logical position $i\geq 2^t$ to read the current state at $i-2^t$.  Under the
layout $L(i)=\rev_m(i)$, define the required physical displacement
\begin{equation}
  \Delta_t(i)
  \defeq
  L(i)-L(i-2^t)
  \pmod{2^m}.
  \label{eq:lower-direct-displacement}
\end{equation}
One masked rotation can serve every destination having the same displacement,
but destinations with different displacements require different rotated
copies of the stage-input vector.

\begin{proposition}[Cyclic diagonals of one predecessor stage]
\label{prop:direct-stage-diagonals}
For fixed $t$, the set
\[
  \{\Delta_t(i):2^t\leq i<2^m\}
\]
contains exactly $m-t$ distinct residues.  Their $2$-adic valuations are
$0,1,\ldots,m-t-1$.
\end{proposition}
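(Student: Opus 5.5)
The plan is to compute $\Delta_t(i)$ in closed form by tracking how the borrow in $i\mapsto i-2^t$ changes the binary expansion \cref{eq:model-binary-expansion}, and then push that change through the reversal \cref{eq:model-bit-reversal}. Fix $t$ and $i$ with $2^t\le i<2^m$. Since $i\ge 2^t$, some bit $i_j$ with $j\ge t$ equals $1$. Let $k\in\{t,\ldots,m-1\}$ be the least such index. Then $i_t=\cdots=i_{k-1}=0$ and $i_k=1$. Subtracting $2^t$ clears bit $k$, sets bits $t,\ldots,k-1$ to $1$, and leaves every other bit unchanged. This is the borrow chain mentioned in \cref{sec:overview-direct}.

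Applying $\rev_m$, the bits of $i$ and of $i-2^t$ agree outside physical positions $m-1-j$ for $t\le j\le k$. The difference of the reversed values is therefore
\[
  \Delta_t(i)\equiv 2^{m-1-k}-\sum_{j=t}^{k-1}2^{m-1-j}
  =2^{m-1-k}-\bigl(2^{m-t}-2^{m-k}\bigr)
  =3\cdot 2^{m-1-k}-2^{m-t}\pmod{2^m}.
\]
So $\Delta_t(i)$ depends only on $k$. Each value $k\in\{t,\ldots,m-1\}$ is realized, for instance by $i=2^k$, which lies in the admissible range. Hence the displacement set is the image of exactly these $m-t$ values of $k$.

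It remains to show that these $m-t$ residues are distinct. I will do this by computing their valuations. Put $v=m-1-k\in\{0,\ldots,m-t-1\}$, so that $\Delta=2^v\bigl(3-2^{m-t-v}\bigr)$.
\begin{itemize}
  \item If $m-t-v\ge 2$, the factor $3-2^{m-t-v}$ is odd. Hence $\nuTwo(\Delta)=v$.
  \item If $m-t-v=1$, which is the case $k=t$, then $\Delta=3\cdot 2^v-2^{v+1}=2^v$. Again $\nuTwo(\Delta)=v$.
\end{itemize}
In both cases $v<m$, so $\Delta\not\equiv 0\pmod{2^m}$ and the valuation of the least nonnegative representative is well defined and equals $v$. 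As $k$ ranges over $t,\ldots,m-1$, the valuations run over $0,\ldots,m-t-1$ with no repetition. Residues with different valuations are different, so there are exactly $m-t$ distinct residues, with the claimed valuations.

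The main obstacle is not conceptual but bookkeeping. The step most likely to go wrong is the sign and orientation of the reversed borrow chain, because the chain occupies the \emph{high} physical bits. I will therefore sanity-check the formula against \cref{eq:overview-direct-displacements}. For $m=3$ and $t=0$, the values $k=0,1,2$ give $4$, $-2\equiv 6$ and $-5\equiv 3$, matching $\Delta_0=\{3,4,6\}$.
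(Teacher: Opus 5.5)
Your proposal is correct and follows the paper's proof essentially step for step. It uses the same borrow-chain index (the paper's $q$), the same closed form $3\cdot 2^{m-1-q}-2^{m-t}$, the same witnesses $i=2^q$, and distinctness via pairwise distinct $2$-adic valuations. Your separate case $k=t$ is harmless but unnecessary, since $3-2^{k-t+1}$ is odd for every $k\geq t$ (it equals $1$ when $k=t$).
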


\begin{proof}
Write $i_r\in\{0,1\}$ for bit $r$ of $i$, with bit zero least significant.
Because $i\geq 2^t$, at least one bit at position $t$ or above is one.  Let
\[
  q=q(i,t)
  \defeq
  \min\{r\geq t:i_r=1\}.
\]
The bits $i_t,\ldots,i_{q-1}$ are zero and $i_q=1$.  Subtracting $2^t$
therefore changes bit $q$ from one to zero, changes bits
$t,\ldots,q-1$ from zero to one, and leaves every other bit unchanged.

Bit reversal sends logical bit $r$ to physical bit $m-1-r$.  Hence the
physical difference $L(i)-L(i-2^t)$ receives a positive contribution
$2^{m-1-q}$ from bit $q$ and a negative contribution
$2^{m-1-\ell}$ from each bit $\ell\in\{t,\ldots,q-1\}$.  Thus
\begin{align}
  \Delta_t(i)
  &\equiv
    2^{m-1-q}
    -\sum_{\ell=t}^{q-1}2^{m-1-\ell}
    \pmod{2^m}
  \notag\\
  &\equiv
    2^{m-1-q}
    -\bigl(2^{m-t}-2^{m-q}\bigr)
    \pmod{2^m}
  \notag\\
  &\equiv
    3\cdot 2^{m-1-q}-2^{m-t}
    \pmod{2^m}
  \notag\\
  &\equiv
    2^{m-1-q}\cdot\bigl(3-2^{q-t+1}\bigr)
    \pmod{2^m}.
  \label{eq:lower-direct-displacement-formula}
\end{align}
The displayed geometric-sum identity also covers $q=t$, where the sum is
empty.  Since $q-t+1\geq 1$, the factor
$3-2^{q-t+1}$ is odd.  Moreover $m-1-q<m$, so reduction modulo $2^m$ does not
change the exact $2$-adic valuation.  Therefore
\[
  \nuTwo\bigl(\Delta_t(i)\bigr)=m-1-q.
\]

The formula depends on $i$ only through $q$.  Conversely, every
$q\in\{t,\ldots,m-1\}$ occurs, for example by choosing $i=2^q$.  The
corresponding valuations are
$m-t-1,m-t-2,\ldots,0$, which are pairwise distinct.  Hence there is exactly
one displacement class for each such $q$, and the total number of classes is
$m-t$.
\end{proof}

\begin{corollary}[Triangular cost of the direct topology]
\label{cor:direct-triangular-cost}
The direct transported-predecessor realization of recursive doubling uses
exactly
\begin{equation}
  R_{\mathrm{direct}}
  =\sum_{t=0}^{m-1}(m-t)
  =\frac{m\cdot(m+1)}{2}
  \label{eq:lower-direct-triangular-cost}
\end{equation}
rotation invocations.
\end{corollary}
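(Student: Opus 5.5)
The plan is to read the corollary as a statement about the concrete direct realization and derive the count from \cref{prop:direct-stage-diagonals}. First I fix the topology precisely. At stage $t$, every destination $i\geq 2^t$ reads the stage-input state at logical position $i-2^t$. In physical terms, slot $L(i)$ needs the value from slot $L(i)-\Delta_t(i)$. Under the convention \cref{eq:model-rotation-convention}, this is exactly $\rho_{\Delta_t(i)}(V)[L(i)]$. Destinations $i<2^t$ keep their value, which a public selection handles.

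\textbf{Upper bound.} I partition the destinations of stage $t$ by displacement class. For each distinct residue $\delta$ in $\{\Delta_t(i):2^t\leq i<2^m\}$:
\begin{enumerate}[leftmargin=*,label=(\arabic*)]
  \item apply one rotation $\rho_\delta$ to the stage-input vector;
  \item use a public mask to select the destinations whose displacement equals $\delta$.
\end{enumerate}
Any wraparound sources that land in the wrong slots are exactly the non-selected ones, so they are discarded. The selected operands from all classes are then combined into a single operand vector, with the identity placed at $i<2^t$. One packed monoid gate per stage finishes that stage. This realization uses one rotation per class, which by \cref{prop:direct-stage-diagonals} is $m-t$ rotations at stage $t$.

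\textbf{Lower bound.} I must show that no rotations can be saved inside this topology, and here I need to state what ``realization of the topology'' means. I take it to mean that each stage's predecessor values are read from the stage-input vector by rotations of that vector. I would make this explicit in the proof, since composite rotations would otherwise be a way around the count. Under that reading, $\rho_\delta(V)[p]=V[p-\delta]$. So destination $L(i)$ can obtain its exact predecessor from one rotated copy only if that copy's offset is $\Delta_t(i)$. Destinations in different classes therefore need different rotated copies, and the rotation count at stage $t$ is at least the number of classes. Copies of $V$ cannot be shared across stages either, because each stage reads a freshly composed vector.

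\textbf{Sum.} Adding over the stages gives $\sum_{t=0}^{m-1}(m-t)=\sum_{k=1}^{m}k=m(m+1)/2$. The main obstacle is pinning down the topology restriction, namely one rotation of the stage input per displacement class, so that the exactness claim is meaningful and not contradicted by the global bound of \cref{thm:rotation-lower-bound}. Once that is fixed, the rest follows from \cref{prop:direct-stage-diagonals}.
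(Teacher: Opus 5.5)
Your proposal is correct and follows essentially the same route as the paper. For the lower bound, a rotation by $\delta$ fixes the source displacement at every slot and masks cannot change it, so each of the $m-t$ classes from \cref{prop:direct-stage-diagonals} needs its own rotated copy of the stage-input vector; for the upper bound, you rotate once per class and select with public masks, and the triangular count follows by summing over $t$. Your explicit restriction that each stage reads its exact predecessors by single rotations of that stage's input vector is the same assumption the paper's proof uses implicitly when it fixes the common stage-input vector.
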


\begin{proof}
Fix a stage $t$ and its common stage-input vector.  A rotation by offset
$\delta$ presents at destination slot $p$ the value from slot $p-\delta$.
Public masks may discard destinations, but they cannot change this source
displacement.  Therefore one rotated copy of the stage-input vector can serve
only the destinations belonging to one displacement class.  By
\cref{prop:direct-stage-diagonals}, the $m-t$ nonempty classes require at
least $m-t$ rotated copies.

This lower bound is attained within the direct topology: for each displacement
class $\delta$, rotate the stage-input vector by $\delta$, retain exactly the
destinations of that class with its public mask, and select the resulting
masked copies slotwise.  Thus stage $t$ uses exactly $m-t$ rotations.  Summing
over $t=0,
\ldots,m-1$ yields
\[
  \sum_{t=0}^{m-1}(m-t)
  =m+(m-1)+\cdots+1
  =\frac{m\cdot(m+1)}{2}.
\]
\end{proof}

This is an exact cost for the stated direct topology, not a global lower
bound.  The replicated scan developed in the next section changes the
invariant: it routes an interchangeable copy of a sibling aggregate instead
of the syntactically designated predecessor.

\subsection{Why changing the fixed shifts is insufficient}
\label{sec:lower-fixed-shift}

For completeness, we isolate a restricted rigidity statement explaining why
a better fixed-predecessor schedule cannot remove the triangular cost.
Consider a one-state scan in which stage $j$ uses one positive logical
predecessor distance $s_j$, uniformly across all active destinations, with
public masks suppressing out-of-range predecessors.  Logical indices do not
wrap around in this restricted model.

\begin{proposition}[Rigidity of minimum-depth fixed shifts]
\label{prop:fixed-shift-rigidity}
Let $n=2^m$.  If $m$ fixed-predecessor stages make the final position depend
on all $n$ inputs, then the multiset of logical shifts is
\[
  \{s_0,\ldots,s_{m-1}\}
  =\{1,2,4,\ldots,2^{m-1}\}.
\]
Consequently, every such bit-reversed direct schedule incurs the triangular
rotation count in \cref{eq:lower-direct-triangular-cost}.
\end{proposition}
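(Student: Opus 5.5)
The plan is to show that, in the restricted fixed-predecessor model, every dependency of the final output at logical position $n-1$ corresponds to a subset of the $m$ stages, and then to apply the counting and equality ideas of \cref{lem:complete-two-adic-basis} to integer subset sums, where logical indices do not wrap. In a one-state scan, stage $j$ sets the state at $i$ to the composition of the stage-input state at $i-s_j$ (when $i\geq s_j$) with the state at $i$. By induction over the stages, the state at position $i$ after all $m$ stages depends on input $x_k$ only if
\[
  i-k=\sum_{j\in J}s_j
\]
for some $J\subseteq\{0,\ldots,m-1\}$. A path through the stages either takes the shift at stage $j$ or stays in place, and the masks only delete terms; they never add displacements. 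Taking $i=n-1$ and requiring dependence on all $k\in\{0,\ldots,n-1\}$, the integer subset sums of the multiset $\{s_0,\ldots,s_{m-1}\}$ must cover every integer in $\{0,\ldots,2^m-1\}$. There are only $2^m$ subsets, so each such integer is attained by exactly one subset, and no subset sum exceeds $2^m-1$. In particular,
\[
  \sum_j s_j = 2^m-1 .
\]

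The main step is to identify the multiset, using the standard greedy argument on integers. Sort the shifts so that $s_{(0)}\leq\cdots\leq s_{(m-1)}$. I will prove by induction on $r$ that $s_{(r)}=2^r$ and that the first $r+1$ shifts have subset sums exactly $\{0,\ldots,2^{r+1}-1\}$, each attained once. For the base case, the value $1$ must be a subset sum of positive integers, so $s_{(0)}=1$. For the inductive step, suppose the first $r$ shifts produce exactly $\{0,\ldots,2^r-1\}$. The integer $2^r$ must be represented by some subset. Any subset using a later shift has sum at least $s_{(r)}$, so $s_{(r)}\leq 2^r$. If $s_{(r)}<2^r$, then $s_{(r)}$ itself would also be a subset sum of the first $r$ shifts, which gives two distinct subsets with equal sum and contradicts uniqueness. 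Hence $s_{(r)}=2^r$, and adding it shifts the previous interval to give $\{0,\ldots,2^{r+1}-1\}$. This argument is the analogue of the roots-of-unity step, but in integers it is easier. Alternatively, one can use the total-sum identity together with the fact that a positive shift exceeding the sum of the smaller shifts plus one leaves a gap.

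The consequence for rotations then follows. Once the shifts are $\{1,2,\ldots,2^{m-1}\}$, the stage using logical distance $2^t$ is exactly the predecessor stage analyzed in \cref{prop:direct-stage-diagonals}, with $m-t$ displacement classes. By the argument of \cref{cor:direct-triangular-cost}, a single stage-input vector requires $m-t$ rotated copies. Summing over $t$ gives $m(m+1)/2$, and the order of the stages does not affect this sum.

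The main obstacle I expect is making the first step rigorous: it must be justified that, in the one-state fixed-shift model, dependencies really are restricted to single-use subset sums, with no reuse of a shift within one stage. The intended way around this is to note that each stage reads only its own stage-input vector once at distance $s_j$, so a dependency path crosses each stage at most once.
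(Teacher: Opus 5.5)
Your proposal is correct and follows essentially the same route as the paper. A dependency path crosses each stage at most once, so backward distances are integer subset sums, and counting forces these sums to be exactly $\{0,\ldots,2^m-1\}$, each attained once. Induction on the sorted shifts then gives $s_{(r)}=2^r$, since $s_{(r)}>2^r$ leaves a gap and $s_{(r)}<2^r$ creates a duplicate. The triangular count follows from the per-stage diagonal count, as in the paper.
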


\begin{proof}
Along a dependency path ending at the final logical position, stage $j$
either keeps the current logical position or follows the predecessor edge of
length $s_j$.  Since each stage is encountered at most once, the total
backward logical distance of the path is a subset sum of
$s_0,\ldots,s_{m-1}$.  To make the final position depend on every input, each
distance in $\{0,1,\ldots,2^m-1\}$ must occur.  There are exactly $2^m$
subsets and exactly $2^m$ required distances.  Hence all subset sums are
distinct and, as ordinary nonnegative integers, are precisely
$0,1,\ldots,2^m-1$.

Sort the shifts as $a_0\leq\cdots\leq a_{m-1}$.  Since the smallest positive
subset sum is one, $a_0=1$.  Assume inductively that
\[
  a_j=2^j
  \qquad\text{for }0\leq j<r.
\]
The subset sums of these first $r$ shifts are then exactly the interval
$\{0,1,\ldots,2^r-1\}$.  If $a_r>2^r$, no subset using the remaining shifts
can sum to $2^r$, because every remaining shift is at least $a_r$; this would
create a gap.  If $a_r<2^r$, then the singleton subset $\{a_r\}$ duplicates
the already existing subset sum $a_r$ formed from the first $r$ shifts;
this contradicts uniqueness.  Therefore $a_r=2^r$.  Induction gives the
claimed multiset.

The order of the stages does not affect the sum of their individual direct
routing costs.  A stage using logical shift $2^t$ has exactly $m-t$ physical
displacement classes by \cref{prop:direct-stage-diagonals}.  Summing these
costs over $t=0,\ldots,m-1$ gives the triangular count in
\cref{eq:lower-direct-triangular-cost}.
\end{proof}

The proposition is deliberately restricted to one-state fixed-predecessor
scans.  It does not constrain circuits that maintain replicated aggregates,
change layouts internally, or use more general helper states.  The distinction
between this restricted triangular cost and the global lower bound $R\geq m$
is precisely what motivates the construction of the next section.

\section{Rotation-Optimal Replicated Scan}
\label{sec:replicated-scan}

The lower bounds of \cref{sec:lower-bounds} show that an ordered scan on
$n=2^m$ inputs cannot use fewer than $m$ monoid levels or fewer than $m$
cyclic rotations.  This section gives a construction attaining both bounds
simultaneously.  Its central invariant differs from a transported-predecessor
scan: each aligned block stores its complete aggregate at \emph{every}
physical slot of the block.  A destination therefore does not need the copy
held by one designated logical partner.  Any copy from the sibling block is
semantically interchangeable, and the bit-reversed coset geometry supplies
such a copy with one rotation per level.

We present the exclusive form because it gives each radix digit its carry-in
without a final predecessor shift.  The inclusive variant follows from the
same recurrence with a different initialization.

\subsection{Replicated aggregate and local-prefix states}
\label{sec:replicated-invariant}

For $0\leq d\leq m$ and logical position $i$, let
\begin{equation}
  \beta_d(i)
  \defeq
  2^d\cdot\left\lfloor\frac{i}{2^d}\right\rfloor
  \label{eq:replicated-block-start}
\end{equation}
be the first index of the aligned length-$2^d$ block containing $i$.  At the
beginning of level $d$, the scan maintains two logical views of packed state
vectors:
\begin{align}
  A_i^{(d)}
    &\defeq
      [\beta_d(i){:}\beta_d(i)+2^d-1],
  \label{eq:replicated-aggregate-invariant}\\
  E_i^{(d)}
    &\defeq
      [\beta_d(i){:}i-1].
  \label{eq:replicated-exclusive-invariant}
\end{align}
The state $A_i^{(d)}$ is the aggregate of the complete aligned block, and the
same monoid element is replicated at every position of that block.  The
state $E_i^{(d)}$ is the exclusive prefix from the beginning of that block to
position $i$.  In particular, it equals the identity at the first position of
each block.

The corresponding physical vectors store $A_i^{(d)}$ and $E_i^{(d)}$ at slot
$L(i)=\rev_m(i)$.  To keep the notation readable, all recurrence equations
below are written in logical coordinates; every selection is implemented by
the public physical mask $\mu_d$ from \cref{eq:model-orientation-mask}.

For a bit-valued public mask $\mu$ and packed states $U,V$, write
\begin{equation}
  \operatorname{Sel}_{\mu}(U,V)[p]
  \defeq
  \begin{cases}
    U[p], & \mu[p]=0,\\
    V[p], & \mu[p]=1.
  \end{cases}
  \label{eq:replicated-public-selection}
\end{equation}
This operation performs no monoid composition; it only chooses between two
already available values at each slot.

\subsection{One rotation supplies every sibling aggregate}
\label{sec:replicated-sibling-copy}

At level $d$, the two length-$2^d$ children of a length-$2^{d+1}$ parent
occupy the physical cosets in
\cref{eq:model-child-coset-zero,eq:model-child-coset-one}.  Their separation
is
\begin{equation}
  s_d=2^{m-d-1}.
  \label{eq:replicated-stage-offset}
\end{equation}
The following lemma formalizes the semantic-copy property used by the scan.

\begin{lemma}[Sibling-copy lemma]
\label{lem:replicated-sibling-copy}
Fix $d\in\{0,\ldots,m-1\}$.  Suppose that a packed vector $A^{(d)}$ stores,
at every logical position $i$, the complete aggregate of the aligned
length-$2^d$ block containing $i$.  Let
\begin{equation}
  B^{(d)}\defeq\rho_{s_d}(A^{(d)}).
  \label{eq:replicated-rotated-aggregate}
\end{equation}
Then, at every logical position $i$, the value $B_i^{(d)}$ is the complete
aggregate of the sibling length-$2^d$ child in the same length-$2^{d+1}$
parent block.
\end{lemma}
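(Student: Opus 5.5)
The plan is to unfold the rotation convention \cref{eq:model-rotation-convention} and then use the coset description of aligned blocks from \cref{eq:model-block-coset}. Fix a logical position $i$, let $p=L(i)=\rev_m(i)$, and write $i=u\cdot 2^{d+1}+b\cdot 2^d+r$ as in \cref{eq:overview-index-decomposition}. By definition, $B^{(d)}[p]=A^{(d)}[p-s_d \bmod 2^m]$. It therefore suffices to show that the physical slot $q\defeq p-s_d\bmod 2^m$ is the image $L(i')$ of some logical index $i'$ lying in the sibling child $\mathcal B_{2u+(1-b),d}$. Once this is established, the hypothesis on $A^{(d)}$ says that $A^{(d)}[q]$ is the complete aggregate of that sibling block. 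Which element $i'$ of the sibling we land on is irrelevant, and this is exactly the semantic-copy point.

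The key step is to locate $q$ in the right coset. By \cref{eq:model-block-reversal-identity}, we have $p=\rev_{m-d}(2u+b)+2^{m-d}\cdot\rev_d(r)$, so $p$ lies in the coset $\rev_{m-d}(2u+b)+H_d$. Since the logical bit $b$ is the least significant bit of the $(m-d)$-bit number $2u+b$, reversal sends it to the top bit $m-d-1$ of $\rev_{m-d}(2u+b)$. Hence
\[
\rev_{m-d}(2u+b)=\rev_{m-d-1}(u)+b\cdot s_d ,
\qquad 0\le \rev_{m-d-1}(u)<s_d .
\]
Write $c\defeq\rev_{m-d-1}(u)$. Then the two children of the parent are $c+b\cdot s_d+H_d$ for $b\in\{0,1\}$, which recovers \cref{eq:model-child-coset-zero,eq:model-child-coset-one}.

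Next I compute $q$ modulo the subgroup $H_d=2s_d\Z_{2^m}$, that is, modulo $2^{m-d}=2s_d$. This is legitimate because $2s_d$ divides $2^m$, so reduction modulo $2^m$ is compatible with reduction modulo $2s_d$. We get $q\equiv c+b\cdot s_d-s_d\equiv c+(1-b)\cdot s_d \pmod{2s_d}$, where for $b=0$ we use $-s_d\equiv s_d$. Thus $q\in c+(1-b)\cdot s_d+H_d=L(\mathcal B_{2u+1-b,d})$, again by \cref{eq:model-block-coset}. Because $L$ is a bijection, $q=L(i')$ for a unique $i'$ in the sibling block, which completes the argument.

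I expect the main obstacle to be bookkeeping rather than any deep step. One point is getting the sign convention of $\rho_\delta$ right. The other is handling wraparound in the $b=0$ case, where $p-s_d$ may be negative as an integer and the exact XOR partner is generally not reached. Working modulo $2s_d$ throughout, instead of tracking exact integer addresses, absorbs both issues at once. I would also remark explicitly that $q$ need not equal $L(i\oplus 2^d)$, since the conclusion only needs membership in the sibling coset.
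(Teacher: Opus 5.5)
Your proof is correct and takes essentially the same route as the paper: show that translation by $-s_d$ exchanges the two child cosets of each parent, then use that $A^{(d)}$ is constant on each child. The only difference is in how the wraparound is handled. You reduce modulo $2s_d=2^{m-d}$ (and derive $c=\rev_{m-d-1}(u)$ explicitly from the block-coset identity), whereas the paper enumerates the coset elements $c+2s_dk$ and treats the $k=0$ case separately.
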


\begin{proof}
Fix a parent block and let $c$ be the residue appearing in
\cref{eq:model-child-coset-zero,eq:model-child-coset-one}.  Its lower and
upper child supports are
\begin{align*}
  \mathcal C_0
    &=\{c+2\cdot s_d\cdot k:0\leq k<2^d\},\\
  \mathcal C_1
    &=\{c+s_d+2\cdot s_d\cdot k:0\leq k<2^d\}.
\end{align*}
Under the rotation convention
$\rho_{s_d}(A)[p]=A[p-s_d\bmod 2^m]$, it suffices to prove that subtracting
$s_d$ exchanges these supports.

Let $p=c+2\cdot s_d\cdot k\in\mathcal C_0$.  Then
\[
  p-s_d
  =c+s_d+2\cdot s_d\cdot(k-1).
\]
If $k>0$, this is visibly in $\mathcal C_1$.  If $k=0$, replace $k-1$ by
$2^d-1$: the two expressions differ by
\[
  2\cdot s_d\cdot 2^d=2^m,
\]
so they represent the same slot modulo $2^m$.  Hence every destination in
$\mathcal C_0$ reads a source in $\mathcal C_1$.

Conversely, for $p=c+s_d+2\cdot s_d\cdot k\in\mathcal C_1$,
\[
  p-s_d=c+2\cdot s_d\cdot k\in\mathcal C_0.
\]
Thus the rotation exchanges the two child supports.  It may permute the
positions within the sibling support, but the hypothesis states that
$A^{(d)}$ is constant on each child support and equals that child's complete
aggregate.  Therefore every destination receives the required sibling
aggregate.
\end{proof}

The lemma is stronger than an exact-partner statement in the way needed here:
the rotated source may be a different logical position in the sibling block,
but all such positions carry the same aggregate.  This is what allows one
directed cyclic rotation to realize the two directions of the logical
sibling exchange simultaneously.

\subsection{Construction}
\label{sec:replicated-construction}

At level $d$, first rotate the replicated aggregate vector as in
\cref{eq:replicated-rotated-aggregate}.  For a lower-child position
($i_d=0$), the current aggregate precedes the rotated sibling aggregate.  For
an upper-child position ($i_d=1$), the rotated lower-child aggregate precedes
the current aggregate.  Public selection arranges these operands before one
packed monoid gate:
\begin{align}
  U_A^{(d)}
    &\defeq
      \operatorname{Sel}_{\mu_d}(A^{(d)},B^{(d)}),
  \label{eq:replicated-aggregate-left-operand}\\
  V_A^{(d)}
    &\defeq
      \operatorname{Sel}_{\mu_d}(B^{(d)},A^{(d)}),
  \label{eq:replicated-aggregate-right-operand}\\
  \widetilde A^{(d+1)}
    &\defeq
      \operatorname{Comp}(U_A^{(d)},V_A^{(d)}).
  \label{eq:replicated-aggregate-update}
\end{align}
For the exclusive prefix, lower-child positions keep their existing value,
whereas upper-child positions prepend the complete lower-child aggregate:
\begin{align}
  C_E^{(d)}
    &\defeq
      \operatorname{Comp}(B^{(d)},E^{(d)}),
  \label{eq:replicated-prefix-candidate}\\
  E^{(d+1)}
    &\defeq
      \operatorname{Sel}_{\mu_d}(E^{(d)},C_E^{(d)}).
  \label{eq:replicated-prefix-update}
\end{align}
The aggregate update and prefix candidate use only level-$d$ values and can
therefore be evaluated in parallel.  At the last level, the complete
length-$2^m$ aggregate is not needed after the prefixes are produced, so
\cref{eq:replicated-aggregate-update} is omitted.

\begin{figure}[t]
\centering
\fbox{\begin{minipage}{0.93\linewidth}
\small
\textbf{Input:} a packed vector $X$ with $X[L(i)]=x_i$; the public masks
$\mu_0,\ldots,\mu_{m-1}$; and the all-identity vector $\mathbf e$.

\textbf{Initialization:} $A\leftarrow X$ and $E\leftarrow\mathbf e$.

\begin{enumerate}[leftmargin=1.7em,label=\arabic*.,itemsep=0.25em,topsep=0.35em]
  \item For $d=0,\ldots,m-1$, set
  $s_d\leftarrow2^{m-d-1}$ and $B\leftarrow\rho_{s_d}(A)$.
  \item Compute $C_E\leftarrow\operatorname{Comp}(B,E)$ and set
  $E\leftarrow\operatorname{Sel}_{\mu_d}(E,C_E)$.
  \item If $d<m-1$, set
  \[
    U_A\leftarrow\operatorname{Sel}_{\mu_d}(A,B),
    \qquad
    V_A\leftarrow\operatorname{Sel}_{\mu_d}(B,A),
  \]
  and update $A\leftarrow\operatorname{Comp}(U_A,V_A)$.
\end{enumerate}

\textbf{Output:} $E$, where $E[L(i)]=\Pi_i^{\circ}$ for every logical
position $i$.
\end{minipage}}
\caption{Rotation-optimal exclusive scan in bit-reversed layout.  The prefix
and aggregate compositions at a nonfinal level depend only on the previous
level and may be evaluated in parallel.}
\label{fig:replicated-exclusive-scan}
\end{figure}

The order of Steps 2 and 3 in \cref{fig:replicated-exclusive-scan} is only
presentational: both must read the old level-$d$ states.  An implementation
must not overwrite $A$ or $E$ before all level-$d$ operands have been formed.

\subsection{Correctness}
\label{sec:replicated-correctness}

\begin{theorem}[Replicated exclusive-prefix invariant]
\label{thm:replicated-exclusive-correctness}
For every $d\in\{0,\ldots,m\}$ and logical position $i$, the conceptual
recurrence in
\cref{eq:replicated-aggregate-update,eq:replicated-prefix-update} satisfies
\begin{align}
  A_i^{(d)}
    &=[\beta_d(i){:}\beta_d(i)+2^d-1],
  \label{eq:replicated-correctness-aggregate}\\
  E_i^{(d)}
    &=[\beta_d(i){:}i-1].
  \label{eq:replicated-correctness-prefix}
\end{align}
In particular, after level $m$,
\begin{equation}
  E_i^{(m)}=[0{:}i-1]=\Pi_i^{\circ}
  \qquad\text{for every }0\leq i<n.
  \label{eq:replicated-final-exclusive-prefix}
\end{equation}
The aggregate equation is required only through $d=m-1$ by the optimized
algorithm, because the final aggregate update is omitted.
\end{theorem}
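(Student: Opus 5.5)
We argue by induction on $d$, proving both invariants \cref{eq:replicated-correctness-aggregate,eq:replicated-correctness-prefix} simultaneously. For the base case $d=0$ we have $\beta_0(i)=i$, so the aggregate invariant reads $A_i^{(0)}=[i{:}i]=x_i$ and the prefix invariant reads $E_i^{(0)}=[i{:}i-1]=e$. These are exactly the initializations $A\leftarrow X$ and $E\leftarrow\mathbf e$ of \cref{fig:replicated-exclusive-scan}.

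For the inductive step, assume both invariants at level $d<m$. Fix a logical position $i$ and write $i=u\cdot 2^{d+1}+i_d\cdot 2^d+r$ as in \cref{eq:overview-index-decomposition}. Set $b=\beta_{d+1}(i)=u\cdot 2^{d+1}$. The lower child is then $[b{:}b+2^d-1]$ and the upper child is $[b+2^d{:}b+2^{d+1}-1]$, and $\beta_d(i)=b+i_d\cdot 2^d$.

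The induction hypothesis says $A^{(d)}$ is constant on each child and equals that child's complete aggregate. This is precisely the hypothesis of \cref{lem:replicated-sibling-copy}, so $B_i^{(d)}$ is the aggregate of the sibling child. Since $\mu_d[L(i)]=i_d$, we split into the two orientations.

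\emph{Lower child} ($i_d=0$). The selection gives $U_A=A^{(d)}$ and $V_A=B^{(d)}$, so
\[
\widetilde A_i^{(d+1)}=[b{:}b+2^d-1]\diamond[b+2^d{:}b+2^{d+1}-1]=[b{:}b+2^{d+1}-1].
\]
The prefix keeps its old value, $E_i^{(d+1)}=E_i^{(d)}=[b{:}i-1]$, which is correct because $\beta_d(i)=b$.

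\emph{Upper child} ($i_d=1$). The operands are swapped, so $U_A=B^{(d)}$ is the lower-child aggregate and $V_A=A^{(d)}$ is the upper one. Their composition is again $[b{:}b+2^{d+1}-1]$. The prefix becomes
\[
E_i^{(d+1)}=C_E=[b{:}b+2^d-1]\diamond[b+2^d{:}i-1]=[b{:}i-1].
\]

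In both cases associativity lets us merge adjacent intervals, and the operand order always places the lower interval first. No commutativity is used. The empty-interval convention $[a{:}a-1]=e$, together with the identity law, covers the case $r=0$ in the upper child.

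The induction goes through; the main subtlety is bookkeeping rather than mathematics. The invariants are stated in logical coordinates, while the gates act on physical slots. The argument therefore relies on two facts. First, slotwise selection at $L(i)$ uses exactly the bit $i_d$, by \cref{eq:model-orientation-mask}. Second, the rotation's possibly inexact source is harmless, because \cref{lem:replicated-sibling-copy} only requires replication, which is part of the induction hypothesis. Both level-$d$ updates must read the old states, which matches the remark after \cref{fig:replicated-exclusive-scan}.

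Finally, at $d=m$ we have $\beta_m(i)=0$, so $E_i^{(m)}=[0{:}i-1]=\Pi_i^\circ$. The level-$(m-1)$ step uses only $A^{(m-1)}$ and never $A^{(m)}$. Omitting the final aggregate update therefore leaves the prefix output unchanged, and the aggregate invariant is needed only through $d=m-1$.
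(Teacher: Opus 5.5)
Your proof is correct and follows the paper's argument essentially step for step. Both use simultaneous induction on $d$ with the initialization as base case, invoke the sibling-copy lemma to identify $B_i^{(d)}$ as the sibling aggregate, split on $i_d$ with the public masks placing the lower interval first, and finish by substituting $\beta_m(i)=0$. Your explicit handling of the empty-interval case $r=0$, and your remark that omitting the final aggregate update leaves $E^{(m)}$ unchanged, are small but welcome additions to the same route.
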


\begin{proof}
We prove the two equations simultaneously by induction on $d$.

For $d=0$, the aligned block containing $i$ is the singleton $\{i\}$.  The
initialization gives
\[
  A_i^{(0)}=x_i=[i{:}i]
\]
and
\[
  E_i^{(0)}=e=[i{:}i-1].
\]
Since $\beta_0(i)=i$, both invariants hold.

Assume they hold at some level $d<m$.  Fix a parent block of length
$2^{d+1}$ and write its first index as $b$.  Its lower child is
$[b,b+2^d-1]$ and its upper child is
$[b+2^d,b+2^{d+1}-1]$.  By the induction hypothesis, $A^{(d)}$ stores the
complete aggregate of each child at every position of that child.  By
\cref{lem:replicated-sibling-copy}, $B^{(d)}$ stores the complete sibling
aggregate at every position.

First suppose that $i$ lies in the lower child, so $i_d=0$.  The induction
hypothesis and the sibling-copy lemma give
\begin{align*}
  A_i^{(d)}&=[b{:}b+2^d-1],\\
  B_i^{(d)}&=[b+2^d{:}b+2^{d+1}-1].
\end{align*}
The public operand selections in
\cref{eq:replicated-aggregate-left-operand,eq:replicated-aggregate-right-operand}
therefore place the lower-child aggregate first and the upper-child aggregate
second.  Associativity and the interval convention imply
\[
  \widetilde A_i^{(d+1)}
  =[b{:}b+2^d-1]\diamond[b+2^d{:}b+2^{d+1}-1]
  =[b{:}b+2^{d+1}-1].
\]
For the prefix update, $\mu_d[L(i)]=0$, so
\[
  E_i^{(d+1)}=E_i^{(d)}=[b{:}i-1].
\]
Because the parent and lower child have the same first index $b$, these are
exactly the two level-$(d+1)$ invariants.

Now suppose that $i$ lies in the upper child, so $i_d=1$.  Then
\begin{align*}
  B_i^{(d)}&=[b{:}b+2^d-1],\\
  A_i^{(d)}&=[b+2^d{:}b+2^{d+1}-1],\\
  E_i^{(d)}&=[b+2^d{:}i-1].
\end{align*}
The selected aggregate operands again appear in logical order, and hence
\[
  \widetilde A_i^{(d+1)}
  =[b{:}b+2^d-1]\diamond[b+2^d{:}b+2^{d+1}-1]
  =[b{:}b+2^{d+1}-1].
\]
Since $\mu_d[L(i)]=1$, the prefix update selects
$C_{E,i}^{(d)}=B_i^{(d)}\diamond E_i^{(d)}$.  Therefore
\[
  E_i^{(d+1)}
  =[b{:}b+2^d-1]\diamond[b+2^d{:}i-1]
  =[b{:}i-1].
\]
This is the exclusive prefix from the beginning of the parent block to $i$.
Thus both invariants hold at level $d+1$ for both child orientations.

At $d=m$, the unique aligned block begins at
$\beta_m(i)=0$.  Substituting this value into
\cref{eq:replicated-correctness-prefix} gives
$E_i^{(m)}=[0{:}i-1]=\Pi_i^{\circ}$, as claimed.
\end{proof}

\paragraph{Inclusive variant.}
Initialize a prefix state $P^{(0)}\leftarrow X$ instead of
$E^{(0)}\leftarrow\mathbf e$, and apply the same update
\[
  P^{(d+1)}
  =\operatorname{Sel}_{\mu_d}
    \bigl(P^{(d)},\operatorname{Comp}(B^{(d)},P^{(d)})\bigr).
\]
The identical induction, with local interval $[\beta_d(i){:}i]$ in place of
$[\beta_d(i){:}i-1]$, gives $P_i^{(m)}=\Pi_i$.  The operation counts are
unchanged.

\subsection{Complexity and exact optimality}
\label{sec:replicated-complexity}

\begin{theorem}[Cost of the replicated scan]
\label{thm:replicated-cost}
For $n=2^m$ with $m\geq2$, the exclusive or inclusive replicated scan uses
\begin{equation}
  D=m,
  \qquad
  R=m,
  \qquad
  T=2\cdot m-1.
  \label{eq:replicated-cost}
\end{equation}
It maintains two persistent monoid-state vectors, together with transient
rotated and composition results.
\end{theorem}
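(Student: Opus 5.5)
The plan is to count directly on the algorithm of \cref{fig:replicated-exclusive-scan}. Correctness is already given by \cref{thm:replicated-exclusive-correctness} and its inclusive variant, so only the three counts and the storage claim remain. Optimality of $D$ and $R$ then follows by comparison with \cref{thm:depth-lower-bound,thm:rotation-lower-bound}.

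\emph{Rotations.} Each level $d\in\{0,\ldots,m-1\}$ invokes exactly one rotation, $B\leftarrow\rho_{s_d}(A)$. This rotated result is fanned out to both the prefix candidate and the aggregate operands, which the model allows without another call. Selections and compositions are slotwise. Hence $R=m$. The offsets $2^{m-d-1}$ form the canonical $2$-adic basis, consistent with \cref{cor:rotation-equality-rigidity}.

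\emph{Work.} At each level there is one $\operatorname{Comp}$ for $C_E$. At each of the levels $d<m-1$ there is additionally one $\operatorname{Comp}$ for the aggregate update. This gives $T=m+(m-1)=2m-1$. The same count holds for the inclusive variant, which only changes the initialization.

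\emph{Depth.} I will prove by induction on $d$ that $A^{(d)}$ and $E^{(d)}$ (or $P^{(d)}$) have monoid depth at most $d$.
\begin{itemize}
\item Base case: at level $0$, both $A^{(0)}$ and $E^{(0)}$ (or $P^{(0)}$) are inputs or constants, so they have depth $0$.
\item Inductive step: rotation and $\operatorname{Sel}$ do not increase depth. Each new state is at most one $\operatorname{Comp}$ applied to level-$d$ values, so its depth is at most $d+1$.
\end{itemize}
This gives $D\le m$. For the matching lower bound, I will not rely on tracing a single path. Instead I invoke \cref{thm:depth-lower-bound}: the output is a correct scan and $m\ge2$, so $D\ge m$, and therefore $D=m$. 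Likewise \cref{thm:rotation-lower-bound} shows that $R=m$ is optimal.

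\emph{Storage.} Across levels, only $A$ and $E$ (or $P$) are carried forward. The vectors $B$, $U_A$, $V_A$, $C_E$ and $\widetilde A$ are consumed within the level. This is the claim of two persistent state vectors.

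The main subtlety is the depth statement. The inductive upper bound must cover both the aggregate and the prefix chains, with the parallel evaluation of the two updates reading only level-$d$ values as noted after the figure. Equality should then be obtained from the global lower bound rather than from an explicit path, which keeps the argument uniform for the exclusive case.
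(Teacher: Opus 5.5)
Your proposal is correct. Your counts of $R$ and $T$ and your storage claim match the paper's proof. Your level-by-level induction for $D\leq m$ is a formal version of the paper's remark that the aggregate and prefix gates of one level are parallel, so any path crosses at most one gate per level.

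The genuine difference is how you obtain $D\geq m$. The paper exhibits an explicit path. At logical position $n-1$ every orientation bit equals one, so the prefix candidate $C_E^{(d)}$ is selected at every level, and the prefix chain at that slot crosses $m$ monoid gates. For the exclusive scan this path must start at the input $X$, entering through $B^{(0)}$ into $C_E^{(0)}$, because $E^{(0)}$ is a public constant. You instead combine \cref{thm:replicated-exclusive-correctness} with the global bound \cref{thm:depth-lower-bound}.

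The paper's path argument is purely structural and self-contained: it needs neither correctness nor any dependency counting. Your route treats the inclusive and exclusive cases uniformly and avoids tracing the circuit by hand. However, it inherits the semantic hypothesis of the lower bound. It is cleanest to note that the circuit is uniform in the monoid and can be evaluated over, e.g., the free monoid. There the last prefix genuinely depends on all of its inputs, so the dependency-growth lemma really does bound the structural depth from below.
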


\begin{proof}
There is exactly one invocation of $\rho_{s_d}$ at each of the $m$ levels,
so $R=m$.

At every nonfinal level $d\in\{0,\ldots,m-2\}$, one packed monoid gate forms
$\widetilde A^{(d+1)}$ and one forms $C_E^{(d)}$.  These two gates use only
level-$d$ inputs and may be evaluated in parallel.  At the final level, only
$C_E^{(m-1)}$ is needed.  Hence
\[
  T=2\cdot(m-1)+1=2\cdot m-1.
\]

Along every input--output path, at most one monoid gate is crossed per level,
because the aggregate and prefix gates of the same level are parallel rather
than sequential.  Thus $D\leq m$.  For the final logical position $i=n-1$,
every bit $i_d$ equals one, so its prefix state is updated by a monoid gate at
every level.  The resulting path has depth $m$, and therefore $D=m$.

The algorithm carries the current aggregate vector $A$ and prefix vector $E$
(or $P$) from one level to the next.  The rotated sibling vector and the two
composition outputs need only be transient, although an implementation may
retain additional buffers to avoid destructive updates.
\end{proof}

\begin{corollary}[Simultaneous depth and rotation optimality]
\label{cor:replicated-exact-optimality}
In the packed circuit model of \cref{def:packed-prefix-circuit}, for
$n=2^m$ and $m\geq2$, the minimum monoid depth and minimum rotation count for
all ordered inclusive or exclusive prefixes satisfy
\begin{equation}
  D^{\star}(m)=R^{\star}(m)=m.
  \label{eq:replicated-exact-optimum}
\end{equation}
\end{corollary}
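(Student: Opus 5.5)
The corollary combines matching lower and upper bounds, so I will prove $D^\star(m)\ge m$, $R^\star(m)\ge m$, and exhibit one circuit attaining both simultaneously. Here $D^\star(m)$ and $R^\star(m)$ are the minima over all circuits of \cref{def:packed-prefix-circuit} computing the inclusive (respectively exclusive) scan in the bit-reversed layout.

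For the lower bounds I will invoke \cref{thm:depth-lower-bound} and \cref{thm:rotation-lower-bound}. Both hold for every packed circuit in the model, with arbitrary helpers, fanout, masks and recomputation, and for both the inclusive and the exclusive problem, using $m\ge2$ in the exclusive case. Neither theorem presupposes a particular topology, so no additional argument is needed beyond noting that the minima range over exactly that class of circuits.

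For the upper bound I will take the replicated scan of \cref{fig:replicated-exclusive-scan}, or its inclusive variant. Correctness in the required physical layout $L(i)=\rev_m(i)$ follows from \cref{thm:replicated-exclusive-correctness}, which rests on \cref{lem:replicated-sibling-copy}. For the inclusive variant I use the same induction with the interval $[\beta_d(i){:}i]$. I also need to check that the construction uses only operations permitted by \cref{def:packed-prefix-circuit}: the rotations $\rho_{s_d}$, the public selections $\operatorname{Sel}_{\mu_d}$ with public masks $\mu_d$, the packed gates $\operatorname{Comp}$, and the all-identity constant vector.

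\Cref{thm:replicated-cost} then gives $D=m$ and $R=m$ for a single circuit. The two minima are therefore attained simultaneously, not merely by separate circuits, which is the content of the word ``simultaneous.''

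The only delicate point is this simultaneity and the interpretation of $D^\star$ and $R^\star$ as separate minima. Since each lower bound holds independently and one circuit meets both, the equality $D^\star(m)=R^\star(m)=m$ follows immediately. No step should be a real obstacle; the argument is bookkeeping over earlier results.
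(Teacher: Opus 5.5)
Your proposal is correct and follows the paper's proof: the lower bounds are \cref{thm:depth-lower-bound,thm:rotation-lower-bound}, and one circuit attains $D=R=m$ at once. That circuit is the replicated scan of \cref{thm:replicated-cost}, correct by \cref{thm:replicated-exclusive-correctness}, so both minima equal $m$ and are achieved simultaneously. Your explicit check that the construction uses only operations permitted by \cref{def:packed-prefix-circuit} is a small piece of care that the paper leaves implicit.
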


\begin{proof}
The construction in \cref{thm:replicated-cost} attains $D=R=m$.
The matching lower bounds are
\cref{thm:depth-lower-bound,thm:rotation-lower-bound}.  Therefore neither
resource can be reduced in the stated model.
\end{proof}

The result is an exact statement about monoid depth and counted online
rotations, not a claim that the replicated scan is faster for every
homomorphic backend.  Relative to the direct transported-predecessor scan,
it reduces rotations from
$m\cdot(m+1)/2$ to $m$, but increases packed monoid work from $m$ to
$2\cdot m-1$ and requires a second persistent state.  The concrete crossover
depends on the ciphertext cost of one monoid composition, key switching,
scale management, and available memory; these quantities are evaluated only
after the carry representation is fixed.

\section{Rotation-Key Frontier}
\label{sec:key-frontier}

The optimal scan of \cref{sec:replicated-scan} uses one effective rotation at
 each of the $m$ two-adic scales.  If the evaluator stores a direct rotation
 key for every required offset, these effective rotations cost exactly $m$
 online calls.  Reducing the directly supported key set saves evaluation-key
 material, but an omitted scale must then be synthesized by a sequence of
 directly keyed rotations.  This section determines the resulting lower and
 upper bounds and proves an exact frontier whenever they coincide.

The distinction between \emph{directly keyed offsets} and \emph{online
calls} is essential.  A key budget $K$ limits the number of distinct offsets
that can be invoked directly; it does not limit how often those offsets may
be used.  The theorem below counts every invocation, including each call in a
sequence synthesizing a larger effective rotation.  It is therefore a
program-specific online-call bound, rather than a byte-level estimate of
backend key storage.  Generic systems may further trade transmitted, stored,
or derived evaluation keys against runtime~\cite{LeeLKNo23}; any offset that
is directly available to the online evaluator belongs to the effective key
set counted here.

Throughout this section, let $1\leq K\leq m$.  We write
$R^\star_{\mathrm{pref}}(m,K)$ for the minimum number of online
cyclic-rotation invocations among packed circuits that compute all ordered
prefixes on $n=2^m$ slots while using at most $K$ distinct directly keyed
nonzero offsets.  The lower bounds below do not assume minimum monoid depth;
the matching constructions retain the optimal depth $m$.

\subsection{Coefficient-box lower bound}
\label{sec:key-product-lower}

Suppose the directly keyed offsets are
$g_1,\ldots,g_K\in\Z_{2^m}$, and offset $g_j$ is invoked $c_j$ times.  We may
pad a smaller key set with zero invocation counts, so the notation also
covers circuits using fewer than $K$ keys.

\begin{theorem}[Direct-key product bound]
\label{thm:key-product-bound}
If one output depends on all $2^m$ input slots, then
\begin{equation}
  \prod_{j=1}^{K}(c_j+1)\geq 2^m.
  \label{eq:key-product-bound}
\end{equation}
Consequently, every circuit computing all inclusive or all exclusive prefixes
satisfies \cref{eq:key-product-bound}.
\end{theorem}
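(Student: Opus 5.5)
The plan is to refine the path argument of \cref{lem:subset-displacement} by grouping rotation invocations according to their offset. Fix an output slot $p$ that depends on all $2^m$ input slots. Expand the circuit into slot-labelled nodes exactly as in the proof of \cref{lem:subset-displacement}. A dependency path from input slot $q$ to $p$ crosses a set $J$ of rotation-gate invocations, and
\[
  p-q\equiv\sum_{j\in J}\delta_j\pmod{2^m}.
\]
Acyclicity ensures that each invocation is crossed at most once.

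Every invoked offset is one of the directly keyed offsets $g_1,\ldots,g_K$, because synthesized effective rotations are charged call by call and the model admits no other routing. So I would partition $J$ by key. Let $a_k$ be the number of crossed invocations whose offset is $g_k$. Since only $c_k$ invocations of $g_k$ exist, $0\leq a_k\leq c_k$. The displacement then depends only on the coefficient vector:
\[
  p-q\equiv\sum_{k=1}^{K}a_k\cdot g_k\pmod{2^m}.
\]
The map $q\mapsto p-q$ is injective, so the number of input slots that can reach $p$ is at most the number of coefficient vectors in the box $\prod_k\{0,\ldots,c_k\}$, namely $\prod_{k}(c_k+1)$. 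Coverage of all $2^m$ slots gives \cref{eq:key-product-bound}.

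For the consequence, I would argue as in \cref{thm:rotation-lower-bound}. For an inclusive scan, the logical position $n-1$ depends on all $2^m$ slots, so the bound applies directly. For an exclusive scan, only $2^m-1$ slots are guaranteed at position $n-1$, so the argument above only gives $\prod(c_k+1)\geq 2^m-1$. This is the one step that needs care, and I see two ways to close it.

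\emph{Ad hoc closing.} The direct argument yields $\prod(c_k+1)\geq 2^m-1$. Equality would force the box map to be injective and to miss exactly the residue corresponding to the unused slot. That residue is $0$, which is always realized by $a=0$, so the missing residue is $p-L(n-1)$, which is not $0$. This injectivity should then be refuted, for instance via a roots-of-unity argument on $\prod_k(1+X^{g_k}+\cdots+X^{c_kg_k})$ as in \cref{lem:complete-two-adic-basis}.

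\emph{Preferred route.} First observe that an exclusive-scan circuit yields an inclusive one with no extra rotations: one slotwise composition $\operatorname{Comp}(E,X)$ produces $\Pi_i$ at each slot. Apply the first part to that inclusive circuit. Its rotation invocations, and hence its counts $c_k$, are unchanged, so the bound transfers to the exclusive circuit. This reduction is the key step I expect to settle the exclusive case cleanly, and it is the one I would use.
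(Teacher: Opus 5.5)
Your preferred route is correct and is essentially the paper's proof. Both use the same coefficient-box count at a fixed output slot. For the exclusive case, both reduce to a locally derived inclusive value (the paper's ``inclusive aggregate obtained locally from the final exclusive state and the last input''), which adds no rotation call and so leaves every $c_k$ unchanged.

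Your ``ad hoc closing'' should be dropped or repaired, though. Since the output sits at $p=L(n-1)$, the one input slot not required is $p$ itself, so the uncovered displacement is exactly $0$, not a nonzero residue. Your own observation that $a=0$ yields displacement $0$ then finishes the argument directly: the $2^m-1$ distinct nonzero displacements need $2^m-1$ distinct nonzero coefficient vectors, so $\prod_k(c_k+1)-1\geq 2^m-1$. The roots-of-unity refutation you sketch could not work in any case. For $m=2$, $K=1$, $g_1=1$, $c_1=2$, the box $\{0,1,2\}$ maps injectively onto $\Z_4\setminus\{3\}$, which is exactly the configuration you proposed to rule out.
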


\begin{proof}
Fix an output slot $p$.  Expand the packed circuit into the slot-labelled
acyclic graph used in the proof of
\cref{lem:subset-displacement}.  Along a dependency path ending at $p$, let
$a_j$ be the number of crossed rotation-gate invocations whose directly keyed
offset is $g_j$.  Since the complete circuit contains exactly $c_j$ such
invocations,
\begin{equation}
  0\leq a_j\leq c_j
  \qquad\text{for every }1\leq j\leq K.
  \label{eq:key-coefficient-range}
\end{equation}
Local edges preserve the physical slot label, while every crossed invocation
of key $g_j$ adds $g_j$.  Therefore the source displacement of the path is
of the form
\begin{equation}
  \sum_{j=1}^{K}a_j\cdot g_j
  \pmod {2^m}.
  \label{eq:key-coefficient-displacement}
\end{equation}
The coefficient vector $(a_1,\ldots,a_K)$ belongs to the box
\begin{equation}
  \prod_{j=1}^{K}\{0,\ldots,c_j\},
  \label{eq:key-coefficient-box}
\end{equation}
which has cardinality $\prod_{j=1}^{K}(c_j+1)$.  Distinct coefficient vectors
may yield the same residue in \cref{eq:key-coefficient-displacement}, but no
path can yield a displacement outside this image.  Hence at most
$\prod_{j=1}^{K}(c_j+1)$ distinct input slots can affect the fixed output.
If that output depends on all $2^m$ inputs, the image must contain every
residue, proving \cref{eq:key-product-bound}.

For an inclusive scan, the last logical prefix depends on all inputs.  For an
exclusive scan with $m\geq2$, the last prefix depends on $2^m-1$ inputs.  If
the product in \cref{eq:key-product-bound} were smaller than $2^m$, then,
because it is an integer product of positive integers, the strongest possible
conclusion for the exclusive output would only be a bound by $2^m-1$.
However, the complete carry-out or the inclusive aggregate obtained locally
from the final exclusive state and the last input depends on all $2^m$
inputs and uses no additional rotation.  Applying the first part to this
locally derived output proves the same product bound for the exclusive scan.
\end{proof}

The product bound can be converted into the strongest lower bound depending
only on $m$ and $K$.  The required optimization is discrete: for a fixed
number of calls, the coefficient box is largest when the invocation counts
are as balanced as possible.

\begin{lemma}[Balanced coefficient box]
\label{lem:key-balanced-box}
Let $c_1,\ldots,c_K\in\Z_{\geq0}$ satisfy
\begin{equation}
  \sum_{j=1}^{K}c_j=R.
  \label{eq:key-count-sum}
\end{equation}
Write
\begin{equation}
  R=a\cdot K+b,
  \qquad
  0\leq b<K.
  \label{eq:key-R-division}
\end{equation}
Then
\begin{equation}
  \prod_{j=1}^{K}(c_j+1)
  \leq
  (a+1)^{K-b}\cdot(a+2)^b.
  \label{eq:key-balanced-product}
\end{equation}
Equality holds if and only if, up to reordering, $K-b$ counts equal $a$ and
$b$ counts equal $a+1$.
\end{lemma}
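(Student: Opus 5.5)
The plan is a discrete smoothing (exchange) argument. The first step is to show that the maximum of $\prod_{j}(c_j+1)$ over nonnegative integer vectors with fixed sum $R$ is attained only at balanced vectors. Since there are finitely many such vectors, a maximizer exists. The inequality then follows once we establish two things: every maximizer is balanced, and every balanced vector has the product $(a+1)^{K-b}(a+2)^b$.

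The key step is the following local move. Suppose two counts satisfy $c_i\geq c_k+2$. Replace them by $c_i-1$ and $c_k+1$. This preserves the sum $R$, and the other factors are unchanged. The product of the two affected factors changes from $(c_i+1)(c_k+1)$ to $c_i(c_k+2)$, and the difference is
\[
  c_i(c_k+2)-(c_i+1)(c_k+1)=c_i-c_k-1\geq 1>0 .
\]
Since every factor $c_j+1$ is at least $1$, the total product strictly increases. Consequently a maximizer cannot contain two counts differing by at least $2$. So all its counts lie in $\{t,t+1\}$ for some integer $t$.

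Next comes the identification of the balanced vectors. Suppose all counts lie in $\{t,t+1\}$ and $\ell$ of them equal $t+1$, where $0\leq \ell<K$. (If all $K$ counts equal $t+1$, relabel $t\mapsto t+1$ and take $\ell=0$.) Then $R=tK+\ell$ with $0\leq\ell<K$. Uniqueness of Euclidean division forces $t=a$ and $\ell=b$. Hence every balanced vector is, up to reordering, the one described in the statement, and its product is $(a+1)^{K-b}(a+2)^b$. This proves \eqref{eq:key-balanced-product}.

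For the equality clause, the "if" direction is the computation just made. For "only if", suppose a vector attains the bound. Then it is a maximizer. By the smoothing step it is balanced, and by the identification step it has the stated shape.

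The only delicate point is making sure the move strictly increases the product, which needs $c_i-c_k-1\geq1$ together with positivity of the remaining factors. Both hold here, so I do not expect a real obstacle. The strictness is exactly what yields the "only if" part of the equality characterization.
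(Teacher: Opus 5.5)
Your proof is correct and is essentially the paper's argument: the same local move from the larger count to the smaller one, the same identity $c_i(c_k+2)-(c_i+1)(c_k+1)=c_i-c_k-1$, and the same use of strict increase for the equality characterization. The only difference is framing. You take a global maximizer, which exists by finiteness, whereas the paper repeats the balancing move from the given vector until all counts differ by at most one. Your version avoids having to argue that this process terminates.
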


\begin{proof}
Suppose two counts satisfy $c_u\geq c_v+2$.  Move one invocation from the
larger count to the smaller one.  The factors involving these two counts
change from
\begin{equation}
  (c_u+1)\cdot(c_v+1)
  \label{eq:key-unbalanced-factors}
\end{equation}
to
\begin{equation}
  c_u\cdot(c_v+2).
  \label{eq:key-balanced-factors}
\end{equation}
Their difference is
\begin{align}
  c_u\cdot(c_v+2)-(c_u+1)\cdot(c_v+1)
  &=c_u-c_v-1 \\
  &>0.
  \label{eq:key-balancing-increase}
\end{align}
Thus any pair of counts differing by at least two can be balanced while
strictly increasing the product and preserving the sum.  Repeating this
operation terminates only when all counts differ by at most one.  Under
\cref{eq:key-R-division}, the unique multiset with that property contains
$K-b$ copies of $a$ and $b$ copies of $a+1$.  Substituting these values yields
\cref{eq:key-balanced-product}.  Strict increase at every nontrivial
balancing step also proves the equality characterization.
\end{proof}

\begin{theorem}[Exact integer counting lower bound]
\label{thm:key-integer-lower}
Every packed prefix circuit using at most $K$ directly keyed offsets satisfies
\begin{equation}
  R^\star_{\mathrm{pref}}(m,K)\geq L(m,K),
  \label{eq:key-L-lower}
\end{equation}
where
\begin{equation}
  L(m,K)
  \defeq
  \min_{\substack{a\in\Z_{\geq0},\ 0\leq b<K}}
  \left\{
    a\cdot K+b:
    (a+1)^{K-b}\cdot(a+2)^b\geq2^m
  \right\}.
  \label{eq:key-L-definition}
\end{equation}
In particular,
\begin{equation}
  R^\star_{\mathrm{pref}}(m,K)
  \geq
  \left\lceil
    K\cdot\bigl(2^{m/K}-1\bigr)
  \right\rceil.
  \label{eq:key-amgm-lower}
\end{equation}
\end{theorem}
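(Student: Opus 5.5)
The plan is to combine \cref{thm:key-product-bound} with \cref{lem:key-balanced-box}, then relax the integer optimization by AM--GM. Take any packed prefix circuit with at most $K$ directly keyed offsets and $R$ rotation invocations. After padding with zero counts, it has invocation counts $c_1,\ldots,c_K$ with $\sum_j c_j=R$. By \cref{thm:key-product-bound}, $\prod_j(c_j+1)\geq 2^m$. Write $R=a\cdot K+b$ with $0\leq b<K$. Then \cref{lem:key-balanced-box} gives $(a+1)^{K-b}\cdot(a+2)^b\geq\prod_j(c_j+1)\geq 2^m$. So the pair $(a,b)$ is feasible in the minimization \cref{eq:key-L-definition}, and $R=a\cdot K+b\geq L(m,K)$. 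Applying this to a circuit attaining $R^\star_{\mathrm{pref}}(m,K)$ proves \cref{eq:key-L-lower}.

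The feasible set is nonempty: $a=2^m-1$, $b=0$ satisfies the constraint. Every value $a\cdot K+b$ is a nonnegative integer, so the minimum exists. I should also check that the map $(a,b)\mapsto a\cdot K+b$ is injective on the index set. This holds because division with remainder is unique, so the minimum is simply over call counts $R$ whose balanced box reaches $2^m$.

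For \cref{eq:key-amgm-lower}, apply AM--GM to the factors $c_j+1$:
\[
  2^m\leq\prod_{j=1}^{K}(c_j+1)\leq\Bigl(\frac{1}{K}\sum_{j=1}^{K}(c_j+1)\Bigr)^{K}=\Bigl(\frac{R+K}{K}\Bigr)^{K}.
\]
Taking $K$-th roots gives $R\geq K\cdot(2^{m/K}-1)$. Since $R$ is an integer, the ceiling follows.

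There is no real obstacle; the only step needing care is the exclusive-scan case. That case is already handled inside \cref{thm:key-product-bound} by the locally derived full aggregate, which uses no extra rotation. I would simply cite it rather than redo it.
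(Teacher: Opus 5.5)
Your proposal is correct and follows the paper's proof essentially step for step. You write $R=aK+b$, obtain feasibility of $(a,b)$ from \cref{thm:key-product-bound} together with \cref{lem:key-balanced-box}, and then apply AM--GM directly to the factors $c_j+1$ to get the ceiling bound; your remarks on nonemptiness of the feasible set and on deferring the exclusive case to the locally derived full aggregate in \cref{thm:key-product-bound} are sound.
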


\begin{proof}
Let a circuit use $R$ calls and write $R=a\cdot K+b$ as in
\cref{eq:key-R-division}.  By \cref{thm:key-product-bound}, its invocation
counts satisfy
\begin{equation}
  2^m\leq\prod_{j=1}^{K}(c_j+1).
  \label{eq:key-product-lower-proof}
\end{equation}
By \cref{lem:key-balanced-box},
\begin{equation}
  \prod_{j=1}^{K}(c_j+1)
  \leq
  (a+1)^{K-b}\cdot(a+2)^b.
  \label{eq:key-product-upper-proof}
\end{equation}
Hence the pair $(a,b)$ associated with $R$ is feasible in
\cref{eq:key-L-definition}, and therefore $R\geq L(m,K)$.

For \cref{eq:key-amgm-lower}, apply the arithmetic--geometric mean inequality
to the $K$ positive integers $c_j+1$:
\begin{equation}
  \left(\prod_{j=1}^{K}(c_j+1)\right)^{1/K}
  \leq
  \frac{1}{K}\cdot\sum_{j=1}^{K}(c_j+1)
  =1+\frac{R}{K}.
  \label{eq:key-amgm-step}
\end{equation}
Combining \cref{eq:key-product-bound,eq:key-amgm-step} gives
$2^{m/K}\leq1+R/K$, and therefore
$R\geq K\cdot(2^{m/K}-1)$.  Since $R$ is an integer, taking the ceiling gives
\cref{eq:key-amgm-lower}.
\end{proof}

\subsection{Balanced dyadic construction}
\label{sec:key-balanced-construction}

The lower bound is matched in many parameter regimes by grouping consecutive
two-adic scales.  Write
\begin{equation}
  m=q\cdot K+s,
  \qquad
  0\leq s<K.
  \label{eq:key-m-division}
\end{equation}
Choose $K-s$ group lengths equal to $q$ and $s$ group lengths equal to
$q+1$.  Arrange these groups consecutively so that they partition the
valuation set $\{0,\ldots,m-1\}$.  If one group begins at valuation $v$ and
has length $\alpha$, store a direct key only for offset $2^v$.

The scan level at valuation $v+r$, where $0\leq r<\alpha$, requires the
effective rotation $2^{v+r}$.  Repeating the directly keyed rotation
$2^v$ exactly $2^r$ times synthesizes this offset, because cyclic rotations
compose additively:
\begin{equation}
  \underbrace{\rho_{2^v}\circ\cdots\circ\rho_{2^v}}_{2^r\text{ calls}}
  =\rho_{2^r\cdot2^v}
  =\rho_{2^{v+r}}.
  \label{eq:key-repeated-rotation}
\end{equation}
The total number of calls contributed by this group is therefore
\begin{equation}
  \sum_{r=0}^{\alpha-1}2^r=2^\alpha-1.
  \label{eq:key-group-cost}
\end{equation}

\begin{proposition}[Balanced dyadic upper bound]
\label{prop:key-balanced-upper}
For $m=q\cdot K+s$ with $0\leq s<K$, the replicated scan can be evaluated
with $K$ directly keyed offsets, monoid depth $m$, and
\begin{equation}
  U(m,K)
  \defeq
  (K-s)\cdot(2^q-1)
  +s\cdot(2^{q+1}-1)
  \label{eq:key-U-definition}
\end{equation}
online rotation calls.  Consequently,
\begin{equation}
  R^\star_{\mathrm{pref}}(m,K)\leq U(m,K).
  \label{eq:key-U-upper}
\end{equation}
\end{proposition}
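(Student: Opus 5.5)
The plan is to run the replicated scan of \cref{fig:replicated-exclusive-scan} unchanged, and to replace each effective rotation $\rho_{s_d}$ by a sequence of directly keyed calls. Level $d$ uses offset $s_d=2^{m-d-1}$, and as $d$ ranges over $0,\ldots,m-1$ these offsets run through exactly the powers $2^{j}$ with $j\in\{0,\ldots,m-1\}$. So the valuation set $\{0,\ldots,m-1\}$ indexes the levels bijectively. I partition it into $K$ consecutive groups, $K-s$ of length $q$ and $s$ of length $q+1$. This is possible because $(K-s)q+s(q+1)=qK+s=m$.

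For a group starting at valuation $v$ with length $\alpha$, I key only the offset $2^{v}$. The level whose offset is $2^{v+r}$ then realizes its rotation by the $2^{r}$-fold composition in \cref{eq:key-repeated-rotation}, which holds because $\rho_a\circ\rho_b=\rho_{a+b}$ under \cref{eq:model-rotation-convention}. The output of this chain equals $\rho_{s_d}(A^{(d)})$ slot for slot. Hence $B^{(d)}$ is literally the same vector as before, and \cref{lem:replicated-sibling-copy} together with \cref{thm:replicated-exclusive-correctness} apply verbatim, so correctness is inherited. The inclusive variant is handled the same way, since it uses the same rotations.

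Next I count costs. The intermediate rotated vectors in a chain are routing-only and cross no monoid gate, so the monoid depth stays $m$ exactly as in \cref{thm:replicated-cost}. The distinct keyed offsets are the $K$ group leaders $2^{v}$. These are nonzero and pairwise distinct because their valuations differ, so at most $K$ direct keys are used. When $q=0$ some groups are empty; that case needs a brief remark, but then $s=m<K$ and every group has length at most $1$, so fewer keys are used, which the budget allows.

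Summing \cref{eq:key-group-cost} over the groups gives $(K-s)(2^{q}-1)+s(2^{q+1}-1)=U(m,K)$ online calls. This yields \cref{eq:key-U-upper} by the definition of $R^\star_{\mathrm{pref}}$.

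The only delicate step is to confirm that the synthesized chain is charged per call under \cref{eq:model-rotation-count-by-key} and introduces no uncounted routing. That is immediate from the model, so I expect no real obstacle.
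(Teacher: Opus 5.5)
Your proposal is correct and follows the paper's argument: partition the valuations $\{0,\ldots,m-1\}$ into $K-s$ consecutive groups of length $q$ and $s$ of length $q+1$, key only each group's leading offset $2^v$, and synthesize $2^{v+r}$ by $2^r$ repeated calls. This changes only routing, so correctness and depth $m$ are inherited while the calls sum to $U(m,K)$. Your remark on $q=0$ is harmless but vacuous, since the section assumes $1\leq K\leq m$, which forces $q\geq 1$.
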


\begin{proof}
Partition the $m$ scan valuations into the consecutive groups described
above and directly key the first offset of each group.  By
\cref{eq:key-repeated-rotation}, every effective scale rotation required by
the scan can be synthesized using only its group's direct key.  A group of
length $q$ costs $2^q-1$ calls by \cref{eq:key-group-cost}, and a group of
length $q+1$ costs $2^{q+1}-1$ calls.  Summing over $K-s$ short groups and
$s$ long groups gives \cref{eq:key-U-definition}.

Replacing one effective rotation by a sequence of rotations changes only the
routing performed before that scan level.  It does not add a monoid gate to
any dependency path.  Therefore the construction retains monoid depth $m$
and the correctness invariant of \cref{thm:replicated-exclusive-correctness}.
It uses exactly one directly keyed offset per group, hence at most $K$ direct
keys.
\end{proof}

Combining the lower and upper bounds gives the general proved frontier
sandwich
\begin{equation}
  L(m,K)
  \leq
  R^\star_{\mathrm{pref}}(m,K)
  \leq
  U(m,K).
  \label{eq:key-frontier-sandwich}
\end{equation}

\begin{corollary}[Exact coincidence cases]
\label{cor:key-coincidence}
If $L(m,K)=U(m,K)$, then
\begin{equation}
  R^\star_{\mathrm{pref}}(m,K)=L(m,K)=U(m,K).
  \label{eq:key-coincidence}
\end{equation}
\end{corollary}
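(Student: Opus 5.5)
This corollary follows by squeezing the unknown optimum between the two bounds already proved. I would not construct anything new. Instead I would invoke the frontier sandwich \cref{eq:key-frontier-sandwich}, which combines the exact integer counting lower bound of \cref{thm:key-integer-lower} with the balanced dyadic upper bound of \cref{prop:key-balanced-upper}. Under the hypothesis $L(m,K)=U(m,K)$, the chain
\[
  L(m,K)\leq R^\star_{\mathrm{pref}}(m,K)\leq U(m,K)=L(m,K)
\]
collapses to equalities, which is exactly \cref{eq:key-coincidence}.

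The only care needed is to confirm that both bounds refer to the same class of circuits.

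\textbf{Lower bound.} \Cref{thm:key-integer-lower} is stated for every packed prefix circuit with at most $K$ directly keyed offsets. It is derived from \cref{thm:key-product-bound}, which covers both the inclusive and exclusive scans; the exclusive case uses the locally derived full aggregate.

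\textbf{Upper bound.} \Cref{prop:key-balanced-upper} exhibits a circuit in the same model that computes all ordered prefixes. It uses exactly one direct key per group, so at most $K$ keys, and it retains correctness by \cref{thm:replicated-exclusive-correctness} and its inclusive variant. Hence $R^\star_{\mathrm{pref}}(m,K)$ is a minimum over a nonempty set, and it is bounded below and above by the two quantities.

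I do not expect a real obstacle. The point most worth checking is that $R^\star_{\mathrm{pref}}$ counts every synthesized call separately, so that the dyadic construction's count $U(m,K)$ is measured in the same units as the lower bound $L(m,K)$. The cost model in \cref{eq:model-rotation-count-by-key} guarantees this.

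I would optionally remark that the divisible case $K\mid m$ is an instance of this corollary. There $s=0$ gives $U=K(2^{m/K}-1)$. Taking $a=2^{m/K}-1$ and $b=0$ gives $(a+1)^K=2^m$, which shows $L\leq K(2^{m/K}-1)$. Combined with the AM--GM bound \cref{eq:key-amgm-lower}, this yields $L=U$.
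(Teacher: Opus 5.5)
Your proof is correct and matches the paper's argument exactly: under $L(m,K)=U(m,K)$ the sandwich \cref{eq:key-frontier-sandwich} collapses to equalities. In your optional remark, \cref{eq:key-amgm-lower} bounds $R^\star_{\mathrm{pref}}(m,K)$, not $L(m,K)$, so it cannot by itself supply $L\geq K\cdot(2^{m/K}-1)$; instead apply AM--GM directly to $(a+1)^{K-b}\cdot(a+2)^b$, whose $K$ factors sum to $a\cdot K+b+K$, to get $a\cdot K+b\geq K\cdot(2^{m/K}-1)$ for every feasible pair.
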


\begin{proof}
The result follows immediately by substituting the assumed equality into
\cref{eq:key-frontier-sandwich}.
\end{proof}

\begin{table}[t]
  \centering
  \caption{Representative direct-key parameter pairs.  Equality of $L$ and
  $U$ proves the exact prefix frontier.  A strict gap records only the current
  proved interval, not evidence of suboptimality of the dyadic construction.}
  \label{tab:key-frontier-examples}
  \setlength{\tabcolsep}{7pt}
  \renewcommand{\arraystretch}{1.18}
  \begin{tabular}{@{}ccrrl@{}}
    \toprule
    $m$ & $K$ & $L(m,K)$ & $U(m,K)$ & proved status \\
    \midrule
    $5$  & $3$ & $7$  & $7$  & exact \\
    $8$  & $3$ & $17$ & $17$ & exact \\
    $8$  & $5$ & $11$ & $11$ & exact \\
    $7$  & $2$ & $21$ & $22$ & $21\leq R^\star_{\mathrm{pref}}\leq22$ \\
    $10$ & $3$ & $28$ & $29$ & $28\leq R^\star_{\mathrm{pref}}\leq29$ \\
    \bottomrule
  \end{tabular}
\end{table}

\subsection{Exact divisible frontier and equality rigidity}
\label{sec:key-divisible}

When $K$ divides $m$, the smooth lower bound in
\cref{eq:key-amgm-lower} is integral and exactly matches the dyadic
construction.

\begin{theorem}[Exact frontier for $K\mid m$]
\label{thm:key-divisible-frontier}
Assume $K\mid m$ and set
\begin{equation}
  \alpha\defeq\frac{m}{K}.
  \label{eq:key-alpha-definition}
\end{equation}
Then
\begin{equation}
  R^\star_{\mathrm{pref}}(m,K)
  =K\cdot\bigl(2^\alpha-1\bigr)
  =K\cdot\bigl(2^{m/K}-1\bigr).
  \label{eq:key-divisible-frontier}
\end{equation}
The value is attained at monoid depth $m$ by directly keying the offsets
\begin{equation}
  2^{0\cdot\alpha},
  2^{1\cdot\alpha},
  \ldots,
  2^{(K-1)\cdot\alpha}.
  \label{eq:key-divisible-key-set}
\end{equation}
\end{theorem}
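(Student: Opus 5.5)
The proof is a sandwich of two results already established. With $\alpha=m/K$ we have $m=q\cdot K+s$ where $q=\alpha$ and $s=0$, so the smooth bound \cref{eq:key-amgm-lower} and the dyadic construction of \cref{prop:key-balanced-upper} should give the same number. The plan is to show that they do, and then read off the attaining key set.

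\emph{Lower bound.} \Cref{thm:key-integer-lower} gives
\[
  R^\star_{\mathrm{pref}}(m,K)\geq\bigl\lceil K\cdot(2^{m/K}-1)\bigr\rceil.
\]
Since $m/K=\alpha$ is an integer, $K\cdot(2^{\alpha}-1)$ is already an integer, so the ceiling can be dropped. This gives $R^\star_{\mathrm{pref}}(m,K)\geq K\cdot(2^\alpha-1)$. For a self-contained route, the chain is \cref{thm:key-product-bound} followed by AM--GM on the factors $c_j+1$, exactly as in \cref{eq:key-amgm-step}. The exclusive case is already handled inside \cref{thm:key-product-bound}, through the locally derived full aggregate, so nothing new is needed there.

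\emph{Upper bound.} Apply \cref{prop:key-balanced-upper} with $q=\alpha$ and $s=0$. This gives $K$ groups, each of length $\alpha$, and the $j$-th group covers the valuations $j\alpha,\ldots,(j+1)\alpha-1$. Its direct key is $2^{j\alpha}$, which matches the key set in \cref{eq:key-divisible-key-set}. By \cref{eq:key-group-cost}, each group costs $2^\alpha-1$ calls. Hence
\[
  U(m,K)=K\cdot(2^\alpha-1)+0\cdot(2^{\alpha+1}-1)=K\cdot(2^\alpha-1).
\]
The proposition also guarantees monoid depth $m$ and correctness through \cref{thm:replicated-exclusive-correctness}. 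Combining the two bounds via \cref{eq:key-frontier-sandwich} (or \cref{cor:key-coincidence}) gives the equality.

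\emph{Main obstacle.} The one delicate point is the integrality step, i.e.\ that the smooth bound and $L(m,K)$ really agree with $U(m,K)$ here. A cross-check is to verify directly that $L(m,K)=U(m,K)$: the pair $a=2^\alpha-1$, $b=0$ gives $(a+1)^K=2^m$, so it is feasible. Any smaller $R$ is excluded by AM--GM, since then $(1+R/K)^K<2^m$. I would also want to confirm that synthesizing the offsets $2^{j\alpha+r}$ by $2^r$ repeated calls of $2^{j\alpha}$ uses only the $K$ listed keys, including for the largest scale $2^{m-1}$. This holds by \cref{eq:key-repeated-rotation}, and no wraparound issue arises because rotations compose additively modulo $2^m$.
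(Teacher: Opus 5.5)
Your proposal is correct and follows the paper's proof essentially step for step: the AM--GM lower bound $K\cdot(2^{m/K}-1)$, which is already an integer when $K\mid m$, is matched by the balanced dyadic construction with $q=\alpha$ and $s=0$. That construction directly keys exactly the offsets $2^{j\cdot\alpha}$ and uses $K\cdot(2^\alpha-1)$ calls at monoid depth $m$; your extra check that $L(m,K)=U(m,K)$ is consistent with this but not needed.
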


\begin{proof}
The lower bound \cref{eq:key-amgm-lower} gives
\begin{equation}
  R^\star_{\mathrm{pref}}(m,K)
  \geq
  K\cdot\bigl(2^{m/K}-1\bigr)
  =K\cdot\bigl(2^\alpha-1\bigr).
  \label{eq:key-divisible-lower-proof}
\end{equation}
In \cref{eq:key-m-division}, divisibility means $q=\alpha$ and $s=0$.
The upper bound of \cref{prop:key-balanced-upper} therefore becomes
\begin{equation}
  U(m,K)=K\cdot(2^\alpha-1).
  \label{eq:key-divisible-upper-proof}
\end{equation}
The construction partitions the valuation range into $K$ consecutive groups
of length $\alpha$ and uses exactly the key set in
\cref{eq:key-divisible-key-set}.  The matching lower and upper bounds prove
\cref{eq:key-divisible-frontier}.
\end{proof}

The equality case determines not only the number of calls, but also the
valuation profile of every optimal direct-key set.

\begin{theorem}[Mixed-radix two-adic rigidity]
\label{thm:key-divisible-rigidity}
Let $K\mid m$, let $\alpha=m/K$, and suppose a packed circuit uses at most
$K$ directly keyed offsets and exactly
\begin{equation}
  R=K\cdot(2^\alpha-1)
  \label{eq:key-rigidity-R}
\end{equation}
rotation calls to produce one output depending on all $2^m$ inputs.  Then:
\begin{enumerate}[leftmargin=*,itemsep=0.25em,topsep=0.25em]
  \item exactly $K$ direct keys are used, and each is invoked
  \begin{equation}
    c_j=2^\alpha-1
    \label{eq:key-rigidity-counts}
  \end{equation}
  times;
  \item after reordering the directly keyed offsets,
  \begin{equation}
    \nuTwo(g_j)=j\cdot\alpha
    \qquad\text{for }0\leq j<K.
    \label{eq:key-rigidity-valuations}
  \end{equation}
\end{enumerate}
Thus every optimal key set begins one complete block of $\alpha$ consecutive
two-adic scales.
\end{theorem}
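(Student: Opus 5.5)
The plan is to squeeze the hypotheses against the two counting inequalities already proved, and then to recover the valuations by the roots-of-unity method used in \cref{lem:complete-two-adic-basis}, now applied to geometric-series factors instead of binomials $1+X^{\delta}$.

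\emph{Step 1 (counts).} Pad the key list to length $K$ with zero invocation counts, as in \cref{sec:key-product-lower}. \Cref{thm:key-product-bound} gives $\prod_j(c_j+1)\geq 2^m$. The arithmetic--geometric mean step \cref{eq:key-amgm-step} with $R=K(2^\alpha-1)$ gives
\[
  \prod_j(c_j+1)\leq(1+R/K)^K=2^{\alpha K}=2^m .
\]
Hence both inequalities are equalities. Equality in the arithmetic--geometric mean inequality forces every factor $c_j+1$ to be equal, and so $c_j=2^\alpha-1$ for every $j$. In particular no padded count is zero, so exactly $K$ distinct direct keys are used. This is item~1.

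\emph{Step 2 (exact tiling).} Fix the output slot. As in the proof of \cref{thm:key-product-bound}, every displacement $p-q$ must lie in the image of the coefficient box $\prod_j\{0,\ldots,c_j\}$ under $(a_j)\mapsto\sum_j a_j g_j \bmod 2^m$. So this map is surjective. By Step~1 the box has exactly $2^m$ elements, so the map is bijective. Equivalently, in $\Z[X]/(X^{2^m}-1)$ we have
\[
  \prod_{j=1}^{K}\bigl(1+X^{g_j}+\cdots+X^{(2^\alpha-1)g_j}\bigr)=1+X+\cdots+X^{2^m-1}.
\]

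\emph{Step 3 (valuations).} For each $k\in\{0,\ldots,m-1\}$, evaluate this identity at a primitive $2^{k+1}$-st root of unity $\zeta$. The right-hand side vanishes exactly as in \cref{lem:complete-two-adic-basis}, so some factor vanishes.

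A factor is $2^\alpha$ (nonzero) when $\zeta^{g_j}=1$. Otherwise it equals $(\zeta^{2^\alpha g_j}-1)/(\zeta^{g_j}-1)$, which vanishes iff $\zeta^{2^\alpha g_j}=1$. The order of $\zeta^{g_j}$ is $2^{\max(0,\,k+1-\nuTwo(g_j))}$, where we set $\nuTwo(0)=m$. Therefore the $j$-th factor vanishes iff
\[
  \nuTwo(g_j)\leq k\leq\nuTwo(g_j)+\alpha-1 .
\]
So the $K$ integer intervals $I_j=[\nuTwo(g_j),\nuTwo(g_j)+\alpha-1]$, each of length $\alpha$, must cover $\{0,\ldots,m-1\}$, which has $m=K\alpha$ elements. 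A cover of $K\alpha$ points by $K$ sets of size $\alpha$ is possible only if the sets are pairwise disjoint and each lies inside $\{0,\ldots,m-1\}$. Sorting the intervals by left endpoint then forces $\nuTwo(g_j)=j\alpha$. This is item~2; it also rules out any key equal to $0$.

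\emph{Main obstacle.} The delicate point is Step~3: the exact vanishing criterion for the geometric-series factor, the convention for a zero offset, and the tiling argument. I expect these to need care rather than new ideas. Steps~1 and~2 are direct reuses of \cref{thm:key-product-bound} and \cref{eq:key-amgm-step}.
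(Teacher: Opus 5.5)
Your proposal is correct and follows the paper's proof essentially step for step: the arithmetic--geometric mean squeeze against \cref{thm:key-product-bound} fixes the counts, the full coefficient box gives a bijection and hence the group-ring factorization, and the roots-of-unity criterion $\nuTwo(g_j)\leq k<\nuTwo(g_j)+\alpha$ is followed by the same interval-tiling argument. Your explicit convention $\nuTwo(0)=m$ is a small extra that the paper leaves implicit, since its model already restricts direct keys to nonzero offsets.
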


\begin{proof}
Pad the circuit's key set to $K$ offsets by assigning zero invocation counts
to any unused positions.  By \cref{thm:key-product-bound},
\begin{equation}
  2^m\leq\prod_{j=1}^{K}(c_j+1).
  \label{eq:key-rigidity-product-lower}
\end{equation}
On the other hand, the arithmetic--geometric mean inequality and
\cref{eq:key-rigidity-R} give
\begin{align}
  \prod_{j=1}^{K}(c_j+1)
  &\leq
  \left(
    \frac{1}{K}\cdot\sum_{j=1}^{K}(c_j+1)
  \right)^K \\
  &=
  \left(
    1+\frac{K\cdot(2^\alpha-1)}{K}
  \right)^K \\
  &=2^{\alpha\cdot K}
   =2^m.
  \label{eq:key-rigidity-product-upper}
\end{align}
Hence equality holds throughout.  Equality in arithmetic--geometric mean
forces all $c_j+1$ to be equal to $2^\alpha$, proving
\cref{eq:key-rigidity-counts}.  In particular, no padded count is zero, so
exactly $K$ keys are used.

Fix an output slot depending on all inputs.  A dependency path determines a
coefficient vector
\begin{equation}
  (a_0,\ldots,a_{K-1})
  \in\{0,\ldots,2^\alpha-1\}^K,
  \label{eq:key-rigidity-coefficient-box}
\end{equation}
and the associated source displacement is
$\sum_j a_j\cdot g_j$ modulo $2^m$.  Every residue must occur because the output depends on all input slots.
More precisely, the coefficient vectors realized by dependency paths form a
subset of the box in \cref{eq:key-rigidity-coefficient-box}, and their image
already contains all $2^m$ residues.  The complete box has cardinality
\begin{equation}
  (2^\alpha)^K=2^{\alpha\cdot K}=2^m,
  \label{eq:key-rigidity-box-cardinality}
\end{equation}
which equals the codomain size.  The path-realized subset must therefore be
the entire box, and the coefficient map from the box to $\Z_{2^m}$ is
bijective.
Equivalently, in the quotient ring $\Z[X]/(X^{2^m}-1)$,
\begin{equation}
  \prod_{j=0}^{K-1}
  \left(
    \sum_{a=0}^{2^\alpha-1}X^{a\cdot g_j}
  \right)
  =
  \sum_{r=0}^{2^m-1}X^r.
  \label{eq:key-rigidity-group-ring}
\end{equation}
Indeed, the coefficient of $X^r$ on the left counts coefficient vectors
mapping to residue $r$, and bijectivity makes every such coefficient equal to
one.

For $0\leq k<m$, let $\zeta_k$ be a primitive $2^{k+1}$-st root of unity.
It is also a $2^m$-th root of unity, and evaluating the right-hand side of
\cref{eq:key-rigidity-group-ring} gives zero.  Hence, for every $k$, at least
one geometric factor on the left vanishes.  Let $v_j=\nuTwo(g_j)$.  We claim
that
\begin{equation}
  \sum_{a=0}^{2^\alpha-1}\zeta_k^{a\cdot g_j}=0
  \quad\Longleftrightarrow\quad
  v_j\leq k<v_j+\alpha.
  \label{eq:key-rigidity-factor-zero}
\end{equation}
To prove the claim, set $z=\zeta_k^{g_j}$.  If $v_j>k$, then $z=1$ and the
sum equals $2^\alpha$, so it does not vanish.  If $v_j\leq k$, then $z\neq1$
and $z$ has order $2^{k+1-v_j}$.  The geometric sum vanishes exactly when
$z^{2^\alpha}=1$, which is equivalent to
$2^{k+1-v_j}\mid2^\alpha$, or $k<v_j+\alpha$.  This proves
\cref{eq:key-rigidity-factor-zero}.

Thus the $K$ integer intervals
\begin{equation}
  I_j\defeq\{v_j,v_j+1,\ldots,v_j+\alpha-1\}
  \label{eq:key-rigidity-intervals}
\end{equation}
cover every valuation $k\in\{0,\ldots,m-1\}$.  Each interval has exactly
$\alpha$ elements, so the total number of interval positions counted with
multiplicity is
\begin{equation}
  K\cdot\alpha=m.
  \label{eq:key-rigidity-total-length}
\end{equation}
Covering a set of $m$ valuations with total multiplicity $m$ leaves no room
for overlap or for positions outside $\{0,\ldots,m-1\}$.  Consequently, the
$I_j$ are pairwise disjoint and partition the complete valuation range.
After ordering them by their first elements, the only partition into
consecutive intervals of common length $\alpha$ is
\begin{equation}
  \{0,\ldots,\alpha-1\},
  \{\alpha,\ldots,2\cdot\alpha-1\},
  \ldots,
  \{(K-1)\cdot\alpha,\ldots,K\cdot\alpha-1\}.
  \label{eq:key-rigidity-partition}
\end{equation}
Their starting valuations are therefore
$0,\alpha,\ldots,(K-1)\cdot\alpha$, proving
\cref{eq:key-rigidity-valuations}.
\end{proof}

\subsection{Nondivisible residual and scope}
\label{sec:key-nondivisible}

For $K\nmid m$, the proved statement is exactly the sandwich in
\cref{eq:key-frontier-sandwich}.  The bounds coincide for many nondivisible
pairs, as illustrated in \cref{tab:key-frontier-examples}, but not for all of
them.  For example,
\begin{equation}
  21\leq R^\star_{\mathrm{pref}}(7,2)\leq22,
  \qquad
  28\leq R^\star_{\mathrm{pref}}(10,3)\leq29.
  \label{eq:key-open-examples}
\end{equation}
The lower endpoint certifies only that enough coefficient combinations are
available to reach every displacement.  It does not by itself enforce the
causal order, all-output semantics, or the $m$ successive sibling exchanges
of an ordered prefix scan.  Conversely, the upper endpoint is an explicit
minimum-depth scan, not a claim of optimality when $L(m,K)<U(m,K)$.

The remaining nondivisible gaps are therefore left open.  Computational
searches may support the balanced dyadic construction for selected
parameters, but such evidence is not used in any theorem of this paper.  The
radix carry and borrow instantiations in the next section may select a
budget dividing $m$, where the direct-key frontier and its equality structure
are fully determined by
\cref{thm:key-divisible-frontier,thm:key-divisible-rigidity}.

\section{Radix-\texorpdfstring{\CKKS}{CKKS} Instantiation}
\label{sec:radix-ckks}

This section specializes the abstract exclusive scan to radix carry and borrow.
The algebraic statements are exact: they assume that every provisional digit is
classified into the correct transition state and that the resulting state
arithmetic is decoded without error.  A concrete \CKKS ciphertext carries
approximate values, so the numerical conditions under which classification and
rounding recover these exact states are deferred to the implementation and
precision analysis.  This separation is useful because the scan topology is
independent of the selected bootstrapping and state-classification backend.

\subsection{A one-ciphertext carry-state encoding}
\label{sec:radix-symbols}

The transition monoid from \cref{sec:model-carry} has only three elements:
kill, propagate, and generate.  Following the compressed symbolic realization
used for logarithmic radix carry~\cite{ChaPL26}, encode them by
\begin{equation}
  \sigma(\mathsf K)=0,
  \qquad
  \sigma(\mathsf P)=\frac{1}{2},
  \qquad
  \sigma(\mathsf G)=\mathrm{i},
  \label{eq:radix-symbol-encoding}
\end{equation}
where $\mathrm{i}^2=-1$.  If $x$ encodes a lower interval and $y$ encodes an
adjacent upper interval, define
\begin{equation}
  x\circledast y
  \defeq
  y+(y+\overline{y})\cdot(x-y).
  \label{eq:radix-symbol-composition}
\end{equation}
The operand order matches the convention of \cref{eq:model-transition-order}:
the lower transition acts first.

\begin{proposition}[Symbolic transition isomorphism]
\label{prop:radix-symbol-isomorphism}
For all transition states $A,X\in\{\mathsf K,\mathsf P,\mathsf G\}$,
\begin{equation}
  \sigma(A\diamond X)
  =\sigma(A)\circledast\sigma(X).
  \label{eq:radix-symbol-homomorphism}
\end{equation}
The identity symbol is $1/2$.  One packed evaluation of
\cref{eq:radix-symbol-composition} uses one ciphertext--ciphertext
multiplication, additions, and one complex conjugation when the state is
stored in one \CKKS ciphertext.
\end{proposition}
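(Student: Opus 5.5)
The plan is a finite check over the nine ordered pairs $(A,X)\in\{\mathsf K,\mathsf P,\mathsf G\}^2$, organised so that only three cases need real computation. First I will compute the factor $y+\overline{y}=2\operatorname{Re}(y)$ for each possible upper symbol $y=\sigma(X)$. This gives $0$ for $\mathsf K$, $1$ for $\mathsf P$, and $0$ for $\mathsf G$, since $\mathrm{i}$ is purely imaginary. Consequently \cref{eq:radix-symbol-composition} simplifies in two regimes. When $X\in\{\mathsf K,\mathsf G\}$, the formula becomes $x\circledast y=y$. When $X=\mathsf P$, it becomes $x\circledast y=y+(x-y)=x$.

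Next I will compare these with the transition semantics of \cref{eq:model-transition-order}, namely $F_{A\diamond X}=F_X\circ F_A$. If $X$ is a constant map ($\mathsf K$ or $\mathsf G$), then $F_X\circ F_A=F_X$ for every $A$, so $A\diamond X=X$. This matches $x\circledast y=y$. If $X=\mathsf P$ is the identity map, then $F_X\circ F_A=F_A$, so $A\diamond\mathsf P=A$. This matches $x\circledast y=x$. These two regimes cover all nine pairs, which proves \cref{eq:radix-symbol-homomorphism}.

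For the identity claim, I will note that $\mathsf P$ is the monoid identity $e$, and that $\sigma(\mathsf P)=1/2$. I will also check directly that $1/2$ is two-sided for $\circledast$ on the encoded symbols. On the right this follows from the second regime above. On the left I will compute $\tfrac12\circledast y$ for each symbol: it gives $y$ whenever $y\in\{0,\mathrm{i}\}$, because the coefficient $y+\overline y$ vanishes, and it gives $1/2$ when $y=1/2$.

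The cost statement is read off the formula. I will compute $\overline{y}$ with one conjugation and form $y+\overline y$ and $x-y$ with additions. The product of these two quantities is the single ciphertext--ciphertext multiplication, and a final addition of $y$ completes the evaluation.

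The only subtle point is to make the claim precise on the finite symbol set: $\circledast$ is not a monoid operation on all of $\mathbb C$, so the isomorphism should be stated and verified only on the image of $\sigma$. Operand order also needs care, so that the lower interval appears as $x$, consistent with \cref{eq:model-transition-order}.
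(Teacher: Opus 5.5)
Your proposal is correct and follows essentially the paper's proof. Both split on the upper symbol $y$, using $y+\overline y=0$ for $\mathsf K,\mathsf G$ and $y=\overline y=1/2$ for $\mathsf P$. Both match this against the kill/propagate/generate composition table and read the cost off the single product $(y+\overline y)\cdot(x-y)$. Your derivation of that table from $F_{A\diamond X}=F_X\circ F_A$ and your direct two-sided identity check just spell out steps the paper states in one line.
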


\begin{proof}
Let $x=\sigma(A)$ and $y=\sigma(X)$.  The result is determined by the upper
state $X$.
If $X=\mathsf K$, then $y=0$ and
$x\circledast y=0=\sigma(\mathsf K)$; an upper kill state discards the lower
transition.  If $X=\mathsf P$, then $y=1/2=\overline y$, and therefore
\begin{equation}
  x\circledast y
  =\frac{1}{2}+1\cdot\left(x-\frac{1}{2}\right)
  =x;
  \label{eq:radix-symbol-propagate-case}
\end{equation}
this is exactly the identity action of an upper propagate state.  Finally, if
$X=\mathsf G$, then $y=\mathrm{i}$ and $y+\overline y=0$, so
$x\circledast y=\mathrm{i}=\sigma(\mathsf G)$; an upper generate state also
discards the lower transition.  These three cases are precisely the
composition table of the transition monoid.  The propagate symbol $1/2$ is
therefore its identity.

Equation~\eqref{eq:radix-symbol-composition} contains one product of the two
ciphertext-dependent quantities $y+\overline y$ and $x-y$.  All remaining
operations are linear, apart from the conjugation automorphism.
\end{proof}

For a provisional digit $z\in\{0,\ldots,2\cdot B-2\}$, define the exact
state classifier
\begin{equation}
  \chi_B(z)
  \defeq
  \begin{cases}
    0, & 0\leq z\leq B-2,\\
    \frac{1}{2}, & z=B-1,\\
    \mathrm{i}, & B\leq z\leq2\cdot B-2.
  \end{cases}
  \label{eq:radix-carry-classifier}
\end{equation}
Thus $\chi_B(z)=\sigma(F_z)$.  In a concrete discrete-\CKKS realization,
$\chi_B$ is evaluated slotwise by a functional bootstrap or another
encrypted classifier.  The prefix circuit starts after this classification
step.

Two linear functions recover the generate and propagate indicators of a
symbol $x\in\{0,1/2,\mathrm{i}\}$:
\begin{equation}
  \operatorname{gen}(x)
  \defeq
  \frac{x-\overline{x}}{2\cdot\mathrm{i}},
  \qquad
  \operatorname{prop}(x)
  \defeq
  x+\overline{x}.
  \label{eq:radix-symbol-indicators}
\end{equation}
They take values in $\{0,1\}$ and satisfy
\begin{equation}
  F_x(c)
  =\operatorname{gen}(x)
   +\operatorname{prop}(x)\cdot c
  \qquad\text{for }c\in\{0,1\}.
  \label{eq:radix-symbol-state-on-bit}
\end{equation}
Here $F_x$ denotes the transition represented by $x$.  In particular,
$F_x(0)=\operatorname{gen}(x)$.

\subsection{Exclusive carry and canonical digits}
\label{sec:radix-carry}

Let $z_0,\ldots,z_{n-1}$ be provisional radix-$B$ digits satisfying
\begin{equation}
  0\leq z_i\leq2\cdot B-2,
  \qquad
  n=2^m,
  \label{eq:radix-reduced-digit-range}
\end{equation}
and set $c_0=0$.  Sequential carry propagation is
\begin{equation}
  c_{i+1}
  \defeq
  \left\lfloor\frac{z_i+c_i}{B}\right\rfloor,
  \qquad
  d_i
  \defeq
  z_i+c_i-B\cdot c_{i+1}.
  \label{eq:radix-sequential-carry}
\end{equation}
The range in \cref{eq:radix-reduced-digit-range} ensures that every carry is
binary.

Let $S_i$ be the transition state classified from $z_i$, and let
$E_i=S_0\diamond\cdots\diamond S_{i-1}$ be its exclusive prefix, with
$E_0=e$.  The replicated scan of \cref{sec:replicated-scan} returns all
$E_i$ directly in bit-reversed order.

\begin{theorem}[Exact exclusive carry]
\label{thm:radix-exclusive-carry}
Assume that the state classifier in \cref{eq:radix-carry-classifier} is
correct.  For every $i\in\{0,\ldots,n-1\}$, define
\begin{equation}
  c_i=F_{E_i}(0),
  \qquad
  c_{i+1}=F_{S_i}(c_i),
  \qquad
  d_i=z_i+c_i-B\cdot c_{i+1}.
  \label{eq:radix-exclusive-carry-evaluation}
\end{equation}
Then:
\begin{enumerate}[leftmargin=*,itemsep=0.25em,topsep=0.25em]
  \item the values $c_i$ equal the sequential carries of
  \cref{eq:radix-sequential-carry} and lie in $\{0,1\}$;
  \item every corrected digit satisfies $0\leq d_i\leq B-1$;
  \item the corrected representation preserves the integer value:
  \begin{equation}
    \sum_{i=0}^{n-1}z_i\cdot B^i
    =
    \sum_{i=0}^{n-1}d_i\cdot B^i
    +c_n\cdot B^n;
    \label{eq:radix-carry-value-identity}
  \end{equation}
  \item in one complete bit-reversed cyclic domain, all carry-in states are
  obtained with $m$ rotations, monoid depth $m$, and
  $2\cdot m-1$ packed symbolic compositions.  No final logical-predecessor
  shift is required.
\end{enumerate}
\end{theorem}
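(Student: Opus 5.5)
The plan has two parts. First, an induction on $i$ identifies the scanned carries with the sequential ones. Then parts 2--4 follow by local arithmetic and by invoking the scan theorems. Everything rests on the correctness assumption that $S_i$ represents $F_{z_i}$, where $F_{z_i}(c)=\lfloor (z_i+c)/B\rfloor$ as in \cref{eq:model-carry-transition}.

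\textbf{Part 1: scanned carries equal sequential carries.} Let $\hat c_i$ denote the sequential carries of \cref{eq:radix-sequential-carry}. I will prove by induction on $i$ that $F_{E_i}(0)=\hat c_i$.
\begin{itemize}
\item \emph{Base case.} $E_0=e$ is the propagate map, so $F_{E_0}(0)=0=\hat c_0$.
\item \emph{Inductive step.} Since $E_{i+1}=E_i\diamond S_i$, the order convention \cref{eq:model-transition-order} gives $F_{E_{i+1}}=F_{S_i}\circ F_{E_i}$. Hence $F_{E_{i+1}}(0)=F_{S_i}(\hat c_i)=\lfloor (z_i+\hat c_i)/B\rfloor=\hat c_{i+1}$.
\item \emph{Carry-out.} The same identity shows that the carry-out $F_{S_i}(c_i)$ in \cref{eq:radix-exclusive-carry-evaluation} equals $\hat c_{i+1}$. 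So the two definitions of $c_{i+1}$, as carry-in of digit $i+1$ and as carry-out of digit $i$, are consistent.
\item \emph{Binary range.} By induction, if $c_i\in\{0,1\}$ then $z_i+c_i\le 2B-1<2B$, so $c_{i+1}\in\{0,1\}$.
\end{itemize}

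\textbf{Part 2: digit range.} Here $d_i=(z_i+c_i)-B\lfloor (z_i+c_i)/B\rfloor$ is exactly the remainder of $z_i+c_i$ modulo $B$. It therefore lies in $\{0,\dots,B-1\}$.

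\textbf{Part 3: value preservation.} Multiply $d_i=z_i+c_i-Bc_{i+1}$ by $B^i$ and sum over $i=0,\dots,n-1$. The carry terms telescope:
\[
\sum_{i}c_iB^i-\sum_i c_{i+1}B^{i+1}=c_0-c_nB^n=-c_nB^n,
\]
using $c_0=0$. Rearranging gives \cref{eq:radix-carry-value-identity}.

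\textbf{Part 4: costs.} Take the three-element transition monoid with $\diamond$ (or its image under $\sigma$ via \cref{prop:radix-symbol-isomorphism}). Apply \cref{thm:replicated-exclusive-correctness} and \cref{thm:replicated-cost} to the inputs $x_i=S_i$ placed at $\rev_m(i)$. Every $E_i$ then appears at slot $L(i)$ using $m$ rotations, depth $m$, and $2m-1$ compositions. The quantities $c_i$, $c_{i+1}$ and $d_i$ are then computed slotwise, and $S_i$ and $z_i$ already sit at the same slot $L(i)$. One concrete way is via $\operatorname{gen}$, $\operatorname{prop}$ and \cref{eq:radix-symbol-state-on-bit}: $c_i=\operatorname{gen}(E_i)$ and $c_{i+1}=\operatorname{gen}(S_i)+\operatorname{prop}(S_i)\,c_i$. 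Conjugation is slotwise and does not move slots, so no rotation or predecessor shift is added.

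\textbf{Main obstacle.} The only delicate point is bookkeeping. The carry-out $c_{i+1}$ is obtained locally at logical position $i$ rather than read from position $i+1$. The induction in Part 1 is exactly what justifies this.
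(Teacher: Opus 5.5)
Your proposal is correct and follows the paper's proof essentially step for step. Both argue by induction via $F_{E_{i+1}}=F_{S_i}\circ F_{E_i}$, use the bound $z_i+c_i<2B$ for binary carries and the Euclidean-remainder observation for the digit range, telescope the weighted sum, and cite the replicated-scan cost theorem together with slotwise correction; your explicit check that the two definitions of $c_{i+1}$ agree is a small piece of bookkeeping the paper leaves implicit.
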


\begin{proof}
By definition, $E_0=e$, so $F_{E_0}(0)=0=c_0$.  Assume inductively that
$F_{E_i}(0)$ equals the sequential carry entering digit $i$.  Ordered
composition gives
\begin{equation}
  E_{i+1}=E_i\diamond S_i,
  \qquad
  F_{E_{i+1}}=F_{S_i}\circ F_{E_i}.
  \label{eq:radix-carry-prefix-recurrence}
\end{equation}
Hence
\begin{equation}
  F_{E_{i+1}}(0)
  =F_{S_i}\bigl(F_{E_i}(0)\bigr)
  =F_{S_i}(c_i)
  =\left\lfloor\frac{z_i+c_i}{B}\right\rfloor
  =c_{i+1}.
  \label{eq:radix-carry-induction}
\end{equation}
This proves equality with sequential propagation for every $i$.

Because $0\leq z_i\leq2\cdot B-2$ and $c_i\in\{0,1\}$,
\begin{equation}
  0\leq z_i+c_i\leq2\cdot B-1<2\cdot B.
  \label{eq:radix-carry-sum-range}
\end{equation}
Therefore its quotient by $B$ is either zero or one, proving
$c_{i+1}\in\{0,1\}$.  The value $d_i$ in
\cref{eq:radix-exclusive-carry-evaluation} is the Euclidean remainder of
$z_i+c_i$ modulo $B$, and thus lies in $\{0,\ldots,B-1\}$.

Multiplying the digit equation by $B^i$ and summing gives
\begin{align}
  \sum_{i=0}^{n-1}d_i\cdot B^i
  &=
  \sum_{i=0}^{n-1}z_i\cdot B^i
  +\sum_{i=0}^{n-1}c_i\cdot B^i
  -\sum_{i=0}^{n-1}c_{i+1}\cdot B^{i+1}\\
  &=
  \sum_{i=0}^{n-1}z_i\cdot B^i
  +c_0-c_n\cdot B^n.
  \label{eq:radix-carry-telescoping}
\end{align}
Since $c_0=0$, rearrangement yields
\cref{eq:radix-carry-value-identity}.

Finally, \cref{thm:replicated-cost} gives the scan costs.
The scan places $E_i$ at the same physical slot as digit $i$.  Both
$F_{E_i}(0)$ and $F_{S_i}(c_i)$ are therefore slotwise evaluations, so no
post-scan routing is needed.  In the one-ciphertext symbol representation,
one abstract state rotation is one ciphertext rotation.
\end{proof}

If $e_i=\sigma(E_i)$ and $s_i=\sigma(S_i)$, the local evaluation can be
written explicitly as
\begin{equation}
  c_i=\operatorname{gen}(e_i),
  \qquad
  c_{i+1}
  =\operatorname{gen}(s_i)
   +\operatorname{prop}(s_i)\cdot c_i.
  \label{eq:radix-symbol-carry-evaluation}
\end{equation}
The first expression is linear in $e_i$ and $\overline{e_i}$.  The second
uses one additional slotwise ciphertext multiplication after the scan.  It
replaces the final predecessor rotation required by an inclusive-only
formulation.  Thus, after state classification, the exclusive symbolic
realization has scan multiplicative depth $m$ and carry-correction depth at
most $m+1$; exact scale and level accounting is backend dependent.

\subsection{Borrow, comparison, and conditional subtraction}
\label{sec:radix-borrow}

Let $x_i,y_i\in\{0,\ldots,B-1\}$ be canonical digits.  The outgoing borrow
from digit $i$ is the Boolean transition
\begin{equation}
  G_{x_i,y_i}(b)
  \defeq
  \begin{cases}
    1, & x_i-y_i-b<0,\\
    0, & x_i-y_i-b\geq0,
  \end{cases}
  \qquad b\in\{0,1\}.
  \label{eq:radix-borrow-transition}
\end{equation}
It is a kill state when $x_i>y_i$, a propagate state when $x_i=y_i$, and a
generate state when $x_i<y_i$.  Hence it is represented by the same monoid
and the same complex symbols as carry.

\begin{theorem}[Borrow and comparison]
\label{thm:radix-borrow-comparison}
Let
\begin{equation}
  X\defeq\sum_{i=0}^{n-1}x_i\cdot B^i,
  \qquad
  Y\defeq\sum_{i=0}^{n-1}y_i\cdot B^i,
  \label{eq:radix-integers-xy}
\end{equation}
set $b_0=0$, and let $E_i^{-}$ be the exclusive prefix of the borrow
transitions in \cref{eq:radix-borrow-transition}.  Define
\begin{equation}
  b_i=F_{E_i^{-}}(0),
  \qquad
  b_{i+1}=G_{x_i,y_i}(b_i),
  \qquad
  q_i=x_i-y_i-b_i+B\cdot b_{i+1}.
  \label{eq:radix-borrow-evaluation}
\end{equation}
Then $q_i\in\{0,\ldots,B-1\}$ and
\begin{equation}
  \sum_{i=0}^{n-1}q_i\cdot B^i
  =X-Y+b_n\cdot B^n.
  \label{eq:radix-borrow-value-identity}
\end{equation}
Moreover,
\begin{equation}
  b_n=1
  \quad\Longleftrightarrow\quad
  X<Y.
  \label{eq:radix-final-borrow-comparison}
\end{equation}
The borrow scan uses the same $m$ rotations as carry.
\end{theorem}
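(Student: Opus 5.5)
The plan mirrors the proof of \cref{thm:radix-exclusive-carry}. Reusing that argument, with carry replaced by borrow and addition by subtraction, gives the digit-range claim and the value identity. The comparison claim then follows from a size argument on the value identity. The rotation count is inherited from the scan.

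\textbf{Step 1 (borrows agree with sequential borrows).} Define the sequential borrows by $b_0=0$ and $b_{i+1}=G_{x_i,y_i}(b_i)$. Since $E_{i+1}^-=E_i^-\diamond S_i^-$, \cref{eq:model-transition-order} gives $F_{E_{i+1}^-}=G_{x_i,y_i}\circ F_{E_i^-}$. Induction from $F_{E_0^-}(0)=F_e(0)=0$ then shows $F_{E_i^-}(0)=b_i$ for every $i$. This uses the fact, noted before the statement, that $G_{x_i,y_i}$ is exactly one of $\mathsf K,\mathsf P,\mathsf G$. I will check that classification on the three cases. If $x_i>y_i$, then $x_i-y_i-b\geq 0$ for both $b\in\{0,1\}$, so no borrow is produced (kill). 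If $x_i=y_i$, the outgoing borrow equals $b$ (propagate). If $x_i<y_i$, a borrow is always produced (generate). In particular every $b_i\in\{0,1\}$.

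\textbf{Step 2 (digit range and value identity).} From $x_i,y_i\in\{0,\ldots,B-1\}$ and $b_i\in\{0,1\}$ we get $-B\leq x_i-y_i-b_i\leq B-1$. By definition, $b_{i+1}=1$ exactly when this quantity is negative. Hence $q_i=x_i-y_i-b_i+B\cdot b_{i+1}$ lies in $\{0,\ldots,B-1\}$. Multiplying by $B^i$ and summing makes the borrow terms telescope:
\[
\sum_{i=0}^{n-1} q_i B^i = X-Y-b_0+b_n B^n .
\]
With $b_0=0$ this is \cref{eq:radix-borrow-value-identity}.

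\textbf{Step 3 (comparison).} By Step 2, the left-hand side of the value identity lies in $[0,B^n-1]$. We also have $X,Y\in[0,B^n-1]$, so $X-Y\in(-B^n,B^n)$. If $b_n=0$, then $X-Y$ equals the left-hand side, so $X-Y\geq 0$. If $b_n=1$, then $X-Y=\text{LHS}-B^n\leq -1$. These two cases give \cref{eq:radix-final-borrow-comparison}.

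\textbf{Step 4 (rotation count).} The borrow states live in the same monoid and use the same symbol encoding. The exclusive scan of \cref{thm:replicated-cost} therefore applies verbatim with $m$ rotations. The subsequent evaluations in \cref{eq:radix-borrow-evaluation} are slotwise, as in \cref{thm:radix-exclusive-carry}.

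The only mildly delicate points are the boundary case $x_i=y_i$ with $b_i=1$, which gives $q_i=B-1$, and the strictness in Step 3. I expect neither to be an obstacle.
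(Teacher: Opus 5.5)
Your proposal is correct and follows essentially the same route as the paper: the sequential-borrow induction reused from the carry theorem, the case split on whether $x_i-y_i-b_i$ is negative for the digit range, the telescoping value identity, and the bound $\sum_i q_i B^i\in[0,B^n-1]$ for the comparison, with the rotation count inherited from the scan. Your explicit check that $G_{x_i,y_i}$ is a kill, propagate, or generate state spells out a fact the paper states just before the theorem rather than inside its proof.
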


\begin{proof}
The proof that $b_i=F_{E_i^{-}}(0)$ equals sequential borrow propagation is
identical to the induction in
\cref{eq:radix-carry-prefix-recurrence,eq:radix-carry-induction}, with
$G_{x_i,y_i}$ in place of $F_{S_i}$.

If $b_{i+1}=0$, then $x_i-y_i-b_i\geq0$, and this quantity is at most
$B-1$; hence $q_i=x_i-y_i-b_i\in\{0,\ldots,B-1\}$.  If
$b_{i+1}=1$, then
\begin{equation}
  -B\leq x_i-y_i-b_i\leq-1,
  \label{eq:radix-negative-difference-range}
\end{equation}
so adding $B$ again yields $q_i\in\{0,\ldots,B-1\}$.

Multiplying \cref{eq:radix-borrow-evaluation} by $B^i$ and summing telescopes:
\begin{align}
  \sum_{i=0}^{n-1}q_i\cdot B^i
  &=X-Y
    -\sum_{i=0}^{n-1}b_i\cdot B^i
    +\sum_{i=0}^{n-1}b_{i+1}\cdot B^{i+1}\\
  &=X-Y-b_0+b_n\cdot B^n,
  \label{eq:radix-borrow-telescoping}
\end{align}
which gives \cref{eq:radix-borrow-value-identity} because $b_0=0$.
The left-hand side lies in $[0,B^n-1]$.  If $b_n=0$, then
$X-Y\geq0$.  If $b_n=1$, then
\begin{equation}
  X-Y
  =\sum_{i=0}^{n-1}q_i\cdot B^i-B^n
  \leq-1.
  \label{eq:radix-borrow-sign}
\end{equation}
This proves \cref{eq:radix-final-borrow-comparison}.  The routing cost is the
same because the transition monoid and scan topology are unchanged.
\end{proof}

Comparison needs only the final borrow at the most significant digit.  A
conditional operation on every digit instead needs a replicated copy of this
bit.  The scan can provide it without another rotation by retaining the final
aggregate update that was omitted from the $2\cdot m-1$ work count.

\begin{proposition}[Conditional subtraction without additional routing]
\label{prop:radix-conditional-subtraction}
Let $M$ be an $n$-digit positive integer, let $0\leq X<2\cdot M$, and compute
the borrow digits $q_i$ of $X-M$ as in
\cref{thm:radix-borrow-comparison}.  If the final complete aggregate
$A^{(m)}$ is retained, its transition is replicated at every slot and
therefore yields the final borrow $b_n$ at every digit.  The slotwise
selection
\begin{equation}
  r_i
  \defeq
  b_n\cdot x_i+(1-b_n)\cdot q_i
  \label{eq:radix-conditional-selection}
\end{equation}
encodes
\begin{equation}
  R
  =
  \begin{cases}
    X, & X<M,\\
    X-M, & X\geq M,
  \end{cases}
  \qquad 0\leq R<M.
  \label{eq:radix-conditional-subtraction-result}
\end{equation}
It uses $m$ rotations, monoid depth $m$, and $2\cdot m$ packed monoid
compositions.
\end{proposition}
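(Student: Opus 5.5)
The proof splits into four parts: the arithmetic of the selection, a replicated final borrow, the range bound, and the cost count.

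\emph{Arithmetic.} I first invoke \cref{thm:radix-borrow-comparison} with $Y=M$. It gives canonical digits $q_i\in\{0,\ldots,B-1\}$ satisfying
\[
  \sum_i q_i\cdot B^i = X-M+b_n\cdot B^n,
\]
together with $b_n=1$ if and only if $X<M$. Since $b_n\in\{0,1\}$, the selection in \cref{eq:radix-conditional-selection} returns $x_i$ at every digit when $b_n=1$, so it encodes $X$. When $b_n=0$ it returns $q_i$ at every digit, so it encodes $X-M$. The range claim then follows from the case split. If $X<M$, then $R=X<M$. If $X\geq M$, then $0\leq X-M<2M-M=M$ by the hypothesis $X<2\cdot M$.

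\emph{Replicated final borrow.} This is the step that needs care, because \cref{eq:radix-conditional-selection} uses $b_n$ at every digit position, not only at digit $n-1$. I would extend \cref{thm:replicated-exclusive-correctness} by performing the aggregate update at level $m-1$ as well. The induction step used there is valid for every $d<m$, so it yields
\[
  A_i^{(m)}=[\beta_m(i){:}\beta_m(i)+2^m-1]=[0{:}n-1]
\]
at every logical position $i$, hence at every physical slot. Evaluating the replicated transition at incoming borrow $0$ gives $F_{[0:n-1]}(0)=b_n$ at every slot. This uses the same induction as in \cref{eq:radix-carry-prefix-recurrence} with $E_n=E_{n-1}\diamond S_{n-1}$. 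It is computed slotwise, for example as $\operatorname{gen}$ of the symbol, so no rotation is needed. The selection itself is public-free slotwise arithmetic on already available values and introduces no routing.

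\emph{Cost.} No new rotation is introduced, so $R=m$ is unchanged from \cref{thm:replicated-cost}. The work becomes $T=2\cdot(m-1)+1+1=2\cdot m$, the extra term being the final aggregate gate. The depth stays $m$. At level $m-1$ the aggregate gate and the prefix gate both read only level-$(m-1)$ values, so they run in parallel and no path crosses two monoid gates at one level. The subsequent slotwise decoding and selection involve no monoid gates.
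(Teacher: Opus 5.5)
Your proof is correct and follows the paper's own argument. You case-split on $b_n$ using the borrow--comparison theorem with $Y=M$, bound the range via $X<2M$, and retain the level-$(m-1)$ aggregate gate, which runs in parallel with the prefix gate, so that $A^{(m)}=[0{:}n-1]$ is replicated and $F_{A^{(m)}}(0)=b_n$ is available slotwise; this raises $T$ to $2m$ without changing $R$ or $D$, and you only spell out the induction and depth bookkeeping in a little more detail than the paper.
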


\begin{proof}
By \cref{eq:radix-final-borrow-comparison}, $b_n=1$ exactly when $X<M$.
In that case \cref{eq:radix-conditional-selection} selects the original digit
$x_i$.  Otherwise it selects $q_i$, and
\cref{eq:radix-borrow-value-identity} with $b_n=0$ shows that these digits
encode $X-M$.  The assumption $X<2\cdot M$ places either selected value in
$[0,M-1]$.

At every nonfinal level, the exclusive scan already computes both the next
aggregate and the next prefix.  At the final level, computing the aggregate
as well as the prefix adds one packed monoid composition in parallel.  It
therefore changes $T$ from $2\cdot m-1$ to $2\cdot m$ without changing
monoid depth or rotation count.  By the aggregate invariant,
$A^{(m)}$ is the complete transition replicated throughout the cyclic
domain, so $F_{A^{(m)}}(0)=b_n$ is available at every digit without a
broadcast rotation.
\end{proof}

\subsection{Fully occupied digit-major batching}
\label{sec:radix-segmented}

A \CKKS ciphertext normally carries several independent integers.  Let each
integer have $n=2^m$ digit positions and suppose that the ciphertext's active
cyclic domain contains exactly $g\cdot n$ slots.  Recall the digit-major
layout
\begin{equation}
  \Phi(r,i)
  =r+g\cdot\rev_m(i)
  \pmod{g\cdot n},
  \qquad
  0\leq r<g,
  \quad
  0\leq i<n.
  \label{eq:radix-segmented-layout}
\end{equation}
The object index is the residue modulo $g$, while bit reversal is applied to
the digit index.

\begin{theorem}[Segmented replicated carry]
\label{thm:radix-segmented-carry}
Under the layout in \cref{eq:radix-segmented-layout}, all $g$ independent
carry or borrow scans execute level $d$ with the single physical rotation
\begin{equation}
  g\cdot2^{m-d-1}.
  \label{eq:radix-segmented-rotation}
\end{equation}
Consequently, batching $g$ integers requires exactly $m$ scan rotations in
total, not $g\cdot m$.  This rotation count is optimal for any circuit that
produces an output depending on all $n$ digits of one packed integer.
\end{theorem}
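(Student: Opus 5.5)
Write $N=g\cdot n$ for the size of the cyclic domain $\Z_N$. The plan has three parts: reduce the segmented layout to $g$ copies of the single-integer bit-reversed scan, show that one physical rotation by $g\cdot s_d$ acts as $\rho_{s_d}$ on every copy, and then transfer the lower bound of \cref{thm:rotation-lower-bound} to the larger domain.

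\textbf{Decomposition.} Every physical slot can be written uniquely as $p=r+g\cdot w$ with $0\le r<g$ and $w\in\Z_n$. The map $(r,w)\mapsto r+g\cdot w$ is a bijection $\{0,\ldots,g-1\}\times\Z_n\to\Z_N$, and it sends $(r,\rev_m(i))$ to $\Phi(r,i)$. For each object $r$ we then have a \emph{fiber} $\{r+g\cdot w:w\in\Z_n\}$. In fiber coordinates, digit $i$ of object $r$ sits at $w=\rev_m(i)$, which is exactly the layout $L$ of \cref{sec:model-layout}.

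\textbf{Rotations act fiberwise.} A rotation by $g\cdot s_d$ sends $r+g\cdot w$ to $r+g\cdot w-g\cdot s_d\equiv r+g\cdot(w-s_d)\pmod{N}$. The reduction is consistent because $g\cdot(w-s_d)$ taken modulo $g\cdot n$ equals $g$ times $(w-s_d)$ taken modulo $n$. So $\rho_{g s_d}$ preserves the residue $r$ and acts on each fiber as $\rho_{s_d}$ on $\Z_n$. The masks extend by $\mu_d[r+g\cdot\rev_m(i)]\defeq i_d$, and selection and $\operatorname{Comp}$ are slotwise, so every operation of \cref{fig:replicated-exclusive-scan} restricts to each fiber. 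Applying \cref{lem:replicated-sibling-copy} and \cref{thm:replicated-exclusive-correctness} fiber by fiber then proves correctness for all $g$ objects simultaneously. The count is one physical rotation per level, hence $m$ in total. The same argument applies verbatim to borrow, which uses the same monoid (\cref{thm:radix-borrow-comparison}), and to the carry correction of \cref{thm:radix-exclusive-carry}, which is slotwise.

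\textbf{Optimality.} The lower bound is the step needing the most care, because the domain now has $N=g\cdot n$ slots rather than $2^m$, so \cref{thm:rotation-lower-bound} does not apply verbatim. I would rerun \cref{lem:subset-displacement} on $\Z_N$ instead. Fix an output slot depending on all $n$ digits of one integer, such as the complete carry-out derived locally from the last exclusive state, as in the proof of \cref{thm:key-product-bound}. Those $n$ digits occupy $n$ distinct physical slots. Each slot that reaches the output must differ from it by a subset sum of the invoked offsets modulo $N$, so the output can be reached from at most $2^R$ slots. Hence $2^R\ge n=2^m$ and $R\ge m$. This argument uses no structure of $N$, so it goes through unchanged.
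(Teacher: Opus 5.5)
Your proposal is correct and follows essentially the same route as the paper. The paper likewise uses the fiberwise identity $p-g\cdot s_d\equiv r+g\cdot(\rev_m(i)-s_d)\pmod{g\cdot n}$ to reduce each object to the unsegmented rotation $\rho_{s_d}$, applies \cref{lem:replicated-sibling-copy} per object, and obtains $R\geq m$ by rerunning the subset-displacement count on the larger ambient domain to get $2^R\geq n$. Your version additionally makes explicit two points the paper leaves implicit: the extension of the masks $\mu_d$ to the interleaved layout, and the fact that \cref{lem:subset-displacement} uses no structure of the modulus $g\cdot n$.
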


\begin{proof}
Let $s_d=2^{m-d-1}$ and consider physical slot
$p=\Phi(r,i)$.  Under the rotation convention in
\cref{eq:model-rotation-convention}, a rotation by $g\cdot s_d$ reads from
\begin{align}
  p-g\cdot s_d
  &\equiv
  r+g\cdot\bigl(\rev_m(i)-s_d\bigr)
  \pmod{g\cdot n}.
  \label{eq:radix-segmented-source}
\end{align}
This source address has the same residue $r$ modulo $g$ and therefore belongs
to the same packed integer.  After division of the non-residue part by $g$,
the source digit address is exactly the unsegmented rotation by $s_d$ in
$\Z_n$.  The sibling-copy lemma
\cref{lem:replicated-sibling-copy} consequently supplies a valid aggregate
of the sibling child for object $r$.  Since the same argument holds for every
$r$, one physical rotation executes the level for all objects at once.

There are $m$ levels, which gives the upper bound.  For the lower bound, fix
one output of one object that depends on all its $n=2^m$ input digits.  With
$R$ rotation invocations, dependency paths expose at most $2^R$ source
displacements, independently of the larger ambient cyclic domain.  Thus
$2^R\geq n$ and $R\geq m$.
\end{proof}

The theorem requires a complete cyclic domain, or equivalent padding by
dummy objects.  For an object-major layout or a partially filled domain, a
global rotation can wrap into another object or an inactive region; the
single-rotation sibling argument then does not apply without an additional
packing theorem.

\subsection{Non-power-of-two digit lengths}
\label{sec:radix-padding}

The power-of-two assumption is removed by padding the transition vector, not
by changing the scan recurrence.  Let $\ell\geq1$ be the number of active
digits and set
\begin{equation}
  m_\ell\defeq\left\lceil\log_2\ell\right\rceil,
  \qquad
  n_\ell\defeq2^{m_\ell}.
  \label{eq:radix-padded-length}
\end{equation}
For positions $i\in\{\ell,\ldots,n_\ell-1\}$, place the transition identity
$e=\mathsf P$, represented by the symbol $1/2$.

\begin{corollary}[Optimal identity padding]
\label{cor:radix-padding}
The exclusive prefixes of the first $\ell$ states are unchanged by identity
padding to $n_\ell$.  They can be computed with
\begin{equation}
  \left\lceil\log_2\ell\right\rceil
  \label{eq:radix-padding-rotations}
\end{equation}
rotations, and no circuit producing one output depending on all $\ell$
active states can use fewer rotations in the packed-rotation model.  Moreover,
  \begin{equation}
    \ell\leq n_\ell<2\cdot\ell,
    \label{eq:radix-padding-utilization}
  \end{equation}
  and equality on the left holds exactly when $\ell$ is a power of two.
\end{corollary}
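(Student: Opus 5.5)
The corollary has three parts: invariance of the active prefixes, matching rotation bounds, and the utilization inequality. I would prove them in that order.

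\emph{Unchanged prefixes.} Let $S_0,\ldots,S_{n_\ell-1}$ be the padded sequence, with $S_i=e$ for $i\geq\ell$. For every $i<\ell$, the exclusive prefix $S_0\diamond\cdots\diamond S_{i-1}$ involves only indices below $i\leq\ell-1$, so it touches no padded entry and equals the unpadded prefix. For $i\geq\ell$, the identity law of the monoid gives $E_i=E_\ell$; this is not needed for the statement but explains why padding is harmless. Note that $m_\ell\geq2$ is assumed in \cref{thm:replicated-cost}. For $\ell\leq2$, i.e.\ $m_\ell\in\{0,1\}$, I would check directly that the recurrence of \cref{fig:replicated-exclusive-scan} still works with $m_\ell$ levels: the correctness proof in \cref{thm:replicated-exclusive-correctness} never used $m\geq2$, and $m_\ell=0$ needs no rotation at all.

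\emph{Upper bound on rotations.} Run the replicated scan on the padded vector of length $n_\ell=2^{m_\ell}$ in bit-reversed layout. By \cref{thm:replicated-exclusive-correctness,thm:replicated-cost}, it uses exactly $m_\ell=\lceil\log_2\ell\rceil$ rotations.

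\emph{Lower bound.} Take any packed circuit on its cyclic domain of size $N$, and fix one output depending on all $\ell$ active states. These states sit in $\ell$ distinct input slots. By \cref{lem:subset-displacement}, with $R$ rotation calls at most $2^R$ distinct source slots can reach that output. Note that this lemma's proof works for any modulus $N$, not only $n$. Hence $2^R\geq\ell$, so $R\geq\lceil\log_2\ell\rceil$ because $R$ is an integer. This is the only step needing care: I must stress that the bound is domain-size independent, exactly as in the proof of \cref{thm:radix-segmented-carry}. That way, neither padding nor a larger ambient ring can circumvent it.

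\emph{Utilization.} Since $2^{m_\ell-1}<\ell\leq2^{m_\ell}$ when $\ell\geq2$ (the case $\ell=1$ gives $n_\ell=1$ directly), we get $\ell\leq n_\ell<2\ell$. Equality $n_\ell=\ell$ holds iff $\ell=2^{m_\ell}$, i.e.\ iff $\ell$ is a power of two.
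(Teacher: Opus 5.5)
Your proof is correct and follows the paper's argument essentially step for step. Prefixes ending below $\ell$ never touch padded entries, the replicated scan on $n_\ell=2^{m_\ell}$ slots gives the upper bound, the subset-displacement lemma gives $2^R\geq\ell$, and $2^{m_\ell-1}<\ell\leq 2^{m_\ell}$ gives the utilization claim. Your extra checks are small refinements that the paper's proof leaves implicit: the degenerate cases $m_\ell\in\{0,1\}$, which fall outside the standing assumption $m\geq 2$ of the cost theorem, and the domain-size independence of the lower bound.
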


\begin{proof}
Appending identity states does not change any ordered prefix whose endpoint
is below $\ell$.  The length-$n_\ell$ scan therefore gives the required
outputs with $m_\ell$ rotations.  Conversely, if one output depends on all
$\ell$ active inputs, the subset-displacement argument yields
$2^R\geq\ell$, and hence
$R\geq\lceil\log_2\ell\rceil=m_\ell$.  Finally, $n_\ell$ is by definition not smaller than $\ell$.  If
$\ell$ is a power of two, then $n_\ell=\ell$.  Otherwise,
$2^{m_\ell-1}<\ell<2^{m_\ell}=n_\ell$, and therefore
$n_\ell<2\cdot\ell$.
\end{proof}

Identity padding is a state-level operation.  Padding provisional digits by
zero instead would create kill states rather than transition identities.  It
would not affect prefixes ending before the padding, but it would destroy a
final aggregate intended to propagate the true carry or borrow through the
padded suffix.  The implementation must therefore overwrite padded state
slots with the public identity symbol after classification.

\subsection{Concrete-operation boundary}
\label{sec:radix-operation-boundary}

The preceding results fix the exact routing and discrete arithmetic semantics.
For the one-ciphertext symbol encoding, the scan itself requires:
\begin{itemize}[leftmargin=*,itemsep=0.25em,topsep=0.25em]
  \item $m$ ciphertext rotations of the aggregate state;
  \item $2\cdot m-1$ evaluations of
  \cref{eq:radix-symbol-composition}, each containing one ciphertext
  multiplication and one conjugation;
  \item two persistent state ciphertexts, for the replicated aggregate and
  the exclusive prefix;
  \item one local state-on-bit multiplication to obtain all outgoing carries
  or borrows from the exclusive states.
\end{itemize}
Functional bootstrapping for state classification, scale alignment between
updated and unchanged branches, rescaling, modulus consumption, key-switch
noise, and the final decoding margin are not abstract monoid costs.  They are
accounted for separately in the next section.  In particular, the exact
statement proved here is that, whenever these approximate operations decode
to the prescribed symbols and bits, the resulting radix digits and comparison
bits are mathematically exact.

\section{CKKS Realization and Evaluation}
\label{sec:evaluation}

We evaluate whether the exact reduction in cyclic rotations translates into practical advantages for CKKS carry circuits.  The experiments answer four questions: when the replicated construction overtakes direct transported-predecessor routing; which resources it exchanges for lower routing cost; whether the same savings survive interleaved carry and borrow; and whether the retained modulus levels change the outcome of an end-to-end encrypted pipeline.

\subsection{Implementation and methodology}
\label{sec:eval-methodology}

The implementation is written in Go using Lattigo~v6.1.1 and executed inside a pinned Docker image on a machine with an Intel(R) Xeon(R) CPU E5-2695 v3 at 2.30~GHz, 128~GB of memory, and Ubuntu~24.04.  Public masks are prepared outside timed regions.  Online timing excludes parameter construction, key generation, encryption, decryption, correctness checking, and report serialization.  Each kernel generates exactly the direct rotation keys used by that implementation.  The reports record rotations, encrypted symbol compositions, ciphertext and plaintext products, relinearizations, conjugations, rescales, consumed levels, serialized evaluation-key bytes, peak live ciphertexts, heap usage, and numerical precision.

We distinguish two layout contracts.  Natural recursive doubling is an unconstrained natural-layout reference: it accepts and returns natural order.  The primary comparison concerns three kernels that accept and return bit-reversed data: direct bit-reversed predecessor routing, the rotation-optimal replicated scan, and normalize--scan--restore.  The latter is an author-constructed conversion baseline rather than a single prior-work algorithm.  Comparisons among these three methods therefore solve the same layout-preserving problem.

For scan scaling, we use active lengths $n=2^m$ with $m\in\{5,6,7\}$.  For the end-to-end experiment, we evaluate a base-$8$ carry pipeline at $m=7$, followed by an encrypted multiplicative tail of depth five.  The campaign uses three deterministic seeds and two process-isolated repetitions per seed, for six paired samples.  Each pair evaluates the same canonical digits and downstream function.

\paragraph{Artifact availability.}
The implementation, raw JSON reports, CSV summaries, and the commands allowing the evaluation results to be reproduced are available at \url{https://github.com/nserser/optimal_rotation_fhe}.

\subsection{Kernel scaling and crossover}
\label{sec:eval-scaling}

\Cref{tab:eval-scaling} reports online latency for the three bit-reversed-layout strategies.  At $m=5$, the replicated construction becomes faster than direct transported-predecessor routing: 1.301~s versus 1.399~s.  Its advantage grows to $19.1\%$ at $m=6$ and $19.9\%$ at $m=7$.  This crossover matches the structural costs: direct routing performs
\[
R_{\mathrm{direct}}=\frac{m\cdot(m+1)}{2}
\]
rotations, whereas the replicated construction performs exactly $R_{\mathrm{rep}}=m$.  The replicated scan performs more encrypted monoid compositions, but the triangular routing cost eventually dominates.

\begin{table}[t]
\centering
\caption{Online scan latency under the common bit-reversed input/output contract.  The replicated construction crosses direct predecessor routing at $m=5$ and remains faster through $m=7$.}
\label{tab:eval-scaling}
\setlength{\tabcolsep}{7pt}
\renewcommand{\arraystretch}{1.18}
\begin{tabular}{crrr}
\toprule
$m$ & Replicated (s) & Direct predecessor (s) & Normalize--restore (s)\\
\midrule
5 & 1.301 & 1.399 & 1.036\\
6 & 1.298 & 1.603 & 1.146\\
7 & 1.394 & 1.741 & 1.140\\
\bottomrule
\end{tabular}
\end{table}

At $m=7$, direct routing uses $28$ rotations, while the replicated construction uses $7$.  The reduction also lowers evaluation-key storage from 908.6~MB to 272.6~MB and peak heap from approximately 2.30~GB to 830.5~MB.  Thus the routing theorem predicts not only fewer online automorphisms, but also substantially smaller key and working-memory footprints.

\subsection{Resource trade-off at \texorpdfstring{$m=7$}{m=7}}
\label{sec:eval-pareto}

Normalize--scan--restore remains faster as an isolated scan kernel.  At $m=7$, it takes 1.140~s, compared with 1.394~s for the replicated construction.  This latency advantage is purchased with $19$ rather than $7$ rotations, 454.3~MB rather than 272.6~MB of evaluation keys, approximately 1.272~GB rather than 830.5~MB peak heap, and $20$ rather than $14$ consumed backend levels.  The complete comparison appears in \Cref{tab:eval-m7-resources}.

\begin{table}[t]
\centering
\caption{Bit-reversed-layout resource comparison at $m=7$.  Normalize--scan--restore minimizes isolated latency; the replicated construction minimizes rotations, key storage, working memory, and consumed levels.}
\label{tab:eval-m7-resources}
\setlength{\tabcolsep}{5.2pt}
\renewcommand{\arraystretch}{1.18}
\begin{tabular}{lrrrrr}
\toprule
Kernel & Latency (s) & Rot. & Eval. keys (MB) & Peak heap (MB) & Levels\\
\midrule
Replicated & 1.394 & 7 & 272.6 & 830.5 & 14\\
Direct predecessor & 1.741 & 28 & 908.6 & 2300.0 & 14\\
Normalize--restore & 1.140 & 19 & 454.3 & 1272.0 & 20\\
\bottomrule
\end{tabular}
\end{table}

The result is therefore multi-objective rather than a universal latency claim.  The replicated scan minimizes routing, key storage, working memory, and modulus consumption; normalize--scan--restore minimizes isolated scan latency.  The end-to-end experiment below shows why the level difference can dominate the isolated-kernel difference.

\subsection{Rotation-key budget}
\label{sec:eval-key-budget}

We also evaluate the time--key-memory frontier obtained by synthesizing required dyadic rotations from fewer directly keyed offsets.  For $(m,K)=(7,2)$, reducing the direct key set from seven keys to two lowers evaluation-key storage by $55.6\%$ and increases online latency by $53.4\%$.  The additional key-switch approximation error is negligible relative to the symbol decision margin.  The balanced construction executes $22$ online rotation calls, within one call of the counting lower bound $L(7,2)=21$.  Since this is a nondivisible instance, we report the implemented upper bound rather than claiming exact optimality.

\begin{table}[t]
\centering
\caption{Measured reduced-key trade-off at $m=7$, normalized to the full seven-key replicated implementation.}
\label{tab:eval-key-budget}
\setlength{\tabcolsep}{9pt}
\renewcommand{\arraystretch}{1.18}
\begin{tabular}{lrrrr}
\toprule
Configuration & $K$ & Rot. calls & Key storage & Latency\\
\midrule
Full-key replicated & 7 & 7 & 1.000 & 1.000\\
Balanced synthesis & 2 & 22 & 0.444 & 1.534\\
\bottomrule
\end{tabular}
\end{table}

\subsection{Packed carry and borrow}
\label{sec:eval-carry}

To verify that the routing savings survive batching and complete digit correction, we pack four independent objects with $m=5$ into 128 slots using
\[
\operatorname{slot}(r,i)=r+4\cdot\operatorname{rev}_5(i).
\]
The objects use distinct adversarial patterns, including all-propagate, a generate--propagate chain, alternating kill/generate, and maximal carry or borrow digits.  The packed scan still uses exactly five rotations and nine symbol compositions, independent of the number of interleaved objects.  All stage invariants and cross-object isolation checks pass, and carry/borrow correction requires zero post-scan rotations.

\begin{table}[t]
\centering
\caption{Four-object, base-$8$, fully interleaved carry and borrow at $m=5$.  Both executions use five scan rotations and zero post-scan rotations.}
\label{tab:eval-carry-borrow}
\setlength{\tabcolsep}{5.2pt}
\renewcommand{\arraystretch}{1.18}
\begin{tabular}{lrrrr}
\toprule
Operation & Digit max. error & Wrong-int. margin & Incoming error & Outgoing error\\
\midrule
Carry & $1.8012\cdot10^{-4}$ & 0.4998199 & $2.5978\cdot10^{-5}$ & $2.5762\cdot10^{-5}$\\
Borrow & $1.7961\cdot10^{-4}$ & 0.4998204 & $2.5905\cdot10^{-5}$ & $2.5690\cdot10^{-5}$\\
\bottomrule
\end{tabular}
\end{table}

The maximum imaginary leakage is below $1.44\cdot10^{-8}$.  These measurements support the claim that interleaving preserves both the routing count and cross-object isolation.

\subsection{End-to-end level-value experiment}
\label{sec:eval-end-to-end}

The principal systems experiment connects the scan to a depth-$5$ encrypted multiplicative tail.  We compare the replicated layout-preserving carry pipeline with normalize--scan--restore.  In all six process-isolated paired executions, both pipelines produce the same canonical base-$8$ digits and the same downstream function.  The replicated pipeline always completes without refreshing.  Normalize--scan--restore always reaches restoration without sufficient remaining levels and therefore performs exactly one bootstrap.

\begin{table}[t]
\centering
\caption{End-to-end latency over six process-isolated paired samples.  The bootstrap component is part of the normalize--scan--restore pipeline.}
\label{tab:eval-e2e-latency}
\setlength{\tabcolsep}{5.5pt}
\renewcommand{\arraystretch}{1.18}
\begin{tabular}{lrrrr}
\toprule
Measurement & Mean (s) & 95\% CI (s) & Median (s) & p95 (s)\\
\midrule
Replicated online & 1.634 & [1.459, 1.808] & 1.630 & 1.793\\
Normalize + bootstrap & 6.957 & [6.669, 7.245] & 6.938 & 7.294\\
Bootstrap component & 4.809 & [4.493, 5.126] & 4.808 & 5.219\\
Paired latency ratio & 4.306 & [3.695, 4.918] & 4.060 & 5.073\\
\bottomrule
\end{tabular}
\end{table}

The bootstrap component accounts for approximately
\[
\frac{4.809}{6.957}\approx69.1\%
\]
of the normalize pipeline's online latency.  Even after subtracting bootstrap time, the remaining normalize path takes approximately 2.148~s, or about $1.31\times$ the complete replicated pipeline.  Hence the end-to-end advantage is not solely an accounting consequence of including the bootstrap.

Bootstrap avoidance also improves numerical precision, as summarized in \Cref{tab:eval-e2e-precision}.  The worst canonical-digit error is $3.12\cdot10^{-9}$ for the replicated pipeline and $2.11\cdot10^{-6}$ for normalize--scan--restore with bootstrapping.  The worst downstream errors are $3.29\cdot10^{-10}$ and $5.39\cdot10^{-8}$, respectively.

\begin{table}[t]
\centering
\caption{Worst precision observed over the end-to-end campaign.}
\label{tab:eval-e2e-precision}
\setlength{\tabcolsep}{10pt}
\renewcommand{\arraystretch}{1.18}
\begin{tabular}{lrr}
\toprule
Output & Replicated & Normalize + bootstrap\\
\midrule
Canonical digits & $3.1184\cdot10^{-9}$ & $2.1064\cdot10^{-6}$\\
Depth-$5$ downstream output & $3.2893\cdot10^{-10}$ & $5.3872\cdot10^{-8}$\\
\bottomrule
\end{tabular}
\end{table}

\subsection{Evaluation summary and limitations}
\label{sec:eval-summary}

The experiments support three conclusions.  First, the exact reduction from $m\cdot(m+1)/2$ to $m$ rotations becomes an online-latency advantage over direct transported-predecessor routing from $m=5$ onward, while also reducing key material and working memory.  Second, normalize--scan--restore and replicated scanning occupy distinct Pareto points: the former minimizes isolated scan latency, whereas the latter minimizes routing, key storage, memory, and consumed levels.  Third, the level advantage changes the end-to-end outcome.  In the depth-$5$ carry pipeline, it removes one bootstrap and yields a mean paired speedup of $4.31\times$ with a $95\%$ confidence interval of $[3.69,4.92]$.

The carry and borrow experiments take encrypted carry-state symbols as their inputs and evaluate the layout-preserving scan-and-correction layer. This experimental contract matches the algorithmic interface analyzed in this paper and leaves the exact routing and correctness theorems unchanged.

\section{Related Work}
\label{sec:related-work}

Our results sit at the intersection of parallel-prefix computation, packed homomorphic routing, rotation-key management, and radix arithmetic over \CKKS.  The closest works typically optimize one of these dimensions in isolation.  We instead ask how an ordered, possibly noncommutative prefix can be evaluated while preserving a public bit-reversed SIMD representation, and we characterize the exact number and valuation structure of the required cyclic translations.

\subsection{Parallel-prefix circuits}

Parallel prefix computation has a long history in circuit and parallel-algorithm design.  Ladner and Fischer gave logarithmic-depth constructions for associative operators, and subsequent work classified prefix networks according to depth, size, and fanout~\cite{LadnerFischer80,Harris03}.  Snir established the classical size--depth inequality and studied the structure of depth-efficient prefix circuits~\cite{Snir86}.  These results charge individual operator nodes or communication edges.  That model is deliberately different from ours: one homomorphic rotation applies one common cyclic displacement to every packed slot, so many logical edges may share one charged routing operation, while a single logical predecessor relation may decompose into several cyclic diagonals after a public permutation.

The replicated construction uses the broad hypercube-prefix principle of maintaining an aggregate together with a local prefix.  Its homomorphic contribution is the physical realization: under bit reversal, semantic replication makes every copy of a sibling aggregate interchangeable, allowing one directed cyclic rotation to serve both child supports at a level.  The matching lower bound and the two-adic equality characterization concern this packed-rotation cost and do not follow from classical size--depth optimality.  Conversely, our result does not improve the gate count or fanout bounds of classical prefix networks.

\subsection{Packed homomorphic routing and linear transformations}

Packed lattice-based HE evaluates additions and multiplications slotwise, whereas slot movement is implemented by automorphisms and key switching.  CKKS introduced approximate SIMD arithmetic over complex slots~\cite{CheonKKS17}; modern bootstrapping and representation-conversion procedures devote substantial work to structured linear transformations and their rotation schedules~\cite{BossuatMTH21,Geelen24}.  These transformations are commonly optimized through diagonal decompositions, staged FFT-like factorizations, hoisting, or baby-step/giant-step scheduling.

Our target is narrower but nonlinear: the values being routed evolve through an ordered monoid operation, and the required output is every ordered prefix rather than a fixed linear permutation.  Normalize--scan--restore is therefore a valid conversion baseline, but it is not equivalent to the layout-preserving construction.  The latter keeps the bit-reversed representation throughout and proves that the representation itself imposes no unavoidable rotation overhead beyond the global lower bound of $m$.  The theorem excludes uncounted arbitrary linear maps or bootstrapping transforms; allowing such operations would change the routing model.

Recent work on CKKS bootstrapping also makes modulus consumption and rotation-key storage explicit optimization targets.  Bossuat et al. optimize full-RNS CKKS bootstrapping with non-sparse keys~\cite{BossuatMTH21}, while Yan et al. study a time--memory trade-off in which additional rotation-key material can increase throughput and reduce modulus consumption~\cite{YanZCSW26}.  These results reinforce that rotations, key storage, and levels are distinct resources.  Our contribution is complementary: we derive program-specific lower bounds and exact cases for the number of online rotation calls available from at most $K$ directly keyed offsets.

\subsection{Rotation-key selection and synthesis}

A CKKS evaluator can either store a key for each required automorphism or synthesize unsupported rotations from a smaller directly supported set.  Lee et al. reduce client-to-server rotation-key material through hierarchical key generation and distinguish transmitted, stored, and derived keys~\cite{LeeLKNo23}.  Other workload-oriented approaches select rotation keys according to the rotations exercised by an application.  Such methods address generic key distribution or empirical workload optimization.

Section~\ref{sec:key-frontier} instead studies the dependency pattern of bit-reversed ordered prefixes.  The coefficient-box argument lower-bounds online calls for every circuit in the stated model, even with arbitrary helper ciphertexts and local computation.  When $K\mid m$, the lower bound is attained by balanced blocks of consecutive two-adic scales, and equality fixes the valuation profile of the directly keyed offsets.  For nondivisible parameter pairs, we report the rigorous lower/upper sandwich and identify exactly the cases in which the two bounds coincide; we do not claim a closed form when they differ.

\subsection{Radix CKKS and encrypted integer arithmetic}

Cha, Park, and Lee develop radix-based approximate HE for large integers and give a lightweight logarithmic carry procedure based on three transition symbols and a complex-valued composition law~\cite{ChaPL26}.  Their construction establishes that exact carry can support unique radix representations, comparison, and modular arithmetic in CKKS.  We reuse this optimized state algebra and do not claim the first logarithmic-depth encrypted carry circuit.

The distinction is the representation and routing contract.  We compute the exclusive carry prefix directly in bit-reversed slots, preserve that layout at the output, and avoid a final logical-predecessor shift.  The generic theorem applies to arbitrary associative, possibly noncommutative monoids; carry and borrow are the principal cryptographic instantiations.  Our evaluation takes encrypted transition symbols as input and measures the layout-preserving scan-and-correction layer defined by the paper's algorithmic interface.

Alternative FHE representations may change which operations are native.  For example, matrix-oriented FHE makes matrix arithmetic a first-class encrypted operation rather than realizing it through ordinary SIMD slot rotations~\cite{GentryLee25}.  Such schemes illustrate a broader design principle: changing the plaintext representation can eliminate one class of routing costs while introducing a different algebra and implementation stack.  They solve a different problem from preserving ordered prefixes in the standard cyclic SIMD interface considered here.

\section{Conclusion}
\label{sec:conclusion}

This work establishes that preserving a bit-reversed SIMD representation does not impose an unavoidable rotation overhead for ordered prefix computation.  For $n=2^m$ inputs from an associative, possibly noncommutative monoid, the replicated construction computes all inclusive or exclusive prefixes in the same bit-reversed layout using
\begin{equation}
  D=m,
  \qquad
  R=m,
  \qquad
  T\leq 2\cdot m-1.
\end{equation}
The depth and rotation counts are exact in the stated packed-rotation model:
\begin{equation}
  D^\star(m)=R^\star(m)=m.
\end{equation}
The constructive reason is semantic replication.  Every slot of an aligned logical block stores the same complete block aggregate, so a destination need not receive the copy held by one exact logical partner.  Under bit reversal, one cyclic rotation per level supplies every destination with an interchangeable copy from the sibling block while public masks preserve the required noncommutative operand order.

Optimality has additional structure.  Every $m$-rotation equality case uses one offset at each $2$-adic valuation, and limiting the evaluator to $K$ directly keyed offsets yields a coefficient-box lower bound on online rotation calls.  When $K\mid m$, the resulting frontier is exact:
\begin{equation}
  R^\star(m,K)
  =K\cdot\bigl(2^{m/K}-1\bigr),
\end{equation}
with equality fixing the valuation profile of the directly keyed offsets.  For nondivisible parameters, the paper gives rigorous lower and upper bounds and identifies the cases in which they coincide, without claiming a closed formula for the remaining gaps.

The radix-\CKKS instantiation shows why the exclusive formulation matters.  Once encrypted kill, propagate, and generate symbols are available, the scan returns the transition preceding each digit.  Carry-in, carry-out, and digit correction are then evaluated slotwise, avoiding both natural-order restoration and a final logical-predecessor shift.  The same mechanism gives borrow propagation, comparison, conditional subtraction, and fully occupied digit-major batching without multiplying the number of scan rotations.

The measurements confirm that rotation optimality represents a genuine multi-resource trade-off rather than a universal isolated-latency optimum.  Among bit-reversed-layout-preserving kernels, the replicated scan reduces the triangular direct-routing cost to $m$ rotations and, from the tested medium depths onward, improves latency, evaluation-key storage, and working memory over direct predecessor routing.  Normalize--scan--restore remains faster for the isolated scan in the tested range, but consumes more rotations, evaluation-key material, memory, and modulus levels.  In the evaluated downstream pipeline, retaining those levels avoids the bootstrap required by normalization and yields the stronger end-to-end result.  Natural recursive doubling remains the appropriate unconstrained reference when the surrounding computation is free to use natural-order input and output; it does not satisfy the same layout-preservation contract.

The evaluation targets the layout-preserving scan-and-correction layer and takes encrypted carry-state symbols as inputs. This contract is the same interface used by the radix instantiation and preserves the exact scope of the combinatorial rotation and correctness results.

Natural extensions of the theory concern three directions. First, nondivisible direct-key budgets for which the coefficient-box lower bound is strictly below the balanced dyadic construction define a narrower exact-frontier problem. Second, the semantic sibling argument suggests a broader class of public hierarchical layouts beyond bit reversal. Third, after rotation count is minimized, the remaining abstract trade-off involves packed monoid work, live encrypted state, key memory, and approximation error. These directions build on the exact bit-reversed results established here.

\paragraph{LLM usage.}
Large language models were used only to polish the writing, improve presentation quality, and assist with the implementation of the proposed construction. They were not used to generate any scientific contribution of this paper, including the definitions, constructions, theorems, proofs, experimental claims, or interpretation of the results.

\appendix
\section{Additional Evaluation Details}
\label{app:evaluation}

\subsection{Measurement contract and reproducibility}
\label{app:eval-methodology}

The implementation uses Go and Lattigo~v6.1.1 in a pinned Docker image on an Intel(R) Xeon(R) CPU E5-2695 v3 at 2.30~GHz with 128~GB of memory and Ubuntu~24.04.  A warm-up execution precedes timed kernel samples.  Online measurements exclude key generation, public-mask preparation, encryption, decryption, correctness checking, and JSON serialization.  Evaluation-key sizes are obtained from serialized key material and cross-checked against the corresponding files.

All kernels use the same CKKS parameter literal, default scale, K/P/G symbol encoding, ordered composition primitive, and deterministic logical inputs.  Natural recursive doubling starts and ends in natural order.  The other kernels start and end in bit-reversed order.  Precision is computed against an exact sequential ordered-prefix reference.

The end-to-end campaign reports sample means, standard deviations, medians, minima, p95 values, and two-sided $95\%$ confidence intervals.  The latency-ratio interval is computed from the six paired per-sample ratios.  Each report records its seed and explicit operation counters.  The artifact includes deterministic input generators, schema-validated JSON reports, exact-reference tests, stage-invariant checks, counter-conservation checks, fuzz targets, race tests, repeated key-generation tests, memory-growth checks, and soak scripts.

\subsection{Detailed kernel results}

\begin{table}[t]
\centering
\caption{Small-instance comparison at $m=4$.  Natural recursive doubling uses a different input/output layout and is included only as an unconstrained reference.}
\label{tab:app-m4-comparison}
\setlength{\tabcolsep}{4.8pt}
\renewcommand{\arraystretch}{1.16}
\begin{tabular}{lrrrrr}
\toprule
Kernel & Latency (s) & Rot. & Comp. & Eval. keys (MB) & Peak heap (MB)\\
\midrule
Natural recursive doubling & 0.289 & 4 & 4 & 82.6 & 246.7\\
Direct bit-reversed predecessor & 0.450 & 10 & 4 & 165.2 & 449.1\\
Rotation-optimal replicated & 0.559 & 4 & 7 & 82.6 & 265.8\\
Normalize--scan--restore & 0.385 & 12 & 4 & 123.9 & 376.1\\
\bottomrule
\end{tabular}
\end{table}

\begin{table}[t]
\centering
\caption{Detailed bit-reversed-layout scaling.  Dashes denote values absent from the consolidated campaign summary.}
\label{tab:app-scaling-full}
\setlength{\tabcolsep}{4.3pt}
\renewcommand{\arraystretch}{1.14}
\begin{tabular}{clrrrrr}
\toprule
$m$ & Kernel & Latency (s) & Rot. & Levels & Keys (MB) & Heap (MB)\\
\midrule
5 & Replicated & 1.301 & 5 & 10 & 212.0 & 653\\
5 & Direct predecessor & 1.399 & 15 & 10 & 514.9 & 1330\\
5 & Normalize--restore & 1.036 & 13 & 14 & 333.1 & 937\\
\midrule
6 & Replicated & 1.298 & 6 & 12 & 242.3 & 749\\
6 & Direct predecessor & 1.603 & 21 & 12 & 696.6 & 1850\\
6 & Normalize--restore & 1.146 & 18 & 18 & -- & --\\
\midrule
7 & Replicated & 1.394 & 7 & 14 & 272.6 & 830.5\\
7 & Direct predecessor & 1.741 & 28 & 14 & 908.6 & 2300\\
7 & Normalize--restore & 1.140 & 19 & 20 & 454.3 & 1272\\
\bottomrule
\end{tabular}
\end{table}

The replicated scan performs $m$ rotations and $2\cdot m-1$ symbol compositions.  Direct predecessor routing performs $m\cdot(m+1)/2$ rotations and $m$ compositions.  Normalize--scan--restore performs $m$ scan rotations plus the layout-conversion network.  The crossover therefore depends on the relative backend costs of rotations and encrypted compositions.

\subsection{End-to-end campaign details}

The level-value campaign uses base $8$, $m=7$, downstream multiplicative depth $5$, pattern \texttt{asymmetric}, seeds $\{1,2,3\}$, and two process-isolated repetitions per seed.  All six reports satisfy the comparison contract: the pipelines produce identical canonical digits and downstream functions; every replicated sample avoids bootstrapping; every normalize--scan--restore sample performs exactly one bootstrap.

\begin{table}[t]
\centering
\caption{Complete latency statistics over six process-isolated samples.}
\label{tab:app-e2e-latency}
\setlength{\tabcolsep}{4.2pt}
\renewcommand{\arraystretch}{1.14}
\begin{tabular}{lrrrrrr}
\toprule
Measurement & Mean & 95\% CI & Median & Min. & p95 & Std. dev.\\
\midrule
Replicated online (s) & 1.634 & [1.459,1.808] & 1.630 & 1.426 & 1.793 & 0.166\\
Normalize + bootstrap (s) & 6.957 & [6.669,7.245] & 6.938 & 6.499 & 7.294 & 0.274\\
Bootstrap component (s) & 4.809 & [4.493,5.126] & 4.808 & 4.360 & 5.219 & 0.302\\
Paired ratio & 4.306 & [3.695,4.918] & 4.060 & 3.624 & 5.073 & 0.583\\
\bottomrule
\end{tabular}
\end{table}

\clearpage

\printbibliography
\end{document}